\providecommand{\U}[1]{\protect\rule{.1in}{.1in}}
\newtheorem{theorem}{Theorem}
\newtheorem{definition}[theorem]{Definition}
\newtheorem{proposition}[theorem]{Proposition}
\newtheorem{remark}[theorem]{Remark}
\newenvironment{proof}[1][Proof]{\noindent\textbf{#1.} }{\ \rule{0.5em}{0.5em}}
\begin{document}

\title{\textbf{A Robust Generalization of the Rao Test}}
\author{Ayanendranath Basu$^{1}$, Abhik Ghosh$^{1}$, Nirian Martin$^{2}$%
\thanks{Corresponding author; Email: nirian@estad.ucm.es} and Leandro
Pardo$^{2}\medskip$\\$^{1}${\small Indian Statistical Institute, Kolkata, India }\\$^{2}${\small Complutense University, Madrid, Spain }}
\date{}
\maketitle

\begin{abstract}
This paper presents new families of Rao-type test statistics based on the
minimum density power divergence estimators which provide robust
generalizations\ for testing simple and composite null hypotheses. The
asymptotic null distributions of the proposed tests are obtained and their
robustness properties are also theoretically studied. Numerical illustrations are provided to substantiate the theory developed. On the whole, the proposed tests are seen
to be excellent alternatives to the classical Rao test.

\end{abstract}

%\begin{center}
%\bigskipRao-type test statistics
%\end{center}
%\noindent \textbf{MSC}{\small : }
%\noindent \textbf{Keywords}{\small :}

%\textbf{MSC 2010: }62B10, 62F03, 62F35
\textbf{JEL classification:} C12, C13, C18

\textbf{Keywords:} Influence function, Minimum density power divergence
estimator, power and level influence function, Rao test, Rao-type test
statistics, Restricted minimum density power divergence estimator

\section{Introduction}

The systematic use of hypothesis testing, which originated with the
publication of Pearson's (1900) goodness-of-fit test paper, was theoretically
formalized by Neyman and Pearson (1928, 1933). In the first paper they
introduced the likelihood ratio (LR) test and in the second paper they
established a general principle for constructing optimal tests. Wilks (1938)
obtained the asymptotic distribution of the LR test. Later Wald (1943)
introduced a test procedure which is commonly known as the Wald test. Rao
(1948) introduced the Rao test (or the score test) as an alternative to the LR
and the Wald tests. Aitchison and Silvey (1958) and Silvey (1959) gave an
interpretation of the Rao test in terms of the Lagrange multiplier test.

The tests described in the previous paragraph are all based on the maximum
likelihood estimator (MLE). The non-robust nature of the procedures based on
the MLE has motivated several researchers to look for robust alternatives to
the LR, Wald and Rao tests. We are interested, in particular, in the robust,
minimum divergence branch of this approach, where the non-robust MLE is
replaced by a suitable robust minimum divergence estimator. In this line of
research we encounter, among other approaches, the Wald-type test statistics
based on minimum density power divergence estimators (MDPDEs) considered, for
instance, in Basu et al.~(2016) and Ghosh et al.~(2016).

This paper aims at providing a robust generalization of the Rao test. The
score test is a popular tool in statistics, and there is a significant body of
research that deals with the derivation and implementation of many different
score test statistics, the reinterpretation of other approaches to testing as
variants of the score method, and the properties of score tests in
non-standard situations. There are several books and survey articles available
on this topic, e.g. Bera and Ullah (1991), Breusch and Pagan (1980), Engle
(1984), Godfrey (1988), Godfrey and Tremayne (1988), Kramer and Sonnberger
(1986), Maddala (1995) and White (1984). A nice review of the literature
concerning the Rao test is available in Rao (2005).

As indicated, most of the papers and books published in relation to the Rao
test are based on MLEs. It is true that in case of the Rao test for the simple
null hypothesis no parameter estimation is necessary, but the Rao test in that
case is based on the likelihood score function associated with the MLE. Our
purpose in this paper is to present Rao-type test statistics for testing
simple and composite null hypothesis based on the MDPDE and the corresponding
estimating functions. This estimator was proposed by Basu et al.~(1998), and
appears to have significantly affected future research; it exhibits excellent
behavior in terms of combining high model efficiency with strong outlier
robustness. In this paper we will demonstrate that similar nice properties
carry over in the domain of parametric hypothesis testing.

The rest of the paper is organized as follows. In Section 2 we present the
MDPDE as well as the restricted minimum density power divergence estimator
(RMDPDE). The Rao-type tests for the simple null hypothesis are introduced in
Section 3; its asymptotic properties under the null hypothesis as well as
under contiguous alternative hypotheses are also studied in this section. The
case of the composite null hypotheses are presented in Section 4. Some
illustrative examples are presented in Section 6, while the results of an
extensive simulation study are described in Section 7.

%%%%%%%%%%%%%%%%%%%%%%%%%%%%%%%%%%%%%%%%%

\section{The density power divergence}

%%%%%%%%%%%%%%%%%%%%%%%%%%%%%%%%%%%%%%%%%

The density power divergence family (Basu et al.,~1998) represents a rich
class of density based divergences including the Kullback-Leibler divergence.
Consider the class of distributions having densities with respect to a given
dominating measure. Let $\mathcal{G}$ represent the class of corresponding
densities. Given two densities $g,f\in\mathcal{G}$, the density power
divergence between them is defined, as a function of a nonnegative tuning
parameter $\beta$, through the relation
\begin{equation}
d_{\beta}(g,f)=\left\{
\begin{array}
[c]{ll}%
\int_{-\infty}^{+\infty}\left\{  f^{\beta+1}(x)-\tfrac{\beta+1}{\beta}%
f^{\beta}(x)g(x)+\frac{1}{\beta}g^{\beta+1}(x)\right\}  dx, & \text{for}%
\mathrm{~}\beta>0,\\[2ex]%
\int_{-\infty}^{+\infty}g(x)\log\left(  \frac{g(x)}{f(x)}\right)  dx, &
\text{for}\mathrm{~}\beta=0.
\end{array}
\right.  \label{EQ:definition_DPD}%
\end{equation}
We are assuming a univariate random variable associated either with $f$ or
$g$, but it can be generalized to multivariate case without any loss of
generality. The case corresponding to $\beta=0$ may be derived from the
general case by taking the continuous limit as $\beta$ tends to $0$. The
quantities defined in Equation (\ref{EQ:definition_DPD}) are genuine
divergences in the sense\ that $d_{\beta}(g,f)\geq0$ for all $g,f\in
\mathcal{G}$ and all $\beta\geq0$, and $d_{\beta}(g,f)$ is equal to zero if
and only if the densities $g$ and $f$ are identically equal.

We consider the parametric model of densities $\{f_{\boldsymbol{\theta}%
}:\boldsymbol{\theta}\in\Theta\subset{\mathbb{R}}^{p}\}$; we are interested in
the estimation of the parameter $\boldsymbol{\theta}$. Let $G$ represent the
distribution function corresponding to the density function $g$. The MDPD
functional $\boldsymbol{T}_{\beta}(G)$ at $G$ is defined by the requirement
$d_{\beta}(g,f_{\boldsymbol{T}_{\beta}(G)})=\min_{\boldsymbol{\theta}\in
\Theta}d_{\beta}(g,f_{\boldsymbol{\theta}})$. Clearly, the term $\int%
_{-\infty}^{+\infty}g^{\beta+1}(x)dx$ has no role in the minimization of
$d_{\beta}(g,f_{\boldsymbol{\theta}})$ over $\boldsymbol{\theta}\in\Theta$.
Therefore the objective function to be minimized in the computation of the
MDPD functional $\boldsymbol{T}_{\beta}(G)$ simplifies to
\[
\int_{-\infty}^{+\infty}\left\{  f_{\boldsymbol{\theta}}^{\beta+1}%
(x)-\tfrac{\beta+1}{\beta}f_{\boldsymbol{\theta}}^{\beta}(x)g(x)\right\}
dx=\int_{-\infty}^{+\infty}f_{\boldsymbol{\theta}}^{\beta+1}(x)dx-\tfrac
{\beta+1}{\beta}\int_{-\infty}^{+\infty}f_{\boldsymbol{\theta}}^{\beta
}(x)dG(x).
\]
Further, given a random sample $X_{1},\ldots,X_{n}$ from the distribution $G$
we can approximate the second term in the objective function by replacing $G$
with its empirical estimate $G_{n}$. For a given tuning parameter $\beta$,
therefore, the MDPDE $\widehat{\boldsymbol{\theta}}_{\beta}$ of
$\boldsymbol{\theta}$ can be obtained by minimizing
\begin{equation}
\int_{-\infty}^{+\infty}f_{\boldsymbol{\theta}}^{\beta+1}(x)dx-\tfrac{\beta
+1}{\beta}\int_{-\infty}^{+\infty}f_{\boldsymbol{\theta}}^{\beta}%
(x)dG_{n}(x)=\int_{-\infty}^{+\infty}f_{\boldsymbol{\theta}}^{\beta
+1}(x)dx-\tfrac{\beta+1}{\beta}\frac{1}{n}\sum_{i=1}^{n}f_{\boldsymbol{\theta
}}^{\beta}(X_{i}) \label{EQ:Vtheta}%
\end{equation}
over $\boldsymbol{\theta}\in\Theta$. The remarkable observation here is that
the minimization of the above expression over $\boldsymbol{\theta}$ does not
require the use of a non-parametric density estimate.

%Existing theory (e.g. De Angelis and Young,
%1992) shows that in general there is little or no advantage in introducing
%smoothing for such functionals which may be empirically estimated using the
%empirical distribution function alone, except in very special cases. Using $%
%G_{n}$ to substitute $G$, if possible, is therefore a natural step.

Under differentiability of the model the minimization of the objective
function in Equation (\ref{EQ:Vtheta}) leads to an estimating equation of the
form
\begin{equation}
\frac{1}{n}\sum_{i=1}^{n}\boldsymbol{s}_{\boldsymbol{\theta}}(X_{i}%
)f_{\boldsymbol{\theta}}^{\beta}(X_{i})-\int_{-\infty}^{+\infty}%
\boldsymbol{s}_{\boldsymbol{\theta}}(x)f_{\boldsymbol{\theta}}^{\beta
+1}(x)dx=\boldsymbol{0}_{p}, \label{EQ:general_case}%
\end{equation}
where $\boldsymbol{s}_{\boldsymbol{\theta}}(x)=\frac{\partial}{\partial
\boldsymbol{\theta}}\log f_{\boldsymbol{\theta}}(x)$ is the likelihood score
function of the model. It may also be noted that Equation
(\ref{EQ:general_case}) is an unbiased estimating equation under the model.
Since the above estimating equation weights the score $\boldsymbol{s}%
_{\boldsymbol{\theta}}(X_{i})$ with the power of the density
$f_{\boldsymbol{\theta}}^{\beta}(X_{i})$, it is obvious that score functions
for the observations that are discrepant with respect to the model will be
downweighted by the density component.

The functional $\boldsymbol{T}_{\beta}(G)$ is Fisher consistent; it takes the
value $\boldsymbol{\theta}_{0}$ when the true density $g=f_{\boldsymbol{\theta
}_{0}}$ is in the model. When it is not, $\boldsymbol{\theta}_{\beta}%
^{g}=\boldsymbol{T}_{\beta}(G)$ represents the best fitting parameter.
Suppressing the $\beta$ subscript in the notation for $\boldsymbol{\theta
}_{\beta}^{g}$, $f_{\boldsymbol{\theta}^{g}}$ is seen to represent the model
element closest to the density $g$ in the density power divergence sense.

Let $g$ be the true data generating density function and ${\boldsymbol{\theta
}}^{g}=\boldsymbol{T}_{\beta}(G)$ be the best fitting parameter. To set up the
notation we define the quantities
\begin{align}
\boldsymbol{J}_{\beta}(\boldsymbol{\theta})  &  =\int_{-\infty}^{+\infty
}\boldsymbol{s}_{\boldsymbol{\theta}}(x)\boldsymbol{s}_{{\boldsymbol{\theta}}%
}^{T}(x)f_{{\boldsymbol{\theta}}}^{\beta+1}(x)dx\nonumber\\
&  +\int_{-\infty}^{+\infty}\{\mathbb{I}_{\boldsymbol{\theta}}(x)-\beta
\boldsymbol{s}_{\boldsymbol{\theta}}(x)\boldsymbol{s}_{\boldsymbol{\theta}%
}^{T}(x)\}\{g(x)-f_{\boldsymbol{\theta}}(x)\}f_{{\boldsymbol{\theta}}}^{\beta
}(x)dx,\label{a2_new}\\
\boldsymbol{K}_{\beta}(\boldsymbol{\theta})  &  =\int_{-\infty}^{+\infty
}\boldsymbol{s}_{{\boldsymbol{\theta}}}(x)\boldsymbol{s}_{{\boldsymbol{\theta
}}}^{T}(x)f_{{\boldsymbol{\theta}}}^{2\beta}(x)g(x)dx-\boldsymbol{\xi}_{\beta
}({{\boldsymbol{\theta}}})\boldsymbol{\xi}_{\beta}^{T}({{\boldsymbol{\theta}}%
}), \label{a3_new}%
\end{align}
where $\boldsymbol{\xi}_{\beta}({{\boldsymbol{\theta}}})=\int_{-\infty
}^{+\infty}\boldsymbol{s}_{\boldsymbol{\theta}}(x)f_{\boldsymbol{\theta}%
}^{\beta}(x)g(x)dx$, and $\mathbb{I}_{\boldsymbol{\theta}}(x)=-\frac{\partial
}{\partial\boldsymbol{\theta}}\boldsymbol{s}_{\boldsymbol{\theta}}^{T}(x)$.

The following Basu et al. regularity conditions\textbf{ }(Basu et al., 2011)
form the basis of our subsequent developments.

\begin{itemize}
\item[(D1)] The model densities $f_{\boldsymbol{\theta}}$ and the true density
$g$ all have common support $\mathcal{X}$, which is independent of
$\boldsymbol{\theta}\in\Theta\subset%
%TCIMACRO{\U{211d} }%
%BeginExpansion
\mathbb{R}
%EndExpansion
^{p}$.

\item[(D2)] There is an open $\Omega\subset\Theta$, containing the best
fitting parameter $\boldsymbol{\theta}^{g}$ under the true density $g$ such
that $f_{\boldsymbol{\theta}}(x)$ is thrice continuously differentiable with
respect to $\boldsymbol{\theta}\in\Omega$, for almost all $x\in\mathcal{X}$.

\item[(D3)] The integrals $\int_{-\infty}^{+\infty}f_{\boldsymbol{\theta}%
}^{\beta+1}(x)dx$ and $\int_{-\infty}^{+\infty}f_{\boldsymbol{\theta}}^{\beta
}(x)g(x)dx$ can be differentiated thrice in $\boldsymbol{\theta}$, and the
derivatives can be interchanged with the integration.

\item[(D4)] The matrix $\boldsymbol{J}_{\beta}(\boldsymbol{\theta})$, defined
in (\ref{a2_new}), is positive definite.

\item[(D5)] For each index $j,k,l=1,\ldots,p$, of\ single components in
$\boldsymbol{\theta}$, there exists\ a function $\zeta_{jkl,\beta}(x)$, having
finite expectation under $g$, which satisfies
\[
\left\vert \frac{\partial^{3}}{\partial\theta_{j}\partial\theta_{k}%
\partial\theta_{l}}\int_{-\infty}^{+\infty}\left[  f_{\boldsymbol{\theta}%
}^{1+\beta}-(1+1/\beta)f_{\boldsymbol{\theta}}^{\beta}(x)\right]
dx\right\vert \leq\zeta_{jkl,\beta}(x)\text{~for~all~}\boldsymbol{\theta}%
\in\Omega.
\]

\end{itemize}

\begin{theorem}
\label{Theorem1}(Basu et al., 2011) We assume that the Basu et al.~conditions
are true. Then,

\begin{enumerate}
\item[a)] the estimating equation (\ref{EQ:general_case}) for the MDPDE has a
consistent sequence of roots $\widehat{\boldsymbol{\theta}}_{\beta
}=\widehat{\boldsymbol{\theta}}_{n,\beta}$, i.e. $\widehat{\boldsymbol{\theta
}}_{n,\beta}\underset{n\rightarrow\infty}{\overset{P}{\longrightarrow}%
}\boldsymbol{\theta}_{0}$.

\item[b)] $n^{1/2}(\hat{\boldsymbol{\theta}}_{\beta}-{\boldsymbol{\theta}}%
^{g})$ has an asymptotic multivariate normal distribution with vector mean
zero and covariance matrix $\boldsymbol{J}_{\beta}^{-1}({{\boldsymbol{\theta}%
}}^{g})\boldsymbol{K}_{\beta}({{\boldsymbol{\theta}}}^{g})\boldsymbol{J}%
_{\beta}^{-1}({{\boldsymbol{\theta}}}^{g})$, where $\boldsymbol{J}_{\beta
}({{\boldsymbol{\theta}}}^{g})$ and $\boldsymbol{K}_{\beta}%
({{\boldsymbol{\theta}}}^{g})$ are as in (\ref{a2_new}) and (\ref{a3_new}).
\end{enumerate}
\end{theorem}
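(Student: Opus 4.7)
The plan is to treat this as a standard M-estimation result and proceed via a Cram\'er-type Taylor expansion of the estimating equation. Let $\boldsymbol{U}_n(\boldsymbol{\theta})$ denote the left-hand side of (\ref{EQ:general_case}), so that any MDPDE is a zero of $\boldsymbol{U}_n$, and let $\boldsymbol{U}(\boldsymbol{\theta})$ be its population analogue obtained by replacing the empirical average by the expectation under $g$. By the first-order optimality condition defining the MDPD functional $\boldsymbol{T}_\beta(G)$, one has $\boldsymbol{U}(\boldsymbol{\theta}^g)=\boldsymbol{0}$, which anchors both parts of the argument.

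For part (a) I would use the classical existence-of-a-consistent-root argument: for each sufficiently small $\delta>0$ and with probability tending to one, the sample objective function in (\ref{EQ:Vtheta}) exceeds its value at $\boldsymbol{\theta}^g$ on the sphere $\|\boldsymbol{\theta}-\boldsymbol{\theta}^g\|=\delta$. The SLLN applied pointwise, (D2)--(D3) to handle the third-order Taylor expansion, (D4) to make the quadratic term dominant (with the population Hessian being $\boldsymbol{J}_\beta(\boldsymbol{\theta}^g)$), and (D5) for a uniform third-derivative bound, together yield the desired inequality. Continuity then produces a local minimum strictly inside the ball, hence a root of $\boldsymbol{U}_n$ within distance $\delta$ of $\boldsymbol{\theta}^g$; since $\delta$ is arbitrary, this gives a consistent sequence of roots, with $\boldsymbol{\theta}^g=\boldsymbol{\theta}_0$ under the model.

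For part (b) I would Taylor-expand $\boldsymbol{U}_n(\widehat{\boldsymbol{\theta}}_{n,\beta})=\boldsymbol{0}$ around $\boldsymbol{\theta}^g$ to write
\[
\sqrt{n}(\widehat{\boldsymbol{\theta}}_{n,\beta}-\boldsymbol{\theta}^g)=-[\nabla\boldsymbol{U}_n(\boldsymbol{\theta}^{\ast})]^{-1}\sqrt{n}\,\boldsymbol{U}_n(\boldsymbol{\theta}^g),
\]
with $\boldsymbol{\theta}^{\ast}$ an intermediate point. The i.i.d.\ summands in $\boldsymbol{U}_n(\boldsymbol{\theta}^g)$ have mean zero (from $\boldsymbol{U}(\boldsymbol{\theta}^g)=\boldsymbol{0}$) and, by a direct second-moment calculation, covariance $\boldsymbol{K}_\beta(\boldsymbol{\theta}^g)$ as in (\ref{a3_new}); the multivariate CLT then yields $\sqrt{n}\,\boldsymbol{U}_n(\boldsymbol{\theta}^g)\overset{L}{\longrightarrow}\mathcal{N}(\boldsymbol{0},\boldsymbol{K}_\beta(\boldsymbol{\theta}^g))$. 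For the Jacobian, differentiating the summand of (\ref{EQ:general_case}) under the integral sign (justified by (D3)) and applying the WLLN gives $\nabla\boldsymbol{U}_n(\boldsymbol{\theta}^g)\overset{P}{\longrightarrow}-\boldsymbol{J}_\beta(\boldsymbol{\theta}^g)$, where the $\mathbb{I}_{\boldsymbol{\theta}}(x)$ and $\beta\boldsymbol{s}_{\boldsymbol{\theta}}\boldsymbol{s}_{\boldsymbol{\theta}}^{T}$ contributions in (\ref{a2_new}) appear precisely because one is differentiating the product $\boldsymbol{s}_{\boldsymbol{\theta}}(x)f_{\boldsymbol{\theta}}^{\beta}(x)$. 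Invertibility is guaranteed by (D4), and Slutsky's theorem delivers the sandwich covariance $\boldsymbol{J}_\beta^{-1}(\boldsymbol{\theta}^g)\boldsymbol{K}_\beta(\boldsymbol{\theta}^g)\boldsymbol{J}_\beta^{-1}(\boldsymbol{\theta}^g)$.

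The main obstacle is handling the random intermediate point $\boldsymbol{\theta}^{\ast}$ in the Taylor expansion: one must show that $\nabla\boldsymbol{U}_n(\boldsymbol{\theta}^{\ast})$ has the same probability limit as $\nabla\boldsymbol{U}_n(\boldsymbol{\theta}^g)$. This is precisely where condition (D5) does its work, providing the integrable envelope $\zeta_{jkl,\beta}$ that, together with the consistency established in part (a), licenses a uniform LLN on a shrinking neighborhood of $\boldsymbol{\theta}^g$ and lets one replace $\boldsymbol{\theta}^{\ast}$ by $\boldsymbol{\theta}^g$ at the cost of an $o_p(1)$ error. Once this remainder bookkeeping is in place, the final assembly of the CLT, WLLN and Slutsky is routine.
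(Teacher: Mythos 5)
Your proposal is correct, but note that the paper itself states Theorem \ref{Theorem1} without proof, simply quoting it from Basu et al.\ (2011); the argument given there is precisely the classical Cram\'er--Lehmann route you describe (consistent root via the sign of the objective on a small sphere, then a Taylor expansion of the estimating equation with the CLT for $\sqrt{n}\,\boldsymbol{U}_n(\boldsymbol{\theta}^g)$, the WLLN plus the (D5) third-derivative envelope to control the Jacobian at the intermediate point, and Slutsky to get the sandwich covariance $\boldsymbol{J}_{\beta}^{-1}\boldsymbol{K}_{\beta}\boldsymbol{J}_{\beta}^{-1}$). So your reconstruction matches the proof in the cited source in both structure and in the role assigned to each regularity condition.
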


When the true distribution $G$ belongs to the model so that
$G=F_{\boldsymbol{\theta}}$ for some $\boldsymbol{\theta}\in\Theta$, the
formula for $\boldsymbol{J}$, $\boldsymbol{K}$ and $\boldsymbol{\xi}$ simplify
to
%%%%%%%%%%%%%%%%%%%%%%%%%%%%%%%%%%%%%%
\begin{align}
\boldsymbol{J}_{\beta}(\boldsymbol{\theta})  &  =\int_{-\infty}^{+\infty
}\boldsymbol{s}_{\boldsymbol{\theta}}(x)\boldsymbol{s}_{\boldsymbol{\theta}%
}^{T}(x)f_{\boldsymbol{\theta}}^{\beta+1}(x)dx,\label{model_variance}\\
\boldsymbol{K}_{\beta}(\boldsymbol{\theta})  &  =\int_{-\infty}^{+\infty
}\boldsymbol{s}_{\boldsymbol{\theta}}(x)\boldsymbol{s}_{\boldsymbol{\theta}%
}^{T}(x)f_{\boldsymbol{\theta}}^{2\beta+1}(x)dx-\boldsymbol{\xi}_{\beta
}({{\boldsymbol{\theta}}})\boldsymbol{\xi}_{\beta}^{T}({{\boldsymbol{\theta}}%
}),\label{m_v}\\
\boldsymbol{\xi}_{\beta}(\boldsymbol{\theta})  &  =\int_{-\infty}^{+\infty
}\boldsymbol{s}_{\boldsymbol{\theta}}(x)f_{\boldsymbol{\theta}}^{\beta
+1}(x)dx. \label{xi_function}%
\end{align}
%%%%%%%%%%%%%%%%%%%%%%%%%%%%%%%%%%%%%%

Consider a general composite null hypothesis of interest which restricts the
parameter space to a proper subset $\Theta_{0}$ of $\Theta$, i.e.%
\begin{equation}
H_{0}:\boldsymbol{\theta}\in\Theta_{0}~\text{against}~H_{1}:\boldsymbol{\theta
}\notin\Theta_{0}. \label{1}%
\end{equation}
In many practical hypothesis testing problems, the restricted parameter space
$\Theta_{0}$ is defined by a set of $r$ restrictions of the form
\begin{equation}
\boldsymbol{m}(\boldsymbol{\theta)=0}_{r} \label{0}%
\end{equation}
on $\Theta$, where $\boldsymbol{m}:{\mathbb{R}}^{p}\longrightarrow{\mathbb{R}%
}^{r}$ is a vector-valued function such that the $p\times r$ matrix
\begin{equation}
\boldsymbol{M}\left(  \boldsymbol{\theta}\right)  =\frac{\partial
\boldsymbol{m}^{T}(\boldsymbol{\theta)}}{\partial\boldsymbol{\theta}}
\label{0.0}%
\end{equation}
exists, $r\leq q$ and is continuous in $\boldsymbol{\theta}$ and
$\boldsymbol{M}\left(  \boldsymbol{\theta}\right)  $ is of full rank
($\mathrm{rank}\left(  \boldsymbol{M}\left(  \boldsymbol{\theta}\right)
\right)  =r)$. The superscript $T$ in the above represents the transpose of
the matrix. The RMDPD functional $\boldsymbol{T}_{\beta}^{0}(G)$ at $G$, on
the other hand, is the value in the parameter space which satisfies
\[
d_{\beta}(g,f_{\boldsymbol{T}_{\beta}^{0}(G)})=\min_{{{\boldsymbol{\theta}}%
}\in\Theta_{0}}d_{\beta}(g,f_{{\boldsymbol{\theta}}}),
\]
given such a minimizer exists. When a random sample $X_{1},\ldots,X_{n}$ is
available from the distribution $G$, the RMDPDE of $\boldsymbol{\theta}$
minimizes the objective function in (\ref{EQ:Vtheta}) subject to
$\boldsymbol{m}(\boldsymbol{\theta})=\boldsymbol{0}_{r}$. Under this set the
next theorem presents the asymptotic distribution of the RMDPDE
$\boldsymbol{\tilde{\theta}}_{\beta}$ of $\boldsymbol{\theta}$.

\begin{theorem}
(Basu et al., 2014) The RMDPDE $\boldsymbol{\tilde{\theta}}_{\beta}$ of
$\boldsymbol{\theta}$ obtained under the null hypothesis through the
constraints $\boldsymbol{m}(\boldsymbol{\theta})=\boldsymbol{0}_{r}$, has the
asymptotic distribution
%%%%%%%%%%%%%%%%%%%%%%%%%%%%%
\[
n^{1/2}(\boldsymbol{\tilde{\theta}}_{\beta}-\boldsymbol{\theta}_{0}%
)\underset{n\rightarrow\infty}{\overset{\mathcal{L}}{\longrightarrow}%
}\mathcal{N}(\boldsymbol{0}_{p},\boldsymbol{\Sigma}_{\beta}(\boldsymbol{\theta
}_{0})\boldsymbol{)}%
\]
%%%%%%%%%%%%%%%%%%%%%%%%%%%%%
where
%%%%%%%%%%%%%%%%%%%%%%%%%%%%%
\[
\boldsymbol{\Sigma}_{\beta}(\boldsymbol{\theta}_{0})=\boldsymbol{P}_{\beta
}(\boldsymbol{\theta}_{0})\boldsymbol{K}_{\beta}(\boldsymbol{\theta}%
_{0})\boldsymbol{P}_{\beta}(\boldsymbol{\theta}_{0}),
\]
%%%%%%%%%%%%%%%%%%%%%%%%%%%%%
\begin{equation}
\boldsymbol{P}_{\beta}(\boldsymbol{\theta}_{0})=\boldsymbol{J}_{\beta}%
^{-1}(\boldsymbol{\theta}_{0})-\boldsymbol{Q}_{\beta}(\boldsymbol{\theta}%
_{0})\boldsymbol{M}^{T}(\boldsymbol{\theta}_{0})\boldsymbol{J}_{\beta}%
^{-1}(\boldsymbol{\theta}_{0}), \label{P}%
\end{equation}
%%%%%%%%%%%%%%%%%%%%%%%%%%%%%
$\boldsymbol{M}(\boldsymbol{\theta}_{0})$, $\boldsymbol{J}(\boldsymbol{\theta
}_{0})$\ are as in (\ref{0.0}) and (\ref{a2_new}) respectively, and evaluated
at $\boldsymbol{\theta}=\boldsymbol{\theta}_{0}$, and
%%%%%%%%%%%%%%%%%%%%%%%%%%%%%
\begin{equation}
\boldsymbol{Q}_{\beta}(\boldsymbol{\theta}_{0})=\boldsymbol{J}_{\beta}%
^{-1}(\boldsymbol{\theta}_{0})\boldsymbol{M}(\boldsymbol{\theta}_{0})\left[
\boldsymbol{M}^{T}(\boldsymbol{\theta}_{0})\boldsymbol{J}_{\beta}%
^{-1}(\boldsymbol{\theta}_{0})\boldsymbol{M}(\boldsymbol{\theta}_{0})\right]
^{-1}. \label{Q}%
\end{equation}
%%%%%%%%%%%%%%%%%%%%%%%%%%%%%

\end{theorem}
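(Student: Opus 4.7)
The plan is to apply the standard Lagrange multiplier technique to the constrained optimization problem defining the RMDPDE and then linearize the resulting system of first-order conditions around the true parameter $\boldsymbol{\theta}_{0}$. Set
\[
\boldsymbol{U}_{n}(\boldsymbol{\theta})=\frac{1}{n}\sum_{i=1}^{n}\boldsymbol{s}_{\boldsymbol{\theta}}(X_{i})f_{\boldsymbol{\theta}}^{\beta}(X_{i})-\int\boldsymbol{s}_{\boldsymbol{\theta}}(x)f_{\boldsymbol{\theta}}^{\beta+1}(x)\,dx,
\]
so that $\boldsymbol{U}_{n}(\boldsymbol{\theta})$ is (a multiple of) the gradient of the objective in (\ref{EQ:Vtheta}). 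Introduce a Lagrange multiplier $\boldsymbol{\lambda}_{n}\in\mathbb{R}^{r}$; then the Karush--Kuhn--Tucker conditions for the restricted minimization read
\[
\boldsymbol{U}_{n}(\boldsymbol{\tilde{\theta}}_{\beta})+\boldsymbol{M}(\boldsymbol{\tilde{\theta}}_{\beta})\boldsymbol{\lambda}_{n}=\boldsymbol{0}_{p},\qquad \boldsymbol{m}(\boldsymbol{\tilde{\theta}}_{\beta})=\boldsymbol{0}_{r}.
\]

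The next step is a first-order Taylor expansion of both equations around $\boldsymbol{\theta}_{0}$. For the estimating function I would use conditions (D2), (D3) and (D5) to justify
\[
\sqrt{n}\,\boldsymbol{U}_{n}(\boldsymbol{\tilde{\theta}}_{\beta})=\sqrt{n}\,\boldsymbol{U}_{n}(\boldsymbol{\theta}_{0})-\boldsymbol{J}_{\beta}(\boldsymbol{\theta}_{0})\sqrt{n}\,(\boldsymbol{\tilde{\theta}}_{\beta}-\boldsymbol{\theta}_{0})+o_{P}(1),
\]
recalling that $-\mathbb{E}_{g}[\partial\boldsymbol{U}_{n}/\partial\boldsymbol{\theta}]$ is exactly $\boldsymbol{J}_{\beta}(\boldsymbol{\theta}_{0})$ by (\ref{a2_new}). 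Similarly, since $\boldsymbol{m}(\boldsymbol{\theta}_{0})=\boldsymbol{0}_{r}$ under $H_{0}$, expanding the constraint gives $\boldsymbol{M}^{T}(\boldsymbol{\theta}_{0})\sqrt{n}\,(\boldsymbol{\tilde{\theta}}_{\beta}-\boldsymbol{\theta}_{0})=o_{P}(1)$, and $\boldsymbol{M}(\boldsymbol{\tilde{\theta}}_{\beta})\to\boldsymbol{M}(\boldsymbol{\theta}_{0})$ in probability by continuity. This yields the linearized block system
\[
\begin{pmatrix}\boldsymbol{J}_{\beta}(\boldsymbol{\theta}_{0}) & -\boldsymbol{M}(\boldsymbol{\theta}_{0})\\ \boldsymbol{M}^{T}(\boldsymbol{\theta}_{0}) & \boldsymbol{0}\end{pmatrix}\begin{pmatrix}\sqrt{n}(\boldsymbol{\tilde{\theta}}_{\beta}-\boldsymbol{\theta}_{0})\\ \sqrt{n}\,\boldsymbol{\lambda}_{n}\end{pmatrix}=\begin{pmatrix}\sqrt{n}\,\boldsymbol{U}_{n}(\boldsymbol{\theta}_{0})\\ \boldsymbol{0}\end{pmatrix}+o_{P}(1).
\]

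Solving this system by block elimination is routine: from the first block I obtain $\sqrt{n}(\boldsymbol{\tilde{\theta}}_{\beta}-\boldsymbol{\theta}_{0})=\boldsymbol{J}_{\beta}^{-1}\sqrt{n}\,\boldsymbol{U}_{n}(\boldsymbol{\theta}_{0})+\boldsymbol{J}_{\beta}^{-1}\boldsymbol{M}\sqrt{n}\,\boldsymbol{\lambda}_{n}+o_{P}(1)$, substitute into $\boldsymbol{M}^{T}(\boldsymbol{\tilde{\theta}}_{\beta}-\boldsymbol{\theta}_{0})=o_{P}(n^{-1/2})$ to get $\sqrt{n}\,\boldsymbol{\lambda}_{n}=-[\boldsymbol{M}^{T}\boldsymbol{J}_{\beta}^{-1}\boldsymbol{M}]^{-1}\boldsymbol{M}^{T}\boldsymbol{J}_{\beta}^{-1}\sqrt{n}\,\boldsymbol{U}_{n}(\boldsymbol{\theta}_{0})+o_{P}(1)$, and plug back in to obtain the Bahadur-type representation
\[
\sqrt{n}(\boldsymbol{\tilde{\theta}}_{\beta}-\boldsymbol{\theta}_{0})=\boldsymbol{P}_{\beta}(\boldsymbol{\theta}_{0})\sqrt{n}\,\boldsymbol{U}_{n}(\boldsymbol{\theta}_{0})+o_{P}(1),
\]
with $\boldsymbol{P}_{\beta}$ exactly as in (\ref{P}). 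Since $\boldsymbol{U}_{n}(\boldsymbol{\theta}_{0})$ is an empirical mean of i.i.d.\ zero-mean random vectors whose covariance is precisely $\boldsymbol{K}_{\beta}(\boldsymbol{\theta}_{0})$ by (\ref{a3_new}), the multivariate central limit theorem gives $\sqrt{n}\,\boldsymbol{U}_{n}(\boldsymbol{\theta}_{0})\xrightarrow{\mathcal{L}}\mathcal{N}(\boldsymbol{0}_{p},\boldsymbol{K}_{\beta}(\boldsymbol{\theta}_{0}))$. Noting that $\boldsymbol{P}_{\beta}(\boldsymbol{\theta}_{0})$ is symmetric (both $\boldsymbol{J}_{\beta}^{-1}$ and $\boldsymbol{J}_{\beta}^{-1}\boldsymbol{M}[\boldsymbol{M}^{T}\boldsymbol{J}_{\beta}^{-1}\boldsymbol{M}]^{-1}\boldsymbol{M}^{T}\boldsymbol{J}_{\beta}^{-1}$ are), Slutsky's theorem delivers the stated limit with covariance $\boldsymbol{P}_{\beta}\boldsymbol{K}_{\beta}\boldsymbol{P}_{\beta}=\boldsymbol{\Sigma}_{\beta}(\boldsymbol{\theta}_{0})$.

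The main technical obstacle is twofold. First, one must prove the existence of a $\sqrt{n}$-consistent sequence of restricted roots $\boldsymbol{\tilde{\theta}}_{\beta}$ lying in $\Theta_{0}$ with high probability, so that the Taylor expansions above are valid at the estimator; this is essentially an adaptation of the consistency argument used in Theorem~\ref{Theorem1} but carried out on the $(p-r)$-dimensional manifold $\{\boldsymbol{\theta}:\boldsymbol{m}(\boldsymbol{\theta})=\boldsymbol{0}_{r}\}$, exploiting the full-rank assumption on $\boldsymbol{M}(\boldsymbol{\theta})$ to parameterize it locally. Second, the remainder control in the expansion of $\boldsymbol{U}_{n}(\boldsymbol{\tilde{\theta}}_{\beta})$ requires the dominating function in condition (D5), which bounds third derivatives uniformly over a neighborhood of $\boldsymbol{\theta}_{0}$; everything else in the argument is essentially linear algebra once this analytic groundwork is in place.
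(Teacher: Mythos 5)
Your proposal is correct and follows essentially the same route the paper relies on: although the theorem itself is cited to Basu et al.~(2014), the identical Lagrange-multiplier linearization --- expanding $\boldsymbol{U}_{\beta,n}$ and $\boldsymbol{m}$ around $\boldsymbol{\theta}_{0}$, solving the block system with blocks $\boldsymbol{P}_{\beta}$ and $\boldsymbol{Q}_{\beta}$, and applying the CLT to $\sqrt{n}\,\boldsymbol{U}_{\beta,n}(\boldsymbol{\theta}_{0})$ --- is exactly what the paper reproduces in its Section~4 proof for $\widetilde{R}_{\beta,n}$, giving the same representation $\sqrt{n}(\boldsymbol{\tilde{\theta}}_{\beta}-\boldsymbol{\theta}_{0})=\boldsymbol{P}_{\beta}(\boldsymbol{\theta}_{0})\sqrt{n}\,\boldsymbol{U}_{\beta,n}(\boldsymbol{\theta}_{0})+o_{P}(1)$ and covariance $\boldsymbol{P}_{\beta}\boldsymbol{K}_{\beta}\boldsymbol{P}_{\beta}$. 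Your remarks on the $\sqrt{n}$-consistency of the restricted root and the remainder control via (D5) concern details the paper likewise delegates to the cited reference, so nothing essential is missing.
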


%%%%%%%%%%%%%%%%%%%%%%%%%%%%%%%%%%%%%%%%%%%%%%%%%%%%%%%%%%%%%%%%%

\section{Rao-type statistics for testing simple null hypothesis}

%%%%%%%%%%%%%%%%%%%%%%%%%%%%%%%%%%%%%%%%%%%%%%%%%%%%%%%%%%%%%%%%%

The MDPDE can be obtained solving the system of estimating equations
\begin{equation}
\boldsymbol{U}_{\beta,n}\left(  \boldsymbol{\theta}\right)  =\boldsymbol{0}%
_{p}, \label{EQ:score_equation}%
\end{equation}
where
\begin{align}
\boldsymbol{U}_{\beta,n}\left(  \boldsymbol{\theta}\right)   &  =\frac{1}%
{n}\sum_{i=1}^{n}\boldsymbol{u}_{\beta}\left(  X_{i},\boldsymbol{\theta
}\right)  ,\label{2.1}\\
\boldsymbol{u}_{\beta}\left(  x,\boldsymbol{\theta}\right)   &
=\boldsymbol{s}_{\boldsymbol{\theta}}(x)f_{\boldsymbol{\theta}}^{\beta
}(x)-\int_{-\infty}^{+\infty}\boldsymbol{s}_{\boldsymbol{\theta}%
}(y)f_{\boldsymbol{\theta}}^{\beta+1}(y)dy \label{EQ:score_function}%
\end{align}
is the $\beta$-score statistic.
Notice that denoting $\boldsymbol{u}_{\beta}\left(  x,\boldsymbol{\theta
}\right)  =(u_{1,\beta}\left(  x,\boldsymbol{\theta}\right)  ,...,u_{p,\beta
}\left(  x,\boldsymbol{\theta}\right)  )^{T}$, its correspondent components
are
%%%%%%%%%%%%%%%%%%%%%%%%%%%%%%%%%
\[
\boldsymbol{U}_{\beta,n}\left(  \boldsymbol{\theta}\right)  =\left(  \frac
{1}{n}\sum_{i=1}^{n}u_{1,\beta}\left(  X_{i},\boldsymbol{\theta}\right)
,...,\frac{1}{n}\sum_{i=1}^{n}u_{p,\beta}\left(  X_{i},\boldsymbol{\theta
}\right)  \right)  ^{T}.
\]
Simple calculations show that
\[
E\left[  \boldsymbol{U}_{\beta,n}\left(  \boldsymbol{\theta}\right)  \right]
=\boldsymbol{0}_{p},~Cov\left[  \boldsymbol{U}_{\beta,n}\left(
\boldsymbol{\theta}\right)  \right]  =\frac{1}{n}\boldsymbol{K}_{\beta}\left(
\boldsymbol{\theta}\right)  ,
\]
and we have, by the Central Limit Theorem (CLT)
\begin{equation}
\sqrt{n}\boldsymbol{U}_{\beta,n}\left(  \boldsymbol{\theta}\right)
\underset{n\rightarrow\infty}{\overset{\mathcal{L}}{\longrightarrow}%
}\mathcal{N}\left(  \boldsymbol{0},\boldsymbol{K}_{\beta}\left(
\boldsymbol{\theta}\right)  \right)  , \label{2.0}%
\end{equation}
where $\boldsymbol{K}_{\beta}\left(  \boldsymbol{\theta}\right)  $ is as in
Equation (\ref{m_v}).
%%%%%%%%%%%%%%%%%%%%%%%%%%%%%%%%%%
In this setting is introduced the Rao-type test statistics of order $\beta
$\ for testing the simple null hypothesis
%%%%%%%%%%%%%%%%%%%%%%%%%%%%%%%%%%
\begin{equation}
H_{0}:\boldsymbol{\theta=\theta}_{0}\text{~against~}H_{1}:\boldsymbol{\theta
}\neq\boldsymbol{\theta}_{0}. \label{2.2}%
\end{equation}
%%%%%%%%%%%%%%%%%%%%%%%%%%%%%%%%%%

\begin{definition}
The Rao-type test statistic of order $\beta$ for testing the null hypothesis
in (\ref{2.2}) is given by
%%%%%%%%%%%%%%%%%%%%%%%%%%%%%%%%%%
\begin{equation}
R_{\beta,n}\left(  \boldsymbol{\theta}_{0}\right)  =n\boldsymbol{U}_{\beta
,n}^{T}\left(  \boldsymbol{\theta}_{0}\right)  \boldsymbol{K}_{\beta}%
^{-1}\left(  \boldsymbol{\theta}_{0}\right)  \boldsymbol{U}_{\beta,n}\left(
\boldsymbol{\theta}_{0}\right)  \label{2.3}%
\end{equation}
%%%%%%%%%%%%%%%%%%%%%%%%%%%%%%%%%%
where $\boldsymbol{U}_{\beta,n}\left(  \boldsymbol{\theta}_{0}\right)  $ is as
defined in (\ref{2.1}).
\end{definition}

Given Equations (\ref{2.0}) and (\ref{2.1}), the asymptotic distribution of
the Rao-type test statistics in Equation (\ref{2.3}) can be easily derived,
which is stated in the following theorem.

\begin{theorem}
The asymptotic distribution of the Rao-type test statistics $R_{\beta
,n}\left(  \boldsymbol{\theta}_{0}\right)  $ given in (\ref{2.3}), is
chi-square with $p$ degrees of freedom under the null hypothesis given in
(\ref{2.2}).
\end{theorem}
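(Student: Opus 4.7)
The plan is to reduce the result to the classical fact that if $\boldsymbol{Z} \sim \mathcal{N}(\boldsymbol{0}_p, \boldsymbol{\Sigma})$ with $\boldsymbol{\Sigma}$ positive definite, then $\boldsymbol{Z}^T \boldsymbol{\Sigma}^{-1} \boldsymbol{Z} \sim \chi^2_p$. The key input is the asymptotic normality of $\sqrt{n}\,\boldsymbol{U}_{\beta,n}(\boldsymbol{\theta}_0)$ stated in Equation (\ref{2.0}), which was obtained from a direct application of the multivariate CLT to the i.i.d.\ sequence $\boldsymbol{u}_\beta(X_i,\boldsymbol{\theta}_0)$ after verifying unbiasedness $E[\boldsymbol{u}_\beta(X,\boldsymbol{\theta}_0)] = \boldsymbol{0}_p$ and the covariance identity $\mathrm{Cov}[\boldsymbol{u}_\beta(X,\boldsymbol{\theta}_0)] = \boldsymbol{K}_\beta(\boldsymbol{\theta}_0)$.

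First I would rewrite the statistic in the factored form
\[
R_{\beta,n}(\boldsymbol{\theta}_0) \;=\; \bigl[\sqrt{n}\,\boldsymbol{U}_{\beta,n}(\boldsymbol{\theta}_0)\bigr]^T \boldsymbol{K}_\beta^{-1}(\boldsymbol{\theta}_0)\,\bigl[\sqrt{n}\,\boldsymbol{U}_{\beta,n}(\boldsymbol{\theta}_0)\bigr],
\]
making it explicit that the statistic is a continuous quadratic form in $\sqrt{n}\,\boldsymbol{U}_{\beta,n}(\boldsymbol{\theta}_0)$. Under $H_0$ we have $g = f_{\boldsymbol{\theta}_0}$, so $\boldsymbol{K}_\beta(\boldsymbol{\theta}_0)$ reduces to the in-model expression (\ref{m_v}); assumption (D4) together with standard arguments ensures its positive definiteness, so $\boldsymbol{K}_\beta^{-1}(\boldsymbol{\theta}_0)$ exists and the quadratic form is well defined.

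Next, applying the continuous mapping theorem to (\ref{2.0}), the joint convergence yields
\[
R_{\beta,n}(\boldsymbol{\theta}_0) \;\overset{\mathcal{L}}{\longrightarrow}\; \boldsymbol{Z}^T \boldsymbol{K}_\beta^{-1}(\boldsymbol{\theta}_0)\,\boldsymbol{Z},\qquad \boldsymbol{Z}\sim\mathcal{N}(\boldsymbol{0}_p,\boldsymbol{K}_\beta(\boldsymbol{\theta}_0)).
\]
Finally, the transformation $\boldsymbol{Y} = \boldsymbol{K}_\beta^{-1/2}(\boldsymbol{\theta}_0)\boldsymbol{Z} \sim \mathcal{N}(\boldsymbol{0}_p,\boldsymbol{I}_p)$ converts the quadratic form into $\boldsymbol{Y}^T\boldsymbol{Y} = \sum_{j=1}^p Y_j^2 \sim \chi^2_p$, which is the claimed asymptotic distribution.

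There is no substantial obstacle here, since the heavy lifting has already been done: the unbiasedness of the $\beta$-score estimating function (noted after (\ref{EQ:general_case})), the finiteness of the relevant moments under the Basu et al.\ regularity conditions (D1)--(D5), and the invertibility of $\boldsymbol{K}_\beta(\boldsymbol{\theta}_0)$ together with the CLT in (\ref{2.0}) make the argument a routine quadratic-form computation. The only point that might warrant explicit mention is confirming positive definiteness of $\boldsymbol{K}_\beta(\boldsymbol{\theta}_0)$ under the model so that its symmetric inverse square root is well defined; this follows from (D4) and the representation (\ref{m_v}) provided the score $\boldsymbol{s}_{\boldsymbol{\theta}_0}$ is not identically degenerate under $f_{\boldsymbol{\theta}_0}$, a standard identifiability consequence of the regularity conditions.
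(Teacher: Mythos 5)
Your argument is correct and is exactly the route the paper intends: the paper states that the result follows directly from the asymptotic normality in (\ref{2.0}), and your proof simply fills in the routine details (continuous mapping applied to the quadratic form, invertibility of $\boldsymbol{K}_{\beta}(\boldsymbol{\theta}_{0})$, and the standard fact that $\boldsymbol{Z}^{T}\boldsymbol{\Sigma}^{-1}\boldsymbol{Z}\sim\chi_{p}^{2}$ for $\boldsymbol{Z}\sim\mathcal{N}(\boldsymbol{0}_{p},\boldsymbol{\Sigma})$). No discrepancy with the paper's approach.
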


%\begin{proof}
%By (\ref{2.0}) we have
%\begin{equation*}
%\frac{1}{\sqrt{n}}\boldsymbol{U}_{n}^{\beta }\left( \boldsymbol{\theta }%
%_{0}\right) \underset{n\rightarrow \infty }{\overset{L}{\longrightarrow }}%
%\mathit{N}\left( \boldsymbol{0},\boldsymbol{K}_{\beta }\left( \boldsymbol{%
%\theta }\right) \right) .
%\end{equation*}%
%Now the result follows.
%\end{proof}

The following theorem establishes the consistency of the Rao-type test
statistic of order $\beta$.

\begin{theorem}
Let $\boldsymbol{\theta\in\Theta}$ with $\boldsymbol{\theta\neq\theta}_{0}$
and we assume that $E_{\boldsymbol{\theta}}\left[  \boldsymbol{u}_{\beta
}\left(  X,\boldsymbol{\theta}_{0}\right)  \right]  \neq\boldsymbol{0}_{p}$.
Then,
\[
\lim_{n\rightarrow\infty}P\left(  R_{\beta,n}\left(  \boldsymbol{\theta}%
_{0}\right)  >\chi_{p,\alpha}^{2}\right)  =1.
\]

\end{theorem}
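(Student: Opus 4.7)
The plan is to show that under the alternative, the test statistic $R_{\beta,n}(\boldsymbol{\theta}_0)$ diverges to infinity in probability, so that the probability of exceeding any fixed critical value $\chi^2_{p,\alpha}$ tends to $1$. This is the standard route for establishing consistency of a chi-square-type test.

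First, I would apply the weak law of large numbers to the i.i.d.\ summands $\boldsymbol{u}_{\beta}(X_i,\boldsymbol{\theta}_0)$, where $X_i\sim f_{\boldsymbol{\theta}}$ with $\boldsymbol{\theta}\neq\boldsymbol{\theta}_0$. This gives
\[
\boldsymbol{U}_{\beta,n}(\boldsymbol{\theta}_0)=\frac{1}{n}\sum_{i=1}^n\boldsymbol{u}_{\beta}(X_i,\boldsymbol{\theta}_0)\overset{P}{\longrightarrow}\boldsymbol{\mu}_{\beta}(\boldsymbol{\theta},\boldsymbol{\theta}_0):=E_{\boldsymbol{\theta}}[\boldsymbol{u}_{\beta}(X,\boldsymbol{\theta}_0)],
\]
and by hypothesis $\boldsymbol{\mu}_{\beta}(\boldsymbol{\theta},\boldsymbol{\theta}_0)\neq\boldsymbol{0}_p$.

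Next, since $\boldsymbol{K}_{\beta}(\boldsymbol{\theta}_0)$ is a fixed positive definite matrix (it is the covariance of $\boldsymbol{u}_{\beta}(X,\boldsymbol{\theta}_0)$ under the model at $\boldsymbol{\theta}_0$, positive definite under the Basu et al.\ regularity conditions), its inverse $\boldsymbol{K}_{\beta}^{-1}(\boldsymbol{\theta}_0)$ is also fixed and positive definite. Applying the continuous mapping theorem to the quadratic form yields
\[
\boldsymbol{U}_{\beta,n}^T(\boldsymbol{\theta}_0)\boldsymbol{K}_{\beta}^{-1}(\boldsymbol{\theta}_0)\boldsymbol{U}_{\beta,n}(\boldsymbol{\theta}_0)\overset{P}{\longrightarrow}\boldsymbol{\mu}_{\beta}^T(\boldsymbol{\theta},\boldsymbol{\theta}_0)\boldsymbol{K}_{\beta}^{-1}(\boldsymbol{\theta}_0)\boldsymbol{\mu}_{\beta}(\boldsymbol{\theta},\boldsymbol{\theta}_0)=:c>0,
\]
where strict positivity follows from positive definiteness of $\boldsymbol{K}_{\beta}^{-1}(\boldsymbol{\theta}_0)$ combined with $\boldsymbol{\mu}_{\beta}(\boldsymbol{\theta},\boldsymbol{\theta}_0)\neq\boldsymbol{0}_p$.

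Finally, $R_{\beta,n}(\boldsymbol{\theta}_0)=n\cdot[\boldsymbol{U}_{\beta,n}^T(\boldsymbol{\theta}_0)\boldsymbol{K}_{\beta}^{-1}(\boldsymbol{\theta}_0)\boldsymbol{U}_{\beta,n}(\boldsymbol{\theta}_0)]$ is the product of $n\to\infty$ and a quantity converging in probability to the positive constant $c$, so for any fixed threshold $\chi^2_{p,\alpha}$ we obtain $P(R_{\beta,n}(\boldsymbol{\theta}_0)>\chi^2_{p,\alpha})\to 1$. There is no real obstacle here; the only subtle point is checking that $\boldsymbol{K}_{\beta}(\boldsymbol{\theta}_0)$ is invertible and that the LLN applies, both of which are immediate under the standing regularity conditions (the summands have a well-defined mean since the integrand $\boldsymbol{s}_{\boldsymbol{\theta}_0}(x)f_{\boldsymbol{\theta}_0}^{\beta}(x)$ has finite expectation under $\boldsymbol{\theta}$ by the Basu et al.\ conditions used to justify the estimating equation).
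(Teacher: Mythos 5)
Your proposal is correct and follows essentially the same route as the paper: both use the law of large numbers to get $\boldsymbol{U}_{\beta,n}(\boldsymbol{\theta}_0)\overset{P}{\to}E_{\boldsymbol{\theta}}[\boldsymbol{u}_{\beta}(X,\boldsymbol{\theta}_0)]\neq\boldsymbol{0}_p$, note that the resulting quadratic form in $\boldsymbol{K}_{\beta}^{-1}(\boldsymbol{\theta}_0)$ is a strictly positive constant, and conclude that $R_{\beta,n}(\boldsymbol{\theta}_0)=n$ times this quantity exceeds the fixed threshold $\chi^2_{p,\alpha}$ with probability tending to one (the paper phrases this by comparing $\tfrac{1}{n}R_{\beta,n}$ with $\tfrac{1}{n}\chi^2_{p,\alpha}\to 0$). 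Your writeup is, if anything, slightly more explicit about the positive definiteness of $\boldsymbol{K}_{\beta}(\boldsymbol{\theta}_0)$ and the applicability of the LLN, which the paper's proof leaves implicit.
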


\begin{proof}
Because of the convergence %
\[
\boldsymbol{U}_{\beta,n}\left(  \boldsymbol{\theta}_{0}\right)
\underset{n\rightarrow\infty}{\overset{P}{\longrightarrow}}E\left[
\boldsymbol{u}_{\beta}\left(  X,\boldsymbol{\theta}_{0}\right)  \right]  ,
\]
we have%
\begin{align*}
P\left(  R_{\beta,n}\left(  \boldsymbol{\theta}_{0}\right)  >\chi_{p,\alpha
}^{2}\right)   &  =P\left(  \tfrac{1}{n}R_{\beta,n}\left(  \boldsymbol{\theta
}_{0}\right)  >\tfrac{1}{n}\chi_{p,\alpha}^{2}\right) \\
&  \underset{n\rightarrow\infty}{\longrightarrow}I\left(
E_{\boldsymbol{\theta}}\left[  \boldsymbol{u}_{\beta}\left(
X,\boldsymbol{\theta}_{0}\right)  \right]  \boldsymbol{K}_{\beta}^{-1}\left(
\boldsymbol{\theta}\right)  E_{\boldsymbol{\theta}}^{T}\left[  \boldsymbol{u}%
_{\beta}\left(  X,\boldsymbol{\theta}_{0}\right)  \right]  >0\right)  =1,
\end{align*}
where $I(\cdot)$ is an indicator function.
\end{proof}

Now we derive the asymptotic distribution of $R_{\beta,n}\left(
\boldsymbol{\theta}_{0}\right)  $ under local Pitman-type alternative
hypotheses of the form
%%%%%%%%%%%%%%%%%%%%%%%%%%%
\begin{equation}
H_{1,n}:\boldsymbol{\theta}=\boldsymbol{\theta}_{n}, \label{2.4}%
\end{equation}
%%%%%%%%%%%%%%%%%%%%%%%%%%%
where $\boldsymbol{\theta}_{n}=\boldsymbol{\theta}_{0}+n^{-1/2}\boldsymbol{d}%
$. Such results are helpful in determining the asymptotic contiguous power of
the Rao-type tests.

\begin{theorem}
\label{THM:cont_1}Under (\ref{2.4}), the asymptotic distribution of the
Rao-type test statistics $R_{\beta,n}\left(  \boldsymbol{\theta}_{0}\right)  $
is a non-central chi-square distribution with $p$ degrees of freedom and
non-centrality parameter given by
\begin{equation}
\delta_{\beta}(\boldsymbol{\theta}_{0},\boldsymbol{d})=\boldsymbol{d}%
^{T}\boldsymbol{J}_{\beta}\left(  \boldsymbol{\theta}_{0}\right)
\boldsymbol{K}_{\beta}^{-1}\left(  \boldsymbol{\theta}_{0}\right)
\boldsymbol{J}_{\beta}\left(  \boldsymbol{\theta}_{0}\right)  \boldsymbol{d}.
\label{cp}%
\end{equation}

\end{theorem}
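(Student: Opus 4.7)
The plan is to reduce the non-central $\chi^2$ claim to a multivariate normal limit for $\sqrt{n}\,\boldsymbol{U}_{\beta,n}(\boldsymbol{\theta}_0)$ under $H_{1,n}$. Indeed, $R_{\beta,n}(\boldsymbol{\theta}_0)$ is a quadratic form in $\sqrt{n}\,\boldsymbol{U}_{\beta,n}(\boldsymbol{\theta}_0)$ with fixed weighting matrix $\boldsymbol{K}_\beta^{-1}(\boldsymbol{\theta}_0)$, and the standard fact that $\boldsymbol{Z}^T\boldsymbol{\Sigma}^{-1}\boldsymbol{Z} \sim \chi^2_p(\boldsymbol{\mu}^T\boldsymbol{\Sigma}^{-1}\boldsymbol{\mu})$ when $\boldsymbol{Z} \sim \mathcal{N}_p(\boldsymbol{\mu},\boldsymbol{\Sigma})$ tells me that I should aim for an asymptotic mean $\boldsymbol{\mu} = \boldsymbol{J}_\beta(\boldsymbol{\theta}_0)\boldsymbol{d}$ and an asymptotic covariance $\boldsymbol{\Sigma} = \boldsymbol{K}_\beta(\boldsymbol{\theta}_0)$, so that the non-centrality parameter reads off as $\boldsymbol{d}^T\boldsymbol{J}_\beta(\boldsymbol{\theta}_0)\boldsymbol{K}_\beta^{-1}(\boldsymbol{\theta}_0)\boldsymbol{J}_\beta(\boldsymbol{\theta}_0)\boldsymbol{d}$, as in (\ref{cp}).

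Under $H_{1,n}$ the sample is i.i.d.\ from $f_{\boldsymbol{\theta}_n}$, which makes $\boldsymbol{u}_\beta(\cdot,\boldsymbol{\theta}_n)$ an unbiased estimating function at the true parameter. A first-order mean-value expansion of $\boldsymbol{U}_{\beta,n}$ about $\boldsymbol{\theta}_n$ yields
\begin{equation*}
\sqrt{n}\,\boldsymbol{U}_{\beta,n}(\boldsymbol{\theta}_0)
= \sqrt{n}\,\boldsymbol{U}_{\beta,n}(\boldsymbol{\theta}_n)
+ \left[\frac{\partial \boldsymbol{U}_{\beta,n}}{\partial \boldsymbol{\theta}^{T}}(\tilde{\boldsymbol{\theta}}_n)\right] \sqrt{n}(\boldsymbol{\theta}_0-\boldsymbol{\theta}_n),
\end{equation*}
where $\tilde{\boldsymbol{\theta}}_n$ lies on the segment between $\boldsymbol{\theta}_0$ and $\boldsymbol{\theta}_n$, and $\sqrt{n}(\boldsymbol{\theta}_0-\boldsymbol{\theta}_n)=-\boldsymbol{d}$. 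For the leading term, a triangular-array CLT applied to $\{\boldsymbol{u}_\beta(X_i,\boldsymbol{\theta}_n)\}$, whose per-observation mean is $\boldsymbol{0}_p$ and covariance $\boldsymbol{K}_\beta(\boldsymbol{\theta}_n)$ converges to $\boldsymbol{K}_\beta(\boldsymbol{\theta}_0)$ by continuity, delivers $\sqrt{n}\,\boldsymbol{U}_{\beta,n}(\boldsymbol{\theta}_n) \overset{\mathcal{L}}{\longrightarrow} \mathcal{N}(\boldsymbol{0}_p,\boldsymbol{K}_\beta(\boldsymbol{\theta}_0))$. For the derivative term, the uniform weak law of large numbers on a neighborhood of $\boldsymbol{\theta}_0$ (secured by (D5)) combined with $\tilde{\boldsymbol{\theta}}_n \overset{P}{\longrightarrow} \boldsymbol{\theta}_0$ and the under-model identity $E_{\boldsymbol{\theta}}[-\partial \boldsymbol{u}_\beta(X,\boldsymbol{\theta})/\partial\boldsymbol{\theta}^{T}] = \boldsymbol{J}_\beta(\boldsymbol{\theta})$ produces $(\partial \boldsymbol{U}_{\beta,n}/\partial\boldsymbol{\theta}^{T})(\tilde{\boldsymbol{\theta}}_n) \overset{P}{\longrightarrow} -\boldsymbol{J}_\beta(\boldsymbol{\theta}_0)$.

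Combining via Slutsky gives $\sqrt{n}\,\boldsymbol{U}_{\beta,n}(\boldsymbol{\theta}_0) \overset{\mathcal{L}}{\longrightarrow} \mathcal{N}(\boldsymbol{J}_\beta(\boldsymbol{\theta}_0)\boldsymbol{d},\boldsymbol{K}_\beta(\boldsymbol{\theta}_0))$, and the continuous mapping theorem applied to the quadratic form in (\ref{2.3}) delivers the announced $\chi^2_p(\delta_\beta(\boldsymbol{\theta}_0,\boldsymbol{d}))$ limit. The main technical obstacle I anticipate is justifying the triangular-array CLT in the drifting-parameter regime $\boldsymbol{\theta}_n \to \boldsymbol{\theta}_0$: one must verify a Lindeberg (or uniform integrability) condition for $\boldsymbol{u}_\beta(X_i,\boldsymbol{\theta}_n)$ under $f_{\boldsymbol{\theta}_n}$, which in turn rests on continuity and uniform moment bounds ultimately underwritten by (D1)--(D5). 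A clean alternative would be to invoke Le Cam's third lemma: compute the joint null limit of $\sqrt{n}\,\boldsymbol{U}_{\beta,n}(\boldsymbol{\theta}_0)$ together with the log-likelihood ratio $\log\prod_i f_{\boldsymbol{\theta}_n}(X_i)/f_{\boldsymbol{\theta}_0}(X_i)$, exploit LAN and the ensuing mutual contiguity between $H_0$ and $H_{1,n}$, and read off the mean shift under $H_{1,n}$, thereby consolidating all central-limit work into a single null calculation.
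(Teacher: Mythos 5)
Your proposal is correct and takes essentially the same route as the paper: a first-order Taylor expansion of the $\beta$-score linking $\boldsymbol{U}_{\beta,n}(\boldsymbol{\theta}_{0})$ and $\boldsymbol{U}_{\beta,n}(\boldsymbol{\theta}_{n})$, the key limit $\partial\boldsymbol{U}_{\beta,n}(\boldsymbol{\theta})/\partial\boldsymbol{\theta}^{T}\overset{P}{\longrightarrow}-\boldsymbol{J}_{\beta}(\boldsymbol{\theta})$, a CLT plus Slutsky giving an asymptotic $\mathcal{N}\left(\boldsymbol{J}_{\beta}(\boldsymbol{\theta}_{0})\boldsymbol{d},\boldsymbol{K}_{\beta}(\boldsymbol{\theta}_{0})\right)$ law for the score, and the quadratic-form-of-a-Gaussian step yielding $\chi_{p}^{2}\left(\delta_{\beta}(\boldsymbol{\theta}_{0},\boldsymbol{d})\right)$. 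The only differences are cosmetic: you expand about the data-generating $\boldsymbol{\theta}_{n}$ rather than about $\boldsymbol{\theta}_{0}$, and you explicitly flag the triangular-array/contiguity justification (or Le Cam's third lemma) that the paper's bare appeal to ``the CLT'' leaves implicit, which is a refinement of, not a departure from, the paper's argument.
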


\begin{proof}
Consider the Taylor series expansion%
\[
\sqrt{n}\boldsymbol{U}_{\beta,n}(\boldsymbol{\theta}_{n})=\sqrt{n}%
\boldsymbol{U}_{\beta,n}(\boldsymbol{\theta}_{0})+\left.  \frac{\partial
\boldsymbol{U}_{\beta,n}(\boldsymbol{\theta})}{\partial\boldsymbol{\theta}%
^{T}}\right\vert _{\boldsymbol{\theta=\theta}_{n}^{\ast}}\boldsymbol{d}%
\]
where $\boldsymbol{\theta}_{n}^{\ast}$ belongs to the line segment joining
$\boldsymbol{\theta}_{0}$ and $\boldsymbol{\theta}_{0}+\tfrac{1}{\sqrt{n}%
}\boldsymbol{d}$. By the CLT
\[
\sqrt{n}\boldsymbol{U}_{\beta,n}(\boldsymbol{\theta}_{0}%
)\underset{n\rightarrow\infty}{\overset{\mathcal{L}}{\longrightarrow}%
}\mathcal{N}\left(  \boldsymbol{0}_{p},\boldsymbol{K}_{\beta}\left(
\boldsymbol{\theta}_{0}\right)  \right)
\]
and by Khintchine Weak Law of Large Numbers and Slutsky's Theorem
\begin{equation}
\frac{\partial\boldsymbol{U}_{\beta,n}(\boldsymbol{\theta})}{\partial
\boldsymbol{\theta}^{T}}\underset{n\rightarrow\infty
}{\overset{P}{\longrightarrow}}E\left[  \frac{\partial\boldsymbol{u}_{\beta
}\left(  X,\boldsymbol{\theta}\right)  }{\partial\boldsymbol{\theta}^{T}%
}\right]  =-\boldsymbol{J}_{\beta}(\boldsymbol{\theta}). \label{minusJ}%
\end{equation}
The last equality arises from
\begin{align*}
\frac{\partial\boldsymbol{u}_{\beta}\left(  x,\boldsymbol{\theta}\right)
}{\partial\boldsymbol{\theta}^{T}}  &  =\frac{\partial\boldsymbol{s}%
_{\boldsymbol{\theta}}(x)}{\partial\boldsymbol{\theta}^{T}}%
f_{\boldsymbol{\theta}}^{\beta}(x)+\boldsymbol{s}_{\boldsymbol{\theta}%
}(x)f_{\boldsymbol{\theta}}^{\beta-1}(x)\beta\frac{\partial
f_{\boldsymbol{\theta}}(x)}{\partial\boldsymbol{\theta}^{T}}\\
&  -\int_{-\infty}^{+\infty}\frac{\partial\boldsymbol{s}_{\boldsymbol{\theta}%
}(y)}{\partial\boldsymbol{\theta}^{T}}f_{\boldsymbol{\theta}}^{\beta
+1}(y)dy-\int_{-\infty}^{+\infty}\boldsymbol{s}_{\boldsymbol{\theta}%
}(y)f_{\boldsymbol{\theta}}^{\beta}(y)\left(  \beta+1\right)  \frac{\partial
f_{\boldsymbol{\theta}}(y)}{\partial\boldsymbol{\theta}^{T}}dy
\end{align*}
and
\[
E\left[  \frac{\partial\boldsymbol{u}_{\beta}\left(  X,\boldsymbol{\theta
}\right)  }{\partial\boldsymbol{\theta}^{T}}\right]  =-\int_{-\infty}%
^{+\infty}\boldsymbol{s}_{\boldsymbol{\theta}}(x)\boldsymbol{s}%
_{\boldsymbol{\theta}}^{T}(x)f_{\boldsymbol{\theta}}^{\beta+1}%
(x)dx=-\boldsymbol{J}_{\beta}\left(  \boldsymbol{\theta}\right)  .
\]
Therefore,
\[
\left.  \sqrt{n}\boldsymbol{U}_{\beta,n}\left(  \boldsymbol{\theta}\right)
\right\vert _{\boldsymbol{\theta=\theta}_{0}\boldsymbol{+}n^{-1/2}%
\boldsymbol{d}}\underset{n\rightarrow\infty}{\overset{\mathcal{L}%
}{\longrightarrow}}\mathcal{N}\left(  -\boldsymbol{J}_{\beta}\left(
\boldsymbol{\theta}_{0}\right)  \boldsymbol{d},\boldsymbol{K}_{\beta}\left(
\boldsymbol{\theta}_{0}\right)  \right)
\]
and
\[
R_{\beta,n}\left(  \boldsymbol{\theta}_{0}\right)  \underset{n\rightarrow
\infty}{\overset{\mathcal{L}}{\longrightarrow}}\chi_{p}^{2}\left(
\delta_{\beta}(\boldsymbol{\theta}_{0},\boldsymbol{d})\right)  ,
\]
with $\delta_{\beta}(\boldsymbol{\theta}_{0},\boldsymbol{d})$ given by
(\ref{cp}).
\end{proof}

%%%%%%%%%%%%%%%%%%%%%%%%%%%%%%%%%%%%%%%%%%%%%%%%%%%%%%%%%

\section{Rao-type test statistics for composite null hypotheses}

%%%%%%%%%%%%%%%%%%%%%%%%%%%%%%%%%%%%%%%%%%%%%%%%%%%%%%%%%

\label{SEC:composite}

Based on the RMDPDE, we define a Rao-type test statistics for testing (\ref{1}).

\begin{definition}
The Rao-type test statistics of order $\beta$ for testing (\ref{1}) based on
the RMDPDE, $\boldsymbol{\tilde{\theta}}_{\beta}$, is given by
\begin{equation}
\widetilde{R}_{\beta,n}(\boldsymbol{\tilde{\theta}}_{\beta})=n\boldsymbol{U}%
_{\beta,n}^{T}(\boldsymbol{\tilde{\theta}}_{\beta})\boldsymbol{Q}_{\beta
}(\boldsymbol{\tilde{\theta}}_{\beta})\left[  \boldsymbol{Q}_{\beta}%
^{T}(\boldsymbol{\tilde{\theta}}_{\beta})\boldsymbol{K}_{\beta}%
(\boldsymbol{\tilde{\theta}}_{\beta})\boldsymbol{Q}_{\beta}(\boldsymbol{\tilde
{\theta}}_{\beta})\right]  ^{-1}\boldsymbol{Q}_{\beta}^{T}(\boldsymbol{\tilde
{\theta}}_{\beta})\boldsymbol{U}_{\beta,n}(\boldsymbol{\tilde{\theta}}_{\beta
}), \label{RaC}%
\end{equation}
with $\boldsymbol{U}_{\beta,n}(\boldsymbol{\theta})$ as defined in Equation
(\ref{2.1}) and $\boldsymbol{Q}_{\beta}(\boldsymbol{\theta})$ in (\ref{Q}).
\end{definition}

%\bigskip
We will now derive the asymptotic distribution of $\widetilde{R}_{\beta
,n}(\boldsymbol{\tilde{\theta}}_{\beta})$.

\begin{theorem}
Let $X_{1},\ldots,X_{n}$ be i.i.d. random variables with density function
$f_{\boldsymbol{\theta}}(x)$, $\boldsymbol{\theta}\in\Theta\subset
\mathbb{R}^{p}$ satisfying some regularity conditions and consider the problem
of testing (\ref{1}). The Rao-type test statistics $\widetilde{R}_{\beta
,n}(\boldsymbol{\tilde{\theta}}_{\beta})$ has an asymptotic chi-square
distribution with $r$ degrees of freedom under $H_{0}$ given in (\ref{1}).
\end{theorem}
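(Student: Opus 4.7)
The plan is to reduce $\widetilde{R}_{\beta,n}(\boldsymbol{\tilde{\theta}}_{\beta})$ to a standardised quadratic form in $\sqrt{n}\,\boldsymbol{Q}_{\beta}^{T}(\boldsymbol{\theta}_{0})\boldsymbol{U}_{\beta,n}(\boldsymbol{\theta}_{0})$, up to an $o_{P}(1)$ remainder, and then invoke the CLT (\ref{2.0}) together with the classical fact that $\boldsymbol{Z}^{T}\boldsymbol{\Sigma}^{-1}\boldsymbol{Z}\sim\chi_{r}^{2}$ whenever $\boldsymbol{Z}\sim\mathcal{N}_{r}(\boldsymbol{0}_{r},\boldsymbol{\Sigma})$. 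Throughout, let $\boldsymbol{\theta}_{0}\in\Theta_{0}$ be the true parameter under $H_{0}$, and write $\boldsymbol{J}_{\beta}=\boldsymbol{J}_{\beta}(\boldsymbol{\theta}_{0})$, $\boldsymbol{P}_{\beta}=\boldsymbol{P}_{\beta}(\boldsymbol{\theta}_{0})$, $\boldsymbol{K}_{\beta}=\boldsymbol{K}_{\beta}(\boldsymbol{\theta}_{0})$, $\boldsymbol{Q}_{\beta}=\boldsymbol{Q}_{\beta}(\boldsymbol{\theta}_{0})$, $\boldsymbol{M}=\boldsymbol{M}(\boldsymbol{\theta}_{0})$ for brevity.

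First I would expand $\sqrt{n}\,\boldsymbol{U}_{\beta,n}(\boldsymbol{\tilde{\theta}}_{\beta})$ around $\boldsymbol{\theta}_{0}$ to first order; reusing the computation (\ref{minusJ}) from the proof of Theorem~\ref{THM:cont_1} together with the consistency of $\boldsymbol{\tilde{\theta}}_{\beta}$ for $\boldsymbol{\theta}_{0}$, this produces
\[
\sqrt{n}\,\boldsymbol{U}_{\beta,n}(\boldsymbol{\tilde{\theta}}_{\beta})=\sqrt{n}\,\boldsymbol{U}_{\beta,n}(\boldsymbol{\theta}_{0})-\boldsymbol{J}_{\beta}\sqrt{n}(\boldsymbol{\tilde{\theta}}_{\beta}-\boldsymbol{\theta}_{0})+o_{P}(1).
\]
Next I would invoke the Bahadur-type expansion of the RMDPDE underlying Theorem~2, namely
\[
\sqrt{n}(\boldsymbol{\tilde{\theta}}_{\beta}-\boldsymbol{\theta}_{0})=\boldsymbol{P}_{\beta}\sqrt{n}\,\boldsymbol{U}_{\beta,n}(\boldsymbol{\theta}_{0})+o_{P}(1),
\]
which is obtained by linearising the Lagrangian system $\boldsymbol{U}_{\beta,n}(\boldsymbol{\tilde{\theta}}_{\beta})+\boldsymbol{M}(\boldsymbol{\tilde{\theta}}_{\beta})\boldsymbol{\lambda}_{n}=\boldsymbol{0}_{p}$, $\boldsymbol{m}(\boldsymbol{\tilde{\theta}}_{\beta})=\boldsymbol{0}_{r}$ (as in Basu et al., 2014) and is consistent with the limiting covariance $\boldsymbol{P}_{\beta}\boldsymbol{K}_{\beta}\boldsymbol{P}_{\beta}$ stated there. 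Substituting, I arrive at
\[
\sqrt{n}\,\boldsymbol{U}_{\beta,n}(\boldsymbol{\tilde{\theta}}_{\beta})=(\boldsymbol{I}_{p}-\boldsymbol{J}_{\beta}\boldsymbol{P}_{\beta})\sqrt{n}\,\boldsymbol{U}_{\beta,n}(\boldsymbol{\theta}_{0})+o_{P}(1).
\]

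The decisive step is to left-multiply this identity by $\boldsymbol{Q}_{\beta}^{T}$. A direct computation from (\ref{P}) and (\ref{Q}) gives $\boldsymbol{I}_{p}-\boldsymbol{J}_{\beta}\boldsymbol{P}_{\beta}=\boldsymbol{J}_{\beta}\boldsymbol{Q}_{\beta}\boldsymbol{M}^{T}\boldsymbol{J}_{\beta}^{-1}$ and $\boldsymbol{Q}_{\beta}^{T}\boldsymbol{J}_{\beta}\boldsymbol{Q}_{\beta}=[\boldsymbol{M}^{T}\boldsymbol{J}_{\beta}^{-1}\boldsymbol{M}]^{-1}$, which combine to yield $\boldsymbol{Q}_{\beta}^{T}(\boldsymbol{I}_{p}-\boldsymbol{J}_{\beta}\boldsymbol{P}_{\beta})=\boldsymbol{Q}_{\beta}^{T}$. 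Using this identity, continuity of $\boldsymbol{Q}_{\beta}(\cdot)$ and $\boldsymbol{K}_{\beta}(\cdot)$ at $\boldsymbol{\theta}_{0}$, the consistency of $\boldsymbol{\tilde{\theta}}_{\beta}$, Slutsky's theorem and (\ref{2.0}), I obtain
\[
\sqrt{n}\,\boldsymbol{Q}_{\beta}^{T}(\boldsymbol{\tilde{\theta}}_{\beta})\boldsymbol{U}_{\beta,n}(\boldsymbol{\tilde{\theta}}_{\beta})\overset{\mathcal{L}}{\longrightarrow}\mathcal{N}_{r}(\boldsymbol{0}_{r},\boldsymbol{Q}_{\beta}^{T}\boldsymbol{K}_{\beta}\boldsymbol{Q}_{\beta}).
\]
Since $\widetilde{R}_{\beta,n}(\boldsymbol{\tilde{\theta}}_{\beta})$ is by definition the quadratic form in this vector with weight $[\boldsymbol{Q}_{\beta}^{T}\boldsymbol{K}_{\beta}\boldsymbol{Q}_{\beta}]^{-1}$ consistently plugged-in at $\boldsymbol{\tilde{\theta}}_{\beta}$, and since the full column rank of $\boldsymbol{M}$ combined with positive definiteness of $\boldsymbol{K}_{\beta}$ makes $\boldsymbol{Q}_{\beta}^{T}\boldsymbol{K}_{\beta}\boldsymbol{Q}_{\beta}$ an invertible $r\times r$ matrix, the limit is $\chi_{r}^{2}$.

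The hard part will be the second step: rigorously establishing the Bahadur-type representation for $\boldsymbol{\tilde{\theta}}_{\beta}$ under the Basu et al.\ regularity conditions requires linearising the constrained estimating equations, eliminating the Lagrange multiplier $\boldsymbol{\lambda}_{n}$, and controlling the remainder uniformly on a neighbourhood of $\boldsymbol{\theta}_{0}$. Once this expansion is in place, the remainder of the proof is the algebraic identity $\boldsymbol{Q}_{\beta}^{T}(\boldsymbol{I}_{p}-\boldsymbol{J}_{\beta}\boldsymbol{P}_{\beta})=\boldsymbol{Q}_{\beta}^{T}$ together with routine Slutsky-type manipulations, and the degrees of freedom are automatically $r$ from the dimension of $\boldsymbol{Q}_{\beta}^{T}\boldsymbol{U}_{\beta,n}$.
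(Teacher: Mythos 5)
Your proposal is correct, and it reaches the same reduction as the paper --- a quadratic form in an asymptotically $\mathcal{N}_{r}(\boldsymbol{0}_{r},\boldsymbol{Q}_{\beta}^{T}\boldsymbol{K}_{\beta}\boldsymbol{Q}_{\beta})$ vector --- but it is organized around a different pivot. The paper never isolates a Bahadur expansion for $\boldsymbol{\tilde{\theta}}_{\beta}$: it linearises the full Lagrangian system (\ref{E}) jointly in $\bigl(\sqrt{n}(\boldsymbol{\tilde{\theta}}_{\beta}-\boldsymbol{\theta}_{0}),\sqrt{n}\boldsymbol{\tilde{\lambda}}_{\beta,n}\bigr)$, inverts the block matrix built from $\boldsymbol{J}_{\beta}$ and $\boldsymbol{M}$, extracts the limit law of the multiplier, $\sqrt{n}\boldsymbol{\tilde{\lambda}}_{\beta,n}\rightarrow\mathcal{N}(\boldsymbol{0}_{r},\boldsymbol{Q}_{\beta}^{T}\boldsymbol{K}_{\beta}\boldsymbol{Q}_{\beta})$ as in (\ref{Ni3}), and then uses the \emph{exact} identities $\boldsymbol{U}_{\beta,n}(\boldsymbol{\tilde{\theta}}_{\beta})=-\boldsymbol{M}(\boldsymbol{\tilde{\theta}}_{\beta})\boldsymbol{\tilde{\lambda}}_{\beta,n}$ and $\boldsymbol{M}^{T}\boldsymbol{Q}_{\beta}=\boldsymbol{I}_{r}$ to rewrite $\widetilde{R}_{\beta,n}$ exactly as the quadratic form (\ref{EQ:LM_test_Gen}) in $\boldsymbol{\tilde{\lambda}}_{\beta,n}$. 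You instead eliminate the multipliers up front by quoting the constrained-estimator expansion $\sqrt{n}(\boldsymbol{\tilde{\theta}}_{\beta}-\boldsymbol{\theta}_{0})=\boldsymbol{P}_{\beta}\sqrt{n}\boldsymbol{U}_{\beta,n}(\boldsymbol{\theta}_{0})+o_{P}(1)$, push it through the score expansion based on (\ref{minusJ}), and close the argument with the projection identity $\boldsymbol{Q}_{\beta}^{T}(\boldsymbol{I}_{p}-\boldsymbol{J}_{\beta}\boldsymbol{P}_{\beta})=\boldsymbol{Q}_{\beta}^{T}$, which checks out from (\ref{P}) and (\ref{Q}); note that your target vector $\boldsymbol{Q}_{\beta}^{T}(\boldsymbol{\tilde{\theta}}_{\beta})\boldsymbol{U}_{\beta,n}(\boldsymbol{\tilde{\theta}}_{\beta})$ equals $-\boldsymbol{\tilde{\lambda}}_{\beta,n}$ identically, so the two arguments handle the same random vector, just derive its limit differently. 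What your route buys is modularity: the ``hard part'' you flag is exactly the content of the RMDPDE asymptotics already available from Basu et al.~(2014) (the theorem of Section 2), so once that is cited the rest is linear algebra plus Slutsky, and no separate limit theorem for the multipliers is needed; what the paper's route buys is that the statistic is an exact, not merely asymptotic, function of $\boldsymbol{\tilde{\lambda}}_{\beta,n}$, which makes the reduction cleaner and exposes the Lagrange-multiplier interpretation exploited in the subsequent remark. One small point in your favour: your sign convention $+\boldsymbol{P}_{\beta}\sqrt{n}\boldsymbol{U}_{\beta,n}(\boldsymbol{\theta}_{0})$ is the correct one (a careful block inversion gives $-\boldsymbol{P}_{\beta}$ in the top-left corner of the inverse, so the two minus signs cancel); the sign is immaterial for the covariance and hence for the $\chi_{r}^{2}$ conclusion in either write-up.
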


\begin{proof}
The RMDPDE, $\boldsymbol{\tilde{\theta}}_{\beta}$, must satisfy the restricted
equations on $\boldsymbol{\theta}$
%%%
\begin{equation}
\boldsymbol{\tilde{U}}_{\beta,n}(\boldsymbol{\theta})=\boldsymbol{0}_{p}
\label{E}%
\end{equation}
and $\boldsymbol{m}(\boldsymbol{\theta})=\boldsymbol{0}_{r}$, where%
\begin{equation}
\boldsymbol{\tilde{U}}_{\beta,n}(\boldsymbol{\theta})=\boldsymbol{U}_{\beta
,n}(\boldsymbol{\theta})+\boldsymbol{M}(\boldsymbol{\theta})\boldsymbol{\tilde
{\lambda}}_{\beta,n}, \label{RestrScore}%
\end{equation}
%%%
with $\boldsymbol{\tilde{\lambda}}_{\beta,n}=\boldsymbol{\tilde{\lambda}%
}_{\beta,n}(X_{1},...,X_{n},\boldsymbol{\theta})\in\mathbb{R}^{r}$
%$\boldsymbol{\lambda }_{n}$
being the vector of Lagrangian multipliers associated to $\beta\in%
%TCIMACRO{\U{211d} }%
%BeginExpansion
\mathbb{R}
%EndExpansion
^{+}$ and $\boldsymbol{U}_{\beta,n}(\boldsymbol{\theta})$\ was given in
(\ref{2.1}).
%with
%$\boldsymbol{\lambda }_{n}\in
%\mathbb{R}
%^{r}$.
We consider the Taylor expansion of $\boldsymbol{U}_{\beta,n}%
(\boldsymbol{\tilde{\theta}}_{\beta})$ at the point $\boldsymbol{\theta}_{0}%
$,
%%%%%%%%%%%%%%%%%%%%%%%%%%%%%%
\[
\boldsymbol{U}_{\beta,n}(\boldsymbol{\tilde{\theta}}_{\beta})=\boldsymbol{U}%
_{\beta,n}(\boldsymbol{\theta}_{0})+\left.  \frac{\partial}{\partial
\boldsymbol{\theta}}\boldsymbol{U}_{\beta,n}^{T}(\boldsymbol{\theta
})\right\vert _{\boldsymbol{\theta}=\boldsymbol{\theta}_{0}}%
(\boldsymbol{\tilde{\theta}}_{\beta}-\boldsymbol{\theta}_{0})+o\left(
||\boldsymbol{\tilde{\theta}}_{\beta}-\boldsymbol{\theta}_{0}||^{2}%
\boldsymbol{1}_{p}\right)  .
\]
Therefore,
\[
\sqrt{n}\boldsymbol{U}_{\beta,n}(\boldsymbol{\tilde{\theta}}_{\beta})=\sqrt
{n}\boldsymbol{U}_{\beta,n}(\boldsymbol{\theta}_{0})+\sqrt{n}\left.
\frac{\partial}{\partial\boldsymbol{\theta}}\boldsymbol{U}_{\beta,n}%
^{T}(\boldsymbol{\theta})\right\vert _{\boldsymbol{\theta}=\boldsymbol{\theta
}_{0}}(\boldsymbol{\tilde{\theta}}_{\beta}-\boldsymbol{\theta}_{0})+o\left(
\sqrt{n}||\boldsymbol{\tilde{\theta}}_{\beta}-\boldsymbol{\theta}_{0}%
||^{2}\boldsymbol{1}_{p}\right)  .
\]
%%%%%%%%%%%%%%%%%%%%%%%%%%%%%%
By (\ref{minusJ}) it holds,%
\[
\sqrt{n}\boldsymbol{U}_{\beta,n}(\boldsymbol{\tilde{\theta}}_{\beta})=\sqrt
{n}\boldsymbol{U}_{\beta,n}(\boldsymbol{\theta}_{0})-\sqrt{n}\boldsymbol{J}%
_{\beta}(\boldsymbol{\theta}_{0})(\boldsymbol{\tilde{\theta}}_{\beta
}-\boldsymbol{\theta}_{0})+o\left(  \sqrt{n}||\boldsymbol{\tilde{\theta}%
}_{\beta}-\boldsymbol{\theta}_{0}||^{2}\boldsymbol{1}_{p}\right)  +o_{P}(\boldsymbol{1}_{p}).
\]
A Taylor expansion of $\boldsymbol{m}(\boldsymbol{\tilde{\theta}}_{\beta})$
around the point $\boldsymbol{\theta}_{0}$ gives
\begin{equation}
\sqrt{n}\boldsymbol{m}(\boldsymbol{\tilde{\theta}}_{\beta})=\boldsymbol{M}%
^{T}(\boldsymbol{\theta}_{0})\sqrt{n}(\boldsymbol{\tilde{\theta}}_{\beta
}-\boldsymbol{\theta}_{0})+o_{P}(\boldsymbol{1}_{r}). \label{13}%
\end{equation}
By (\ref{E}), we have
\begin{equation}
\sqrt{n}\boldsymbol{U}_{\beta,n}(\boldsymbol{\theta}_{0})-\sqrt{n}%
\boldsymbol{J}_{\beta}(\boldsymbol{\theta}_{0})(\boldsymbol{\tilde{\theta}%
}_{\beta}-\boldsymbol{\theta}_{0})+\boldsymbol{M}(\boldsymbol{\theta}%
_{0})\sqrt{n}\boldsymbol{\tilde{\lambda}}_{\beta,n}=o_{P}(\boldsymbol{1}_{p}),
\label{14}%
\end{equation}
and by (\ref{13})
\begin{equation}
\boldsymbol{M}^{T}(\boldsymbol{\theta}_{0})\sqrt{n}(\boldsymbol{\tilde{\theta
}}_{\beta}-\boldsymbol{\theta}_{0})+o_{P}(1)=\boldsymbol{0}_{r}. \label{15}%
\end{equation}
Now we are going to write (\ref{14}) and (\ref{15}) in a matrix form as%
\[%
\begin{pmatrix}
-\boldsymbol{J}_{\beta}(\boldsymbol{\theta}_{0}) & \boldsymbol{M}%
(\boldsymbol{\theta}_{0})\\
\boldsymbol{M}^{T}(\boldsymbol{\theta}_{0}) & \boldsymbol{O}_{r\times r}%
\end{pmatrix}%
\begin{pmatrix}
\sqrt{n}(\boldsymbol{\tilde{\theta}}_{\beta}-\boldsymbol{\theta}_{0})\\
\sqrt{n}\boldsymbol{\tilde{\lambda}}_{\beta,n}%
\end{pmatrix}
=%
\begin{pmatrix}
-\sqrt{n}\boldsymbol{U}_{\beta,n}(\boldsymbol{\theta}_{0})\\
\boldsymbol{0}_{r}%
\end{pmatrix}
+o_{P}(\boldsymbol{1}_{p+r}).
\]
Therefore,%
\[%
\begin{pmatrix}
\sqrt{n}(\boldsymbol{\tilde{\theta}}_{\beta}-\boldsymbol{\theta}_{0})\\
\sqrt{n}\boldsymbol{\tilde{\lambda}}_{\beta,n}%
\end{pmatrix}
=%
\begin{pmatrix}
-\boldsymbol{J}_{\beta}(\boldsymbol{\theta}_{0}) & \boldsymbol{M}%
(\boldsymbol{\theta}_{0})\\
\boldsymbol{M}^{T}(\boldsymbol{\theta}_{0}) & \boldsymbol{O}_{r\times r}%
\end{pmatrix}
^{-1}%
\begin{pmatrix}
-\sqrt{n}\boldsymbol{U}_{\beta,n}(\boldsymbol{\theta}_{0})\\
\boldsymbol{0}_{r}%
\end{pmatrix}
+o_{P}(\boldsymbol{1}_{p+r}),
\]
i.e.,%
\[%
\begin{pmatrix}
\sqrt{n}(\boldsymbol{\tilde{\theta}}_{\beta}-\boldsymbol{\theta}_{0})\\
\sqrt{n}\boldsymbol{\tilde{\lambda}}_{\beta,n}%
\end{pmatrix}
=%
\begin{pmatrix}
\boldsymbol{P}_{\beta}(\boldsymbol{\theta}_{0}) & \boldsymbol{Q}_{\beta
}(\boldsymbol{\theta}_{0})\\
\boldsymbol{Q}_{\beta}^{T}(\boldsymbol{\theta}_{0}) & \boldsymbol{R}_{\beta
}(\boldsymbol{\theta}_{0})
\end{pmatrix}%
\begin{pmatrix}
-\sqrt{n}\boldsymbol{U}_{\beta,n}(\boldsymbol{\theta}_{0})\\
\boldsymbol{0}_{r}%
\end{pmatrix}
+o_{P}(\boldsymbol{1}_{p+r}),
\]
with $\boldsymbol{R}_{\beta}(\boldsymbol{\theta}_{0})=\left(  \boldsymbol{M}%
_{\beta}^{T}(\boldsymbol{\tilde{\theta}})\boldsymbol{J}_{\beta}^{-1}%
(\boldsymbol{\tilde{\theta}})\boldsymbol{M}_{\beta}(\boldsymbol{\tilde{\theta
}})\right)  ^{-1}$.\ The matrices $\boldsymbol{P}_{\beta}(\boldsymbol{\theta
}_{0})$ and $\boldsymbol{Q}_{\beta}(\boldsymbol{\theta}_{0})$ were defined in
(\ref{P}) and (\ref{Q}). But,
%%%%%%%%%%%%%%%%%%%%%%%%%%%%%%
\[%
\begin{pmatrix}
\sqrt{n}\boldsymbol{U}_{\beta,n}(\boldsymbol{\theta}_{0})\\
\boldsymbol{0}_{r}%
\end{pmatrix}
\underset{n\rightarrow\infty}{\overset{\mathcal{L}}{\longrightarrow}%
}\mathcal{N}\left(  \boldsymbol{0}_{p+r},%
\begin{pmatrix}
\boldsymbol{K}_{\beta}({\boldsymbol{\theta}}_{0}) & \boldsymbol{O}_{p\times
p}\\
\boldsymbol{O}_{r\times r} & \boldsymbol{O}_{r\times p}%
\end{pmatrix}
\right)  .
\]
%%%%%%%%%%%%%%%%%%%%%%%%%%%%%%
Thus,
\[%
\begin{pmatrix}
\sqrt{n}(\boldsymbol{\tilde{\theta}}_{\beta}-\boldsymbol{\theta}_{0})\\
\sqrt{n}\boldsymbol{\tilde{\lambda}}_{\beta,n}%
\end{pmatrix}
\underset{n\rightarrow\infty}{\overset{\mathcal{L}}{\longrightarrow}%
}\mathcal{N}\left(  \boldsymbol{0}_{p+r},\boldsymbol{\Sigma}_{\beta
}({\boldsymbol{\theta}}_{0})\right)  ,
\]
where
\begin{align}
\boldsymbol{\Sigma}_{\beta}({\boldsymbol{\theta}}_{0})  &  =%
\begin{pmatrix}
\boldsymbol{P}_{\beta}(\boldsymbol{\theta}_{0}) & \boldsymbol{Q}_{\beta
}(\boldsymbol{\theta}_{0})\\
\boldsymbol{Q}_{\beta}^{T}(\boldsymbol{\theta}_{0}) & \boldsymbol{R}_{\beta
}(\boldsymbol{\theta}_{0})
\end{pmatrix}%
\begin{pmatrix}
\boldsymbol{K}_{\beta}({\boldsymbol{\theta}}_{0}) & \boldsymbol{O}_{p\times
p}\\
\boldsymbol{O}_{r\times r} & \boldsymbol{O}_{r\times p}%
\end{pmatrix}%
\begin{pmatrix}
\boldsymbol{P}_{\beta}^{T}(\boldsymbol{\theta}_{0}) & \boldsymbol{Q}_{\beta
}(\boldsymbol{\theta}_{0})\\
\boldsymbol{Q}_{\beta}^{T}(\boldsymbol{\theta}_{0}) & \boldsymbol{R}_{\beta
}^{T}(\boldsymbol{\theta}_{0})
\end{pmatrix}
\nonumber\\
&  =%
\begin{pmatrix}
\boldsymbol{P}_{\beta}(\boldsymbol{\theta}_{0})\boldsymbol{K}_{\beta
}(\boldsymbol{\theta}_{0})\boldsymbol{P}_{\beta}^{T}(\boldsymbol{\theta}%
_{0}) & \boldsymbol{P}_{\beta}(\boldsymbol{\theta}_{0})\boldsymbol{K}_{\beta
}(\boldsymbol{\theta}_{0})\boldsymbol{Q}_{\beta}(\boldsymbol{\theta}_{0})\\
\boldsymbol{Q}_{\beta}^{T}(\boldsymbol{\theta}_{0})\boldsymbol{K}_{\beta
}(\boldsymbol{\theta}_{0})\boldsymbol{P}_{\beta}^{T}(\boldsymbol{\theta}%
_{0}) & \boldsymbol{Q}_{\beta}^{T}(\boldsymbol{\theta}_{0})\boldsymbol{K}%
_{\beta}(\boldsymbol{\theta}_{0})\boldsymbol{Q}_{\beta}(\boldsymbol{\theta
}_{0})
\end{pmatrix}
.\nonumber
\end{align}
Then,
\begin{equation}
\sqrt{n}\boldsymbol{\tilde{\lambda}}_{\beta,n}\underset{n\rightarrow
\infty}{\overset{\mathcal{L}}{\longrightarrow}}\mathcal{N}\left(
\boldsymbol{0}_{r},\boldsymbol{Q}_{\beta}^{T}(\boldsymbol{\theta}%
_{0})\boldsymbol{K}_{\beta}(\boldsymbol{\theta}_{0})\boldsymbol{Q}_{\beta
}(\boldsymbol{\theta}_{0})\right)  . \label{Ni3}%
\end{equation}
From (\ref{E}), we know that $\boldsymbol{U}_{\beta,n}(\boldsymbol{\tilde
{\theta}}_{\beta})=-\boldsymbol{M}(\boldsymbol{\tilde{\theta}}_{\beta
})\boldsymbol{\tilde{\lambda}}_{\beta,n}$,
%+o_{P}(1)$
and hence
%%%%%%%%%%%%%%%%%%%%%%%%%%%%%%
\begin{align*}
\boldsymbol{U}_{\beta,n}^{T}(\boldsymbol{\tilde{\theta}}_{\beta}%
)\boldsymbol{Q}_{\beta}(\boldsymbol{\tilde{\theta}}_{\beta})  &
=-\boldsymbol{\tilde{\lambda}}_{\beta,n}^{T}\boldsymbol{M}^{T}%
(\boldsymbol{\tilde{\theta}}_{\beta})\boldsymbol{Q}_{\beta}(\boldsymbol{\tilde
{\theta}}_{\beta})\\
&  =-\boldsymbol{\tilde{\lambda}}_{\beta,n}^{T}\boldsymbol{M}^{T}%
(\boldsymbol{\tilde{\theta}}_{\beta})\boldsymbol{J}_{\beta}^{-1}%
(\boldsymbol{\tilde{\theta}}_{\beta})\boldsymbol{M}(\boldsymbol{\tilde{\theta
}}_{\beta})\left[  \boldsymbol{M}(\boldsymbol{\tilde{\theta}}_{\beta
})\boldsymbol{J}_{\beta}^{-1}(\boldsymbol{\tilde{\theta}}_{\beta
})\boldsymbol{M}(\boldsymbol{\tilde{\theta}}_{\beta})\right]  ^{-1}\\
&  =-\boldsymbol{\tilde{\lambda}}_{\beta,n}^{T}.
\end{align*}
Therefore%
\begin{equation}
\widetilde{R}_{\beta,n}(\boldsymbol{\tilde{\theta}}_{\beta}%
)=n\boldsymbol{\tilde{\lambda}}_{\beta,n}^{T}\left[  \boldsymbol{Q}_{\beta
}^{T}(\boldsymbol{\tilde{\theta}}_{\beta})\boldsymbol{K}_{\beta}%
(\boldsymbol{\tilde{\theta}}_{\beta})\boldsymbol{Q}_{\beta}(\boldsymbol{\tilde
{\theta}}_{\beta})\right]  ^{-1}\boldsymbol{\tilde{\lambda}}_{\beta,n}
\label{EQ:LM_test_Gen}%
\end{equation}
and now the result follows by (\ref{Ni3}).\bigskip
\end{proof}

It is easily seen that the RMDPDE coincides with the restricted maximum
likelihood estimator of $\boldsymbol{\theta}$ (RMLE), $\boldsymbol{\tilde
{\theta}}$, for $\beta=0$. In the next proposition it is proved that we
recover the classical Rao test statistic (see Rao, 2005) in such a case.

\begin{proposition}
For $\beta=0$, the Rao-type test statistic in Equation (\ref{RaC}),
$\widetilde{R}_{\beta=0,n}(\boldsymbol{\tilde{\theta}}_{\beta})$, coincides
with the classical Rao test statistic $\widetilde{R}_{n}(\boldsymbol{\tilde
{\theta}})$.
\end{proposition}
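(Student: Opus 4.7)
The plan is to verify that every ingredient of $\widetilde{R}_{\beta,n}(\boldsymbol{\tilde{\theta}}_{\beta})$ specializes at $\beta=0$ to the corresponding MLE quantity, and then perform a short algebraic simplification of the resulting quadratic form. First I would substitute $\beta=0$ in the $\beta$-score (\ref{EQ:score_function}). Since $f_{\boldsymbol{\theta}}^{0}(x)\equiv 1$ and the expected likelihood score vanishes under the Basu et al.~regularity conditions, $\boldsymbol{u}_{0}(x,\boldsymbol{\theta})=\boldsymbol{s}_{\boldsymbol{\theta}}(x)-\int\boldsymbol{s}_{\boldsymbol{\theta}}(y)f_{\boldsymbol{\theta}}(y)dy=\boldsymbol{s}_{\boldsymbol{\theta}}(x)$, so that $\boldsymbol{U}_{0,n}(\boldsymbol{\theta})=\frac{1}{n}\sum_{i=1}^{n}\boldsymbol{s}_{\boldsymbol{\theta}}(X_{i})$ is the normalized likelihood score. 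I would also record the stated fact that at $\beta=0$ the density power divergence reduces to the Kullback--Leibler divergence, so the constrained minimization defining $\boldsymbol{\tilde{\theta}}_{\beta=0}$ coincides with the constrained log-likelihood maximization defining $\boldsymbol{\tilde{\theta}}$.

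Next I would specialize (\ref{model_variance})--(\ref{xi_function}) at $\beta=0$. One immediately obtains $\boldsymbol{J}_{0}(\boldsymbol{\theta})=\int\boldsymbol{s}_{\boldsymbol{\theta}}\boldsymbol{s}_{\boldsymbol{\theta}}^{T}f_{\boldsymbol{\theta}}dx=\boldsymbol{I}_{F}(\boldsymbol{\theta})$, the Fisher information; and since $\boldsymbol{\xi}_{0}(\boldsymbol{\theta})=\int\boldsymbol{s}_{\boldsymbol{\theta}}f_{\boldsymbol{\theta}}dx=\boldsymbol{0}_{p}$ one also gets $\boldsymbol{K}_{0}(\boldsymbol{\theta})=\boldsymbol{I}_{F}(\boldsymbol{\theta})$. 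Inserting these in (\ref{Q}) yields $\boldsymbol{Q}_{0}(\boldsymbol{\theta})=\boldsymbol{I}_{F}^{-1}(\boldsymbol{\theta})\boldsymbol{M}(\boldsymbol{\theta})\bigl[\boldsymbol{M}^{T}(\boldsymbol{\theta})\boldsymbol{I}_{F}^{-1}(\boldsymbol{\theta})\boldsymbol{M}(\boldsymbol{\theta})\bigr]^{-1}$.

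Then I would plug these expressions into (\ref{RaC}) evaluated at $\boldsymbol{\tilde{\theta}}_{\beta=0}=\boldsymbol{\tilde{\theta}}$. The inner sandwich collapses:
\[
\boldsymbol{Q}_{0}^{T}(\boldsymbol{\tilde{\theta}})\boldsymbol{K}_{0}(\boldsymbol{\tilde{\theta}})\boldsymbol{Q}_{0}(\boldsymbol{\tilde{\theta}})=\bigl[\boldsymbol{M}^{T}\boldsymbol{I}_{F}^{-1}\boldsymbol{M}\bigr]^{-1}\boldsymbol{M}^{T}\boldsymbol{I}_{F}^{-1}\,\boldsymbol{I}_{F}\,\boldsymbol{I}_{F}^{-1}\boldsymbol{M}\bigl[\boldsymbol{M}^{T}\boldsymbol{I}_{F}^{-1}\boldsymbol{M}\bigr]^{-1}=\bigl[\boldsymbol{M}^{T}\boldsymbol{I}_{F}^{-1}\boldsymbol{M}\bigr]^{-1},
\]
so $\widetilde{R}_{\beta=0,n}(\boldsymbol{\tilde{\theta}})=n\,\boldsymbol{U}_{0,n}^{T}(\boldsymbol{\tilde{\theta}})\boldsymbol{I}_{F}^{-1}(\boldsymbol{\tilde{\theta}})\boldsymbol{M}(\boldsymbol{\tilde{\theta}})\bigl[\boldsymbol{M}^{T}\boldsymbol{I}_{F}^{-1}\boldsymbol{M}\bigr]^{-1}\boldsymbol{M}^{T}\boldsymbol{I}_{F}^{-1}(\boldsymbol{\tilde{\theta}})\boldsymbol{U}_{0,n}(\boldsymbol{\tilde{\theta}})$, which is precisely the Lagrange-multiplier form of the classical Rao statistic given in Rao (2005). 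Equivalently, from the restricted estimating equation (\ref{E}) the MLE score at $\boldsymbol{\tilde{\theta}}$ lies in the column space of $\boldsymbol{M}(\boldsymbol{\tilde{\theta}})$, namely $\boldsymbol{U}_{0,n}(\boldsymbol{\tilde{\theta}})=-\boldsymbol{M}(\boldsymbol{\tilde{\theta}})\boldsymbol{\tilde{\lambda}}_{n}$, which allows one to rewrite the above as $n\,\boldsymbol{U}_{0,n}^{T}(\boldsymbol{\tilde{\theta}})\boldsymbol{I}_{F}^{-1}(\boldsymbol{\tilde{\theta}})\boldsymbol{U}_{0,n}(\boldsymbol{\tilde{\theta}})$, matching the other common presentation of the classical Rao test.

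There is no real technical obstacle here; the argument is essentially bookkeeping. The only mild care needed is to correctly identify that the $\beta=0$ case of the integrals defining $\boldsymbol{J}_{\beta}$ and $\boldsymbol{K}_{\beta}$ both collapse to $\boldsymbol{I}_{F}$, and to cite the Lagrangian restricted score identity $\boldsymbol{U}_{0,n}(\boldsymbol{\tilde{\theta}})=-\boldsymbol{M}(\boldsymbol{\tilde{\theta}})\boldsymbol{\tilde{\lambda}}_{n}$ from the proof of the previous theorem in order to confirm agreement with whichever of the two equivalent standard presentations of the classical Rao test one adopts.
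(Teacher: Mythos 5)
Your proof is correct and follows essentially the same route as the paper: specialize $\boldsymbol{J}_{\beta}$ and $\boldsymbol{K}_{\beta}$ at $\beta=0$ to the Fisher information, collapse $\boldsymbol{Q}_{0}^{T}\boldsymbol{K}_{0}\boldsymbol{Q}_{0}$ to $\left[\boldsymbol{M}^{T}\boldsymbol{I}^{-1}\boldsymbol{M}\right]^{-1}$, and identify the resulting quadratic form in the likelihood score with the classical Rao statistic. Your further rewriting as $n\boldsymbol{U}_{0,n}^{T}(\boldsymbol{\tilde{\theta}})\boldsymbol{I}^{-1}(\boldsymbol{\tilde{\theta}})\boldsymbol{U}_{0,n}(\boldsymbol{\tilde{\theta}})$ via the Lagrangian identity is a valid extra observation not needed by the paper's proof.
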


\begin{proof}
It is immediately seen that, at the model for $\beta=0$, we have
$\boldsymbol{K}_{\beta=0}(\boldsymbol{\theta})=\boldsymbol{J}_{\beta
=0}(\boldsymbol{\theta})=\boldsymbol{I}(\boldsymbol{\theta})$, with
$\boldsymbol{I}(\boldsymbol{\theta})$ being the Fisher information matrix
associated to the model; see Equations (\ref{model_variance}) and (\ref{m_v}).
Therefore, at $\beta=0$, we have
%\end{proof}
%\begin{Remark}
%For $\beta =0$, the Rao-type test statistic coincides with the classical Rao
%test statistic. It is immediate to see that $\boldsymbol{K}_{\beta }(%
%\boldsymbol{\theta })=\boldsymbol{J}_{\beta }(\boldsymbol{\theta })=%
%\boldsymbol{I}(\boldsymbol{\theta })$, where $\boldsymbol{I}_{F}(%
%\boldsymbol{\theta })$ is the Fisher information matrix associated to the
%model. We have
%%%%%%%%%%%%%%%%%%%
\[
\boldsymbol{Q}_{\beta=0}(\boldsymbol{\theta})=\boldsymbol{I}^{-1}%
(\boldsymbol{\theta})\boldsymbol{M}(\boldsymbol{\theta})\left[  \boldsymbol{M}%
^{T}(\boldsymbol{\theta})\boldsymbol{I}^{-1}(\boldsymbol{\theta}%
)\boldsymbol{M}(\boldsymbol{\theta})\right]  ^{-1}%
\]
and
\begin{align*}
&  \boldsymbol{Q}_{\beta=0}^{T}(\boldsymbol{\theta})\boldsymbol{K}_{\beta
=0}(\boldsymbol{\theta})\boldsymbol{Q}_{\beta=0}(\boldsymbol{\theta})\\
&  =\left[  \boldsymbol{M}^{T}(\boldsymbol{\theta})\boldsymbol{I}%
^{-1}(\boldsymbol{\theta})\boldsymbol{M}(\boldsymbol{\theta})\right]
^{-1}\boldsymbol{M}^{T}(\boldsymbol{\theta})\boldsymbol{I}^{-1}%
(\boldsymbol{\theta})\boldsymbol{I}(\boldsymbol{\theta})\boldsymbol{I}%
^{-1}(\boldsymbol{\theta})\boldsymbol{M}(\boldsymbol{\theta})\left[
\boldsymbol{M}^{T}(\boldsymbol{\theta})\boldsymbol{I}^{-1}(\boldsymbol{\theta
})\boldsymbol{M}(\boldsymbol{\theta})\right]  ^{-1}\\
&  =\left[  \boldsymbol{M}^{T}(\boldsymbol{\theta})\boldsymbol{I}%
^{-1}(\boldsymbol{\theta})\boldsymbol{M}(\boldsymbol{\theta})\right]  ^{-1}.
\end{align*}
Therefore,
\begin{align}
&  \boldsymbol{Q}_{\beta=0}(\boldsymbol{\theta})\left[  \boldsymbol{Q}%
_{\beta=0}^{T}(\boldsymbol{\theta})\boldsymbol{K}_{\beta=0}(\boldsymbol{\theta
})\boldsymbol{Q}_{\beta=0}(\boldsymbol{\theta})\right]  ^{-1}\boldsymbol{Q}%
_{\beta=0}^{T}(\boldsymbol{\theta})\nonumber\\
&  =\boldsymbol{I}^{-1}(\boldsymbol{\theta})\boldsymbol{M}(\boldsymbol{\theta
})\left[  \boldsymbol{M}^{T}(\boldsymbol{\theta})\boldsymbol{I}^{-1}%
(\boldsymbol{\theta})\boldsymbol{M}(\boldsymbol{\theta})\right]  ^{-1}\left[
\boldsymbol{M}^{T}(\boldsymbol{\theta})\boldsymbol{I}^{-1}(\boldsymbol{\theta
})\boldsymbol{M}(\boldsymbol{\theta})\right]  \left[  \boldsymbol{M}%
^{T}(\boldsymbol{\theta})\boldsymbol{I}^{-1}(\boldsymbol{\theta}%
)\boldsymbol{M}(\boldsymbol{\theta})\right]  ^{-1}\nonumber\\
&  \times\boldsymbol{M}^{T}(\boldsymbol{\theta})\boldsymbol{I}^{-1}%
(\boldsymbol{\theta})\nonumber\\
&  =\boldsymbol{I}^{-1}(\boldsymbol{\theta})\boldsymbol{M}(\boldsymbol{\theta
})\left[  \boldsymbol{M}^{T}(\boldsymbol{\theta})\boldsymbol{I}^{-1}%
(\boldsymbol{\theta})\boldsymbol{M}(\boldsymbol{\theta})\right]
^{-1}\boldsymbol{M}^{T}(\boldsymbol{\theta})\boldsymbol{I}^{-1}%
(\boldsymbol{\theta}). \label{8}%
\end{align}
Since $\boldsymbol{\tilde{\theta}}_{\beta=0}=\boldsymbol{\tilde{\theta}}$ and
$\boldsymbol{U}$$_{\beta=0,n}(\boldsymbol{\theta})=\boldsymbol{U}$%
$_{n}(\boldsymbol{\theta})$, the Rao-type test statistic in (\ref{RaC})
simplifies to
\begin{align}
\widetilde{R}_{\beta=0,n}(\boldsymbol{\tilde{\theta}})  &  =n\boldsymbol{U}%
_{\beta=0,n}^{T}(\boldsymbol{\tilde{\theta}}_{0})\boldsymbol{Q}_{\beta
=0}(\boldsymbol{\tilde{\theta}}_{\beta=0})\left[  \boldsymbol{Q}_{\beta=0}%
^{T}(\boldsymbol{\tilde{\theta}}_{\beta=0})\boldsymbol{K}_{\beta
=0}(\boldsymbol{\tilde{\theta}}_{\beta=0})\boldsymbol{Q}_{\beta=0}%
(\boldsymbol{\tilde{\theta}}_{\beta=0})\right]  ^{-1}\nonumber\\
&  \times\boldsymbol{Q}_{\beta=0}^{T}(\boldsymbol{\tilde{\theta}}_{\beta
=0})\boldsymbol{U}_{\beta=0,n}(\boldsymbol{\tilde{\theta}}_{\beta
=0})\nonumber\\
&  =n\boldsymbol{U}_{n}^{T}(\boldsymbol{\tilde{\theta}})\boldsymbol{I}%
^{-1}(\boldsymbol{\tilde{\theta}})\boldsymbol{M}(\boldsymbol{\tilde{\theta}%
})\left[  \boldsymbol{M}^{T}(\boldsymbol{\tilde{\theta}})\boldsymbol{I}%
^{-1}(\boldsymbol{\tilde{\theta}})\boldsymbol{M}(\boldsymbol{\tilde{\theta}%
})\right]  ^{-1}\boldsymbol{M}^{T}(\boldsymbol{\tilde{\theta}})\boldsymbol{I}%
^{-1}(\boldsymbol{\tilde{\theta}})\boldsymbol{U}_{n}(\boldsymbol{\tilde
{\theta}}), \label{EQ:Rao_C_0}%
\end{align}
which is the classical Rao test statistic for general composite hypothesis
given in (\ref{1}).
%\end{Remark}

\end{proof}

\begin{remark}
\normalfont Let us consider the most common composite hypothesis related to
the problem of testing a part of the parameter vector. Consider the partition
$\boldsymbol{\theta}=(\boldsymbol{\theta}_{1}^{T},\boldsymbol{\theta}_{2}%
^{T})^{T}$, with $\boldsymbol{\theta}_{1}$ denoting the first $r$ components
of the parameter vector, and the hypothesis $H_{0}:\boldsymbol{\theta}%
_{1}=\boldsymbol{\theta}_{10}$ for some pre-fixed $r$-vector
$\boldsymbol{\theta}_{10}$ against the omnibus alternative. Note that this
case belongs to the general set-up of hypothesis in (\ref{1}) with
$\boldsymbol{m}(\boldsymbol{\theta})=\boldsymbol{\theta}_{1}%
-\boldsymbol{\theta}_{10}$ and $\boldsymbol{M}^{T}(\boldsymbol{\theta
})=[\boldsymbol{I}_{r}~~\boldsymbol{O}_{r\times p}]$. For this particular
problem, we can easily simplify the proposed Rao-type test statistics from
(\ref{RaC}) to%
\begin{equation}
\widetilde{R}_{\beta,n}(\boldsymbol{\tilde{\theta}}_{\beta})=n\boldsymbol{U}%
_{\beta,n,1}^{T}(\boldsymbol{\tilde{\theta}}_{\beta})\boldsymbol{K}_{\beta
,11}^{-1}(\boldsymbol{\tilde{\theta}}_{\beta})\boldsymbol{U}_{\beta
,n,1}(\boldsymbol{\tilde{\theta}}_{\beta}), \label{redR}%
\end{equation}
where $\boldsymbol{U}_{\beta,n,1}(\boldsymbol{\theta})$ denotes the first $r$
components of $\boldsymbol{U}_{\beta,n}(\boldsymbol{\theta})$ and
$\boldsymbol{K}_{\beta,11}(\boldsymbol{\theta})$ represents the $r\times r$
principle sub-matrix of $\boldsymbol{K}_{\beta}(\boldsymbol{\theta})$. The
interesting case is $\beta=0$, where $\widetilde{R}_{\beta=0,n}%
(\boldsymbol{\tilde{\theta}})$ is simplified to
\[
\widetilde{R}_{n}(\boldsymbol{\tilde{\theta}})=n\boldsymbol{U}_{n,1}%
^{T}(\boldsymbol{\tilde{\theta}})\left(  \boldsymbol{I}_{11}%
(\boldsymbol{\tilde{\theta}})-\boldsymbol{I}_{12}(\boldsymbol{\tilde{\theta}%
})\boldsymbol{I}_{22}^{-1}(\boldsymbol{\tilde{\theta}})\boldsymbol{I}%
_{21}(\boldsymbol{\tilde{\theta}})\right)  ^{-1}\boldsymbol{U}_{n,1}%
(\boldsymbol{\tilde{\theta}}),
\]
where $\boldsymbol{U}_{n,1}(\boldsymbol{\theta})=\boldsymbol{U}_{\beta
=0,n,1}(\boldsymbol{\theta})$, $\left(  \boldsymbol{I}_{11}(\boldsymbol{\theta
})-\boldsymbol{I}_{12}(\boldsymbol{\theta})\boldsymbol{I}_{22}^{-1}%
(\boldsymbol{\theta})\boldsymbol{I}_{21}(\boldsymbol{\theta})\right)  ^{-1}$
represents the $r\times r$ principle sub-matrix of $\boldsymbol{I}%
^{-1}(\boldsymbol{\theta})$, with%
\[
\boldsymbol{I}(\boldsymbol{\theta})=%
\begin{pmatrix}
\boldsymbol{I}_{11}(\boldsymbol{\theta}) & \boldsymbol{I}_{12}%
(\boldsymbol{\theta})\\
\boldsymbol{I}_{21}(\boldsymbol{\theta}) & \boldsymbol{I}_{22}%
(\boldsymbol{\theta})
\end{pmatrix}
\]
being the block structure of the Fisher Information matrix. This is exactly of
the same form as given in Rao's original paper (Rao, 1973) on this particular
testing problem.
\end{remark}

\begin{remark}
\normalfont Note that, in view of (\ref{E}) applied at $\beta=0$, we have
$\boldsymbol{U}$$_{n}^{0}(\boldsymbol{\tilde{\theta}}_{\beta=0}%
)=-\boldsymbol{M}(\boldsymbol{\tilde{\theta}}_{\beta=0})\boldsymbol{\tilde
{\lambda}}_{n}^{\beta=0}$, and hence either from (\ref{EQ:Rao_C_0}) or
(\ref{EQ:LM_test_Gen}) we can rewrite the Rao-type test statistics at
$\beta=0$ as follows
\begin{equation}
\widetilde{R}_{\beta=0,n}(\boldsymbol{\tilde{\theta}})=n\boldsymbol{\tilde
{\lambda}}_{n}^{T}\boldsymbol{M}^{T}(\boldsymbol{\tilde{\theta}}%
)\boldsymbol{I}^{-1}(\boldsymbol{\tilde{\theta}})\boldsymbol{M}%
(\boldsymbol{\tilde{\theta}})\boldsymbol{\tilde{\lambda}}_{n},
\label{EQ:LM_test_0}%
\end{equation}
where $\boldsymbol{\tilde{\lambda}}_{n}=\boldsymbol{\tilde{\lambda}}%
_{\beta=0,n}$ is the Lagrange multiplier corresponding to the restricted MLE.
The tests statistic in (\ref{EQ:LM_test_0}) is another popular form of the
Rao test which is often referred to as the \textit{Lagrange Multiplier Test}
in econometrics. Hence, we can also see the proposed Rao-type test statistic
in (\ref{RaC}) as a \textit{Generalized Lagrange Multiplier Test} as well.
\end{remark}

%%%%%%%%%%%%%%%%%%%%%%%%%%%%%%

%%%%%%%%%%%%%%%%%%%%%%%%%%%%%%

\section{Robustness Analysis}

%%%%%%%%%%%%%%%%%%%%%%%%%%%%%%

%%%%%%%%%%%%%%%%%%%%%%%%%%%%%%

\subsection{Influence Function of Rao-type test statistics}

%%%%%%%%%%%%%%%%%%%%%%%%%%%%%%

The influence function is a classical tool to measure infinitesimal robustness
of any general statistic. Let us first study the influence function of the
proposed Rao-type test statistics to examine their robustness against data
contamination. For this purpose, we need to redefine the Rao-type test
statistics in terms of a statistical functional.

The Rao-type test statistics for the simple null hypothesis is defined in
terms of the MDPDE estimating equations. The statistical functional, say
$\boldsymbol{T}_{\beta,G}$ at the true distribution $G$, associated with the
MDPDE is defined as the minimizer of the DPD measure between true density $g$
of $G$ and the model density $f_{\boldsymbol{\theta}}$, or equivalently as the
solution in ${{\boldsymbol{\theta}}}\in\Theta$ of $\boldsymbol{U}_{\beta
,G}\left(  \boldsymbol{\theta}\right)  =\boldsymbol{0}_{p}$ where%
\begin{equation}
\boldsymbol{U}_{\beta,G}\left(  \boldsymbol{\theta}\right)  =\int_{-\infty
}^{+\infty}\boldsymbol{u}_{\beta}\left(  x,\boldsymbol{\theta}\right)
dG(x)=\left(  \int_{-\infty}^{+\infty}u_{1,\beta}\left(  x,\boldsymbol{\theta
}\right)  dG(x),....,\int_{-\infty}^{+\infty}u_{p,\beta}\left(
x,\boldsymbol{\theta}\right)  dG(x)\right)  ^{T}, \label{EQ:FunctU}%
\end{equation}
where $\boldsymbol{u}_{\beta}\left(  x,\boldsymbol{\theta}\right)  $\ is given
in (\ref{EQ:score_function}). So, the functional corresponding to
$\boldsymbol{U}_{\beta,n}\left(  \boldsymbol{\theta}\right)  $ in (\ref{2.1})
can be defined as $\boldsymbol{U}_{\beta,G}\left(  \boldsymbol{\theta}\right)
$ and hence (ignoring the multiplier $n$) the statistical functionals
associated with the proposed Rao-type test statistics $R_{\beta,n}\left(
\boldsymbol{\theta}_{0}\right)  $ for testing the simple null hypothesis
(\ref{2.2}) are given by
%%%%%%%%%%%%%%%%%%
\begin{equation}
R_{\beta,G}\left(  \boldsymbol{\theta}_{0}\right)  =\boldsymbol{U}_{\beta
,G}^{T}\left(  \boldsymbol{\theta}_{0}\right)  \boldsymbol{K}_{\beta}%
^{-1}\left(  \boldsymbol{\theta}_{0}\right)  \boldsymbol{U}_{\beta,G}\left(
\boldsymbol{\theta}_{0}\right)  . \label{EQ:SimpleTest _Func}%
\end{equation}
%%%%%%%%%%%%%%%%%%
Note that, at the null hypothesis $G=F_{\boldsymbol{\theta}_{0}}$, we have
$\boldsymbol{T}_{\beta}(F_{\boldsymbol{\theta}_{0}})=\boldsymbol{\theta}_{0}$
(by Fisher consistency of the MDPDE) and $\boldsymbol{U}%
_{F_{\boldsymbol{\theta}_{0}}}^{\beta}(\boldsymbol{\theta}_{0})=\boldsymbol{0}%
_{p}$ so that $R_{F_{\boldsymbol{\theta}_{0}}}^{\beta}\left(
\boldsymbol{\theta}_{0}\right)  =0$.

In case of testing composite hypothesis, the corresponding Rao-type test
statistics are defined in terms of the RMDPDE. The statistical function and
influence function of the estimators under parametric restrictions have been
rigorously studied in Ghosh (2015). Following this approach, the statistical
functional associated with the MDPDE at the true distribution $G$, say
$\boldsymbol{\tilde{T}}_{\beta,G}$, is defined as the minimizer of $d_{\beta
}(g,f_{\boldsymbol{\theta}})$ subject to the null restrictions $\boldsymbol{m}%
(\boldsymbol{\theta)=0}_{r}$; the estimating equations can be written in terms
of Lagrange multipliers as in Section 4. However, for the influence function
analysis, we adopt the alternative approach of Ghosh (2015). Then
$\boldsymbol{\tilde{T}}_{\beta,G}$ can be thought of as a solution in
${{\boldsymbol{\theta}}}\in\Theta_{0}\subset\Theta$ of $\boldsymbol{U}%
_{\beta,G}(\boldsymbol{\theta})=\boldsymbol{0}_{p}$,
%%%%%%%%%%%%%%%%%%
and its existence can be verified rigorously through the Implicit Function
Theorem. Following Ghosh (2015) its influence function is
%%%%%%%%%%%%%%%%%%
\begin{equation}
\mathcal{IF}(y,\boldsymbol{\tilde{T}}_{\beta,G})=\boldsymbol{J}_{\beta}%
^{-1}({{\boldsymbol{\theta}}})\left[  \boldsymbol{u}_{\beta}%
(y,\boldsymbol{\tilde{T}}_{\beta,G})-\boldsymbol{U}_{\beta,G}%
(\boldsymbol{\tilde{T}}_{\beta,G})\right]  . \label{IFEst}%
\end{equation}
%%%%%%%%%%%%%%%%%%
Now, (ignoring the multiplier $n$) we define the statistical functionals
associated with the proposed Rao-type test statistics $\tilde{R}_{\beta
,n}(\boldsymbol{\tilde{\theta}})$ for testing the composite null hypothesis
(\ref{2.2}) as given by
%%%%%%%%%%%%%%%%%%
\begin{equation}
\tilde{R}_{\beta,G}(\boldsymbol{\tilde{T}}_{\beta,G})=\boldsymbol{U}_{\beta
,G}^{T}(\boldsymbol{\tilde{T}}_{\beta,G})\boldsymbol{Q}_{\beta}%
(\boldsymbol{\theta})\left[  \boldsymbol{Q}_{\beta}^{T}(\boldsymbol{\theta
})\boldsymbol{K}_{\beta}(\boldsymbol{\theta})\boldsymbol{Q}_{\beta
}(\boldsymbol{\theta})\right]  ^{-1}\boldsymbol{Q}_{\beta}^{T}%
(\boldsymbol{\theta})\boldsymbol{U}_{\beta,G}(\boldsymbol{\tilde{T}}_{\beta
,G}). \label{EQ:CompTest _Func}%
\end{equation}
At the null hypothesis in (\ref{1}), we also have $G=F_{\boldsymbol{\theta}}$
for some $\boldsymbol{\theta}\in\Theta_{0}$ and then by Fisher consistency of
the RMDPDE $\boldsymbol{\tilde{T}}_{\beta,F_{\boldsymbol{\theta}}%
}=\boldsymbol{\theta}$ and $\boldsymbol{U}_{\beta,F_{\boldsymbol{\theta}}%
}(\boldsymbol{\tilde{T}}_{\beta,F_{\boldsymbol{\theta}}})=\boldsymbol{0}_{p}$
implying $\tilde{R}_{\beta,F_{\boldsymbol{\theta}}}(\boldsymbol{\tilde{T}%
}_{\beta,F_{\boldsymbol{\theta}}})=0$.

Now, in order to derive the influence function for these Rao-type test
functionals $R_{\beta,G}(\boldsymbol{\theta}_{0})$ and $\tilde{R}_{\beta
,G}(\boldsymbol{\tilde{T}}_{\beta,G})$, we consider the contaminated
distribution $G_{\epsilon,y}=(1-\epsilon)G+\epsilon\Lambda_{y}$ having density
$g_{\epsilon}$, where $\epsilon$ is the contamination proportion and
$\Lambda_{y}$ is the degenerate distribution function at the contamination
point $y$. Then, the classical (first order) influence function of
$R_{\beta,G}(\boldsymbol{\theta}_{0})$ and $\tilde{R}_{\beta,G}%
(\boldsymbol{\tilde{T}}_{\beta,G})$ at the true distribution $G$ are,
respectively, given by
%%%%%%%%%%%%%%%%%%
\[
\mathcal{IF}(y,R_{\beta,G}(\boldsymbol{\theta}_{0}))=\left.  \frac{\partial
}{\partial\epsilon}R_{\beta,G_{\epsilon,y}}(\boldsymbol{\theta}_{0}%
)\right\vert _{\epsilon=0},~~~~\mathcal{IF}(y,\tilde{R}_{\beta,G}%
(\boldsymbol{\tilde{T}}_{\beta,G}))=\left.  \frac{\partial}{\partial\epsilon
}\tilde{R}_{\beta,G_{\epsilon,y}}(\boldsymbol{\tilde{T}}_{\beta,G_{\epsilon
,y}})\right\vert _{\epsilon=0}.
\]
%%%%%%%%%%%%%%%%%%
Now, first consider the case of $R_{0}^{\beta}$ to check that
%%%%%%%%%%%%%%%%%%
\begin{align}
\mathcal{IF}(y,R_{\beta,G}(\boldsymbol{\theta}_{0}))  &  =2\boldsymbol{U}%
_{\beta,G}^{T}\left(  \boldsymbol{\theta}_{0}\right)  \boldsymbol{K}_{\beta
}^{-1}\left(  \boldsymbol{\theta}_{0}\right)  \left.  \frac{\partial}%
{\partial\epsilon}\boldsymbol{U}_{\beta,G_{\epsilon,y}}\left(
\boldsymbol{\theta}_{0}\right)  \right\vert _{\epsilon=0}\nonumber\\
&  =2\boldsymbol{U}_{\beta,G}^{T}\left(  \boldsymbol{\theta}_{0}\right)
\boldsymbol{K}_{\beta}^{-1}\left(  \boldsymbol{\theta}_{0}\right)
\boldsymbol{u}_{\beta}\left(  y,\boldsymbol{\theta}_{0}\right)  .\nonumber
\end{align}
%%%%%%%%%%%%%%%%%%
When evaluated at the null hypothesis, as is the usual practice, we get
$\mathcal{IF}(y,R_{F_{\boldsymbol{\theta}_{0}}}^{\beta}(\boldsymbol{\theta
}_{0}))=0$, since $\boldsymbol{U}_{F_{\boldsymbol{\theta}_{0}}}^{\beta
}(\boldsymbol{\theta}_{0})=\boldsymbol{0}_{p}$. Similarly, one can also check
in the case of composite hypothesis that $\mathcal{IF}(y,\tilde{R}%
_{F_{\boldsymbol{\theta}_{0}}}^{\beta}(\boldsymbol{\tilde{T}}_{\beta
,F_{\boldsymbol{\theta}_{0}}}))=0$. Thus, the first order influence function
is inadequate to indicate the robustness properties of the proposed Rao-type
tests. This is consistent with other quadratic tests in the literature and we
need to consider the second order influence function.

The second order influence function of our Rao-type test functionals
$R_{\beta,G}(\boldsymbol{\theta}_{0})$ and $\tilde{R}_{\beta,G}%
(\boldsymbol{\tilde{T}}_{\beta,G})$ are similarly defined as
%%%%%%%%%%%%%%%%%%
\[
\mathcal{IF}_{2}(y,R_{\beta,G}(\boldsymbol{\theta}_{0}))=\left.
\frac{\partial^{2}}{\partial\epsilon^{2}}R_{\beta,G_{\epsilon,y}%
}(\boldsymbol{\theta}_{0})\right\vert _{\epsilon=0},~~~~\mathcal{IF}%
_{2}(y,\tilde{R}_{\beta,G}(\boldsymbol{\tilde{T}}_{\beta,G}))=\left.
\frac{\partial^{2}}{\partial\epsilon^{2}}\tilde{R}_{\beta,G_{\epsilon,y}%
}(\boldsymbol{\tilde{T}}_{\beta,G_{\epsilon,y}})\right\vert _{\epsilon=0}.
\]
%%%%%%%%%%%%%%%%%%
The following theorem presents their explicit forms at the corresponding null
hypotheses; the proof is straightforward from the previous lines and are hence
omitted.
%given in the Appendix (to add).

\begin{theorem}
\label{THM:IF} Under the assumptions of Sections 2 and 3, the following
results hold.

\begin{enumerate}
\item For the simple hypothesis testing problem in (\ref{2.1}), we have
%%%%%%%%%%%%%%%%%%
\[
\mathcal{IF}_{2}(y,R_{\beta,G}(\boldsymbol{\theta}_{0}))=2\boldsymbol{u}%
_{\beta}^{T}\left(  y,\boldsymbol{\theta}_{0}\right)  \boldsymbol{K}_{\beta
}^{-1}\left(  \boldsymbol{\theta}_{0}\right)  \boldsymbol{u}_{\beta}\left(
y,\boldsymbol{\theta}_{0}\right)  .
\]
%%%%%%%%%%%%%%%%%%

\item For any $\boldsymbol{\theta}\in\Theta_{0}$ we have, in the composite
hypothesis testing problem,
\begin{align*}
&  \mathcal{IF}_{2}(y,\tilde{R}_{\beta,G}(\boldsymbol{\tilde{T}}_{\beta,G}))\\
&  =2\mathcal{IF}^{T}(y,\boldsymbol{\tilde{T}}_{\beta,G})\boldsymbol{Q}%
_{\beta}(\boldsymbol{\theta})\left[  \boldsymbol{Q}_{\beta}^{T}%
(\boldsymbol{\theta})\boldsymbol{K}_{\beta}(\boldsymbol{\theta})\boldsymbol{Q}%
_{\beta}(\boldsymbol{\theta})\right]  ^{-1}\boldsymbol{Q}_{\beta}%
^{T}(\boldsymbol{\theta})\mathcal{IF}(y,\boldsymbol{\tilde{T}}_{\beta,G}),
\end{align*}
where $\mathcal{IF}(y,\boldsymbol{\tilde{T}}_{\beta,G})$ is (\ref{IFEst}). In
particular at the null hypothesis $G=F_{\boldsymbol{\theta}}$, we have
$\mathcal{IF}(y,\boldsymbol{\tilde{T}}_{\beta,F_{\boldsymbol{\theta}}%
})=\mathcal{IF}(y,\boldsymbol{\theta})=\boldsymbol{J}_{\beta}^{-1}%
({{\boldsymbol{\theta}}})\boldsymbol{u}_{\beta}\left(  y,\boldsymbol{\theta
}\right)  $.
\end{enumerate}
\end{theorem}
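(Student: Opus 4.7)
The plan is to compute, in each of the two cases, the second derivative in $\epsilon$ of the test functional along the contamination curve $G_{\epsilon,y}=(1-\epsilon)G+\epsilon\Lambda_{y}$, evaluated at $\epsilon=0$ and at a distribution $G$ satisfying the null hypothesis. As already observed in the discussion preceding the theorem, the first-order influence function vanishes at the null for these quadratic functionals (because $\boldsymbol{U}_{\beta,G}$ at the corresponding best-fitting parameter is $\boldsymbol{0}_{p}$), so the second-order $\mathcal{IF}$ is captured entirely by the Hessian of the quadratic form in its $\boldsymbol{U}$-argument.

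For part 1, I would first observe that the weighting matrix $\boldsymbol{K}_{\beta}^{-1}(\boldsymbol{\theta}_{0})$ appearing in $R_{\beta,G}(\boldsymbol{\theta}_{0})$ is a pure model quantity (cf.\ (\ref{m_v})) and so does not depend on $G$, whereas $\boldsymbol{U}_{\beta,G}(\boldsymbol{\theta}_{0})$ depends linearly on $G$. Hence
\[
\boldsymbol{U}_{\beta,G_{\epsilon,y}}(\boldsymbol{\theta}_{0})=(1-\epsilon)\boldsymbol{U}_{\beta,G}(\boldsymbol{\theta}_{0})+\epsilon\,\boldsymbol{u}_{\beta}(y,\boldsymbol{\theta}_{0}),
\]
which is affine in $\epsilon$ with constant derivative $\boldsymbol{u}_{\beta}(y,\boldsymbol{\theta}_{0})-\boldsymbol{U}_{\beta,G}(\boldsymbol{\theta}_{0})$ and vanishing second derivative. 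Applying the product rule twice to the quadratic form $\boldsymbol{U}^{T}\boldsymbol{K}_{\beta}^{-1}\boldsymbol{U}$, specialising to $G=F_{\boldsymbol{\theta}_{0}}$ and using Fisher consistency to kill $\boldsymbol{U}_{\beta,F_{\boldsymbol{\theta}_{0}}}(\boldsymbol{\theta}_{0})$, one is left with exactly $2\boldsymbol{u}_{\beta}^{T}(y,\boldsymbol{\theta}_{0})\boldsymbol{K}_{\beta}^{-1}(\boldsymbol{\theta}_{0})\boldsymbol{u}_{\beta}(y,\boldsymbol{\theta}_{0})$.

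For part 2, write $\boldsymbol{W}(\epsilon)=\boldsymbol{U}_{\beta,G_{\epsilon,y}}(\boldsymbol{\tilde{T}}_{\beta,G_{\epsilon,y}})$ and set $\boldsymbol{A}(\boldsymbol{\theta})=\boldsymbol{Q}_{\beta}(\boldsymbol{\theta})[\boldsymbol{Q}_{\beta}^{T}(\boldsymbol{\theta})\boldsymbol{K}_{\beta}(\boldsymbol{\theta})\boldsymbol{Q}_{\beta}(\boldsymbol{\theta})]^{-1}\boldsymbol{Q}_{\beta}^{T}(\boldsymbol{\theta})$, so that $\tilde{R}_{\beta,G_{\epsilon,y}}(\boldsymbol{\tilde{T}}_{\beta,G_{\epsilon,y}})=\boldsymbol{W}(\epsilon)^{T}\boldsymbol{A}(\boldsymbol{\theta})\boldsymbol{W}(\epsilon)$. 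Under the null $G=F_{\boldsymbol{\theta}}$, Fisher consistency of the RMDPDE yields $\boldsymbol{\tilde{T}}_{\beta,F_{\boldsymbol{\theta}}}=\boldsymbol{\theta}$ and hence $\boldsymbol{W}(0)=\boldsymbol{0}_{p}$. Expanding $\partial^{2}/\partial\epsilon^{2}$ of this quadratic form by the product rule, every contribution that still carries a factor of $\boldsymbol{W}(0)$---including every piece in which $\boldsymbol{A}$ is differentiated through $\boldsymbol{\tilde{T}}_{\beta,G_{\epsilon,y}}$---drops out, so that
\[
\left.\frac{\partial^{2}}{\partial\epsilon^{2}}\tilde{R}_{\beta,G_{\epsilon,y}}(\boldsymbol{\tilde{T}}_{\beta,G_{\epsilon,y}})\right|_{\epsilon=0}=2\,\boldsymbol{W}'(0)^{T}\boldsymbol{A}(\boldsymbol{\theta})\boldsymbol{W}'(0).
\]
The last step is to identify $\boldsymbol{W}'(0)$ with $\mathcal{IF}(y,\boldsymbol{\tilde{T}}_{\beta,G})$ by the chain rule through $\boldsymbol{\tilde{T}}_{\beta,G_{\epsilon,y}}$, using exactly the implicit-function argument that produces (\ref{IFEst}); the ``in particular'' formula then follows from evaluating (\ref{IFEst}) at $G=F_{\boldsymbol{\theta}}$, where $\boldsymbol{U}_{\beta,F_{\boldsymbol{\theta}}}(\boldsymbol{\theta})=\boldsymbol{0}_{p}$.

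The main obstacle is the bookkeeping in the composite case: one must verify that in the full second-order expansion only the pure $\boldsymbol{W}'(0)^{T}\boldsymbol{A}(\boldsymbol{\theta})\boldsymbol{W}'(0)$ contribution survives at $\epsilon=0$---the factor $\boldsymbol{W}(0)=\boldsymbol{0}_{p}$ is what kills the cross terms and all contributions from the $\epsilon$-variation of $\boldsymbol{A}$---and correctly match $\boldsymbol{W}'(0)$ with the RMDPDE influence function via the restricted-score chain rule. The simple null case is then a specialisation in which $\boldsymbol{\tilde{T}}_{\beta,G}$ is frozen at $\boldsymbol{\theta}_{0}$ and only the explicit $\epsilon$-dependence of $\boldsymbol{U}_{\beta,G_{\epsilon,y}}(\boldsymbol{\theta}_{0})$ contributes.
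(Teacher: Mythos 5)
Your overall strategy---differentiating the quadratic functional twice along $G_{\epsilon,y}$ and using the vanishing of $\boldsymbol{U}_{\beta,G}$ at the null to kill every term except the pure first-derivative quadratic---is exactly the route the paper has in mind when it calls the proof ``straightforward from the previous lines,'' and your Part 1 argument is complete: $\boldsymbol{U}_{\beta,G_{\epsilon,y}}(\boldsymbol{\theta}_{0})$ is affine in $\epsilon$ with derivative $\boldsymbol{u}_{\beta}(y,\boldsymbol{\theta}_{0})-\boldsymbol{U}_{\beta,G}(\boldsymbol{\theta}_{0})$, which at $G=F_{\boldsymbol{\theta}_{0}}$ reduces to $\boldsymbol{u}_{\beta}(y,\boldsymbol{\theta}_{0})$, giving the stated expression.

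The gap is the last step of Part 2. Writing $\boldsymbol{W}(\epsilon)=\boldsymbol{U}_{\beta,G_{\epsilon,y}}(\boldsymbol{\tilde{T}}_{\beta,G_{\epsilon,y}})$, you correctly reduce $\mathcal{IF}_{2}$ to $2\boldsymbol{W}'(0)^{T}\boldsymbol{A}(\boldsymbol{\theta})\boldsymbol{W}'(0)$, but you then assert $\boldsymbol{W}'(0)=\mathcal{IF}(y,\boldsymbol{\tilde{T}}_{\beta,G})$ ``by the chain rule.'' The chain rule does not give this: $\boldsymbol{W}$ depends on $\epsilon$ both explicitly through $G_{\epsilon,y}$ and implicitly through $\boldsymbol{\tilde{T}}$, so at $G=F_{\boldsymbol{\theta}}$,
\[
\boldsymbol{W}'(0)=\bigl[\boldsymbol{u}_{\beta}(y,\boldsymbol{\theta})-\boldsymbol{U}_{\beta,G}(\boldsymbol{\theta})\bigr]+\frac{\partial\boldsymbol{U}_{\beta,G}(\boldsymbol{\theta})}{\partial\boldsymbol{\theta}^{T}}\,\mathcal{IF}(y,\boldsymbol{\tilde{T}}_{\beta,G})=\boldsymbol{u}_{\beta}(y,\boldsymbol{\theta})-\boldsymbol{J}_{\beta}(\boldsymbol{\theta})\,\mathcal{IF}(y,\boldsymbol{\tilde{T}}_{\beta,G}),
\]
which is a score-scale quantity, not the estimator's influence function. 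Worse, if you insert the paper's own expression (\ref{IFEst}) at the null (so $\mathcal{IF}=\boldsymbol{J}_{\beta}^{-1}\boldsymbol{u}_{\beta}$), this chain rule yields $\boldsymbol{W}'(0)=\boldsymbol{0}_{p}$ and hence $\mathcal{IF}_{2}\equiv0$, so the identification cannot be repaired merely by citing (\ref{IFEst}). To get the nonzero answer one must account for the restricted nature of $\boldsymbol{\tilde{T}}_{\beta,G}$: e.g.\ use the constrained-estimator influence function built from the projection matrix $\boldsymbol{P}_{\beta}(\boldsymbol{\theta})$ of (\ref{P}) (or the Lagrange-multiplier identity $\boldsymbol{U}_{\beta,G_{\epsilon,y}}(\boldsymbol{\tilde{T}}_{\beta,G_{\epsilon,y}})=-\boldsymbol{M}(\boldsymbol{\tilde{T}}_{\beta,G_{\epsilon,y}})\boldsymbol{\lambda}_{\epsilon}$), together with the algebraic identity $\boldsymbol{Q}_{\beta}^{T}(\boldsymbol{\theta})\boldsymbol{J}_{\beta}(\boldsymbol{\theta})\boldsymbol{P}_{\beta}(\boldsymbol{\theta})=\boldsymbol{O}$, which gives $\boldsymbol{Q}_{\beta}^{T}\boldsymbol{W}'(0)=\boldsymbol{Q}_{\beta}^{T}\boldsymbol{u}_{\beta}(y,\boldsymbol{\theta})$ and hence $\mathcal{IF}_{2}=2\boldsymbol{u}_{\beta}^{T}\boldsymbol{Q}_{\beta}\bigl[\boldsymbol{Q}_{\beta}^{T}\boldsymbol{K}_{\beta}\boldsymbol{Q}_{\beta}\bigr]^{-1}\boldsymbol{Q}_{\beta}^{T}\boldsymbol{u}_{\beta}$; reconciling this with the displayed form $2\,\mathcal{IF}^{T}\boldsymbol{Q}_{\beta}[\cdot]^{-1}\boldsymbol{Q}_{\beta}^{T}\mathcal{IF}$ and with the ``in particular'' claim $\mathcal{IF}=\boldsymbol{J}_{\beta}^{-1}\boldsymbol{u}_{\beta}$ then requires a further argument that your sketch does not supply. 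This bookkeeping is precisely where the composite case differs from the simple one, so Part 2 is incomplete as written.
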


They are bounded, implying robustness whenever the quantity $\boldsymbol{u}%
_{\beta}\left(  y,\boldsymbol{\theta}\right)  $ is bounded at the
contamination point $y$; this holds for all $\beta>0$ in most statistical
models. Thus, our proposed Rao-type test statistics are robust at all
$\beta>0$. But, at $\beta=0$ the corresponding influence function is unbounded
indicating the well-known non-robust nature of the classical Rao test.

\subsection{Power and Level influence function}

We will now consider the asymptotic level and asymptotic contiguous power of
the proposed Rao-type tests and their robustness with respect to contiguous
contamination. Suppose $F_{\boldsymbol{\theta}_{0}}$ be the true data
generating density under the null hypothesis, either given in (\ref{2.2}) or
in (\ref{1}) where $\boldsymbol{\theta}_{0}$ is a fix value of the parameter
under the null hypothesis, and consider the contiguous alternative hypotheses
of the form in (\ref{2.4}), i.e., $\boldsymbol{\theta}_{n,\boldsymbol{d}%
}=\boldsymbol{\theta}_{0}+\tfrac{1}{\sqrt{n}}\boldsymbol{d}$. We then consider
the contiguous contaminated sequences of distributions
\[
F_{n,\epsilon,y}^{L}=(1-\tfrac{\epsilon}{\sqrt{n}}%
)F_{\boldsymbol{{\boldsymbol{\theta}}}_{0}}+\tfrac{\epsilon}{\sqrt{n}}%
\Lambda_{y}\quad\text{and}\quad F_{n,\epsilon,y,\boldsymbol{d}}^{P}%
=(1-\tfrac{\epsilon}{\sqrt{n}})F_{\boldsymbol{\theta}_{n,\boldsymbol{d}}%
}+\tfrac{\epsilon}{\sqrt{n}}\Lambda_{y},
\]
and study the influence of contamination on the asymptotic level and power of
the proposed Rao-type tests based on $R_{\beta,n}\left(  \boldsymbol{\theta
}_{0}\right)  $, respectively under these contaminated distributions, given
by
\[
\alpha_{\beta,\epsilon,y}(\boldsymbol{\theta}_{0})=\lim\limits_{n\rightarrow
\infty}P_{F_{n,\epsilon,y}^{L}}(R_{\beta,n}(\boldsymbol{\theta}_{0}%
)>\chi_{p,\alpha}^{2})\quad\text{and}\quad{\pi}_{\beta,\epsilon
,y,\boldsymbol{d}}(\boldsymbol{\theta}_{0})=\lim\limits_{n\rightarrow\infty
}P_{F_{n,\epsilon,y,\boldsymbol{d}}^{P}}(R_{\beta,n}(\boldsymbol{\theta}%
_{0})>\chi_{p,\alpha}^{2}).
\]
Then, the level influence function (LIF) and the power influence function
(PIF) are defined as
\[
\mathcal{LIF}(y,R_{\beta,F_{{\boldsymbol{\theta}}_{0}}}(\boldsymbol{\theta
}_{0}))=\left.  \dfrac{\partial}{\partial\epsilon}\alpha_{\beta,\epsilon
,y}(\boldsymbol{\theta}_{0})\right\vert _{\epsilon=0}\quad\text{and}%
\quad\mathcal{PIF}(y,\boldsymbol{d},R_{\beta,F_{{\boldsymbol{\theta}}_{0}}%
}(\boldsymbol{\theta}_{0}))=\left.  \dfrac{\partial}{\partial\epsilon}{\pi
}_{\beta,\epsilon,y,\boldsymbol{d}}(\boldsymbol{\theta}_{0})\right\vert
_{\epsilon=0}.
\]
We will now explicitly derive the form of these LIF and PIF for our Rao-type
tests for testing both the simple and composite null hypotheses, and study
their boundedness over the contamination point $y$.

Let us start with simple null hypotheses given in (\ref{2.2}), and derive the
corresponding asymptotic power ${\pi}_{\beta,\epsilon,y,\boldsymbol{d}%
}(\boldsymbol{\theta}_{0})$ under the contiguous contaminated distribution
$F_{n,\epsilon,y,\boldsymbol{d}}^{P}$.

\begin{theorem}
\label{THM:7asymp_power_one} Consider testing the simple null hypothesis
(\ref{2.2}) by the Rao-type test statistics $R_{\beta,n}(\boldsymbol{\theta
}_{0})$ at $\alpha$-level of significance. Then the following results hold.

\begin{enumerate}
\item[i)] The asymptotic distribution of $R_{\beta,n}({\boldsymbol{\theta}%
}_{0})$ under $F_{n,\epsilon,y,\boldsymbol{d}}^{P}$ is a non-central
chi-square distribution with $p$ degrees of freedom and non-centrality
parameter given by
\[
\delta_{\beta,\epsilon,y}(\boldsymbol{\theta}_{0},\boldsymbol{d}%
)=\boldsymbol{\delta}_{\beta,\epsilon,y}^{T}(\boldsymbol{\theta}%
_{0},\boldsymbol{d})\boldsymbol{J}_{\beta}\left(  \boldsymbol{\theta}%
_{0}\right)  \boldsymbol{K}_{\beta}^{-1}\left(  \boldsymbol{\theta}%
_{0}\right)  \boldsymbol{J}_{\beta}\left(  \boldsymbol{\theta}_{0}\right)
\boldsymbol{\delta}_{\beta,\epsilon,y}(\boldsymbol{\theta}_{0},\boldsymbol{d}%
),
\]
where
\[
\boldsymbol{\delta}_{\beta,\epsilon,y}(\boldsymbol{\theta}_{0},\boldsymbol{d}%
)=\boldsymbol{d}+\epsilon\mathcal{IF}(y,\boldsymbol{T}_{\beta
,F_{{\boldsymbol{\theta}}_{0}}})
\]
and $\mathcal{IF}(y;\boldsymbol{T}_{\beta},F_{{\boldsymbol{\theta}}_{0}})$ is
the influence function of the MDPDE as given by (Basu et al., 1998, 2011)
\[
\mathcal{IF}(y,\boldsymbol{T}_{\beta,F_{{\boldsymbol{\theta}}_{0}}%
})=\boldsymbol{J}_{\beta}^{-1}(\boldsymbol{\theta}_{0})\boldsymbol{u}_{\beta
}\left(  y,\boldsymbol{\theta}_{0}\right)  .
\]

\item[ii)] The corresponding asymptotic power under contiguous contaminated
sequence of alternative distributions $F_{n,\epsilon,y}^{P}$ is given by
\begin{equation}
{\pi}_{\beta,\epsilon,y,\boldsymbol{d}}(\boldsymbol{\theta}_{0})=\sum
\limits_{k=0}^{\infty}c_{k}\left(  \delta_{\beta,\epsilon,y}%
(\boldsymbol{\theta}_{0},\boldsymbol{d})\right)  P\left(  \chi_{p+2k}^{2}%
>\chi_{p,\alpha}^{2}\right)  , \label{EQ:Cont_power_one}%
\end{equation}
where $c_{k}\left(  s\right)  =\frac{s^{k}}{k!2^{k}}e^{-\frac{s}{2}}$.
%		C_{v}\left(  \boldsymbol{s},\boldsymbol{A}\right)  =\frac{\left(\mathbf{s}^{T}\boldsymbol{A}\mathbf{s}\right)^{v}}{v!2^{v}}
%		e^{-\frac{1}{2}\mathbf{s}^{T}\boldsymbol{A}\mathbf{s}}.

\end{enumerate}
\end{theorem}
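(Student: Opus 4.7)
The plan is to extend the proof of Theorem \ref{THM:cont_1} by tracking both the local parameter shift and the contamination simultaneously, and then read off the power from the known density expansion of the non-central chi-square distribution.

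First I would compute the asymptotic distribution of $\sqrt{n}\boldsymbol{U}_{\beta,n}(\boldsymbol{\theta}_0)$ under the sequence $F^{P}_{n,\epsilon,y,\boldsymbol{d}}=(1-\tfrac{\epsilon}{\sqrt{n}})F_{\boldsymbol{\theta}_{n,\boldsymbol{d}}}+\tfrac{\epsilon}{\sqrt{n}}\Lambda_{y}$. Under this mixture the i.i.d. triangular array $X_{1},\ldots,X_{n}$ has mean
\[
\mu_{n} = \bigl(1-\tfrac{\epsilon}{\sqrt{n}}\bigr)\,E_{\boldsymbol{\theta}_{n,\boldsymbol{d}}}\!\left[\boldsymbol{u}_{\beta}(X,\boldsymbol{\theta}_{0})\right] + \tfrac{\epsilon}{\sqrt{n}}\,\boldsymbol{u}_{\beta}(y,\boldsymbol{\theta}_{0}).
\]
A one-term Taylor expansion of $\boldsymbol{\theta}\mapsto E_{\boldsymbol{\theta}}[\boldsymbol{u}_{\beta}(X,\boldsymbol{\theta}_{0})]$ about $\boldsymbol{\theta}_{0}$, together with Fisher consistency (so the value at $\boldsymbol{\theta}_{0}$ is zero) and the identity $\partial_{\boldsymbol{\theta}} E_{\boldsymbol{\theta}}[\boldsymbol{u}_{\beta}(X,\boldsymbol{\theta}_{0})]\big|_{\boldsymbol{\theta}_{0}}=\boldsymbol{J}_{\beta}(\boldsymbol{\theta}_{0})$ (derived exactly as in the proof of Theorem \ref{THM:cont_1}), gives
\[
\sqrt{n}\,\mu_{n} \;\longrightarrow\; \boldsymbol{J}_{\beta}(\boldsymbol{\theta}_{0})\boldsymbol{d} + \epsilon\,\boldsymbol{u}_{\beta}(y,\boldsymbol{\theta}_{0}) \;=\; \boldsymbol{J}_{\beta}(\boldsymbol{\theta}_{0})\,\boldsymbol{\delta}_{\beta,\epsilon,y}(\boldsymbol{\theta}_{0},\boldsymbol{d}),
\]
after using the MDPDE influence function $\mathcal{IF}(y,\boldsymbol{T}_{\beta,F_{\boldsymbol{\theta}_{0}}})=\boldsymbol{J}_{\beta}^{-1}(\boldsymbol{\theta}_{0})\boldsymbol{u}_{\beta}(y,\boldsymbol{\theta}_{0})$.

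Next I would control the covariance. Because both perturbations (Pitman alternative and contamination) are of order $n^{-1/2}$, they only shift the mean; the covariance of $\boldsymbol{u}_{\beta}(X,\boldsymbol{\theta}_{0})$ under $F^{P}_{n,\epsilon,y,\boldsymbol{d}}$ converges to that under $F_{\boldsymbol{\theta}_{0}}$, namely $\boldsymbol{K}_{\beta}(\boldsymbol{\theta}_{0})$. Applying the Lindeberg--Feller CLT to the triangular array (the Lindeberg condition follows since the contaminating mass $\epsilon/\sqrt{n}\to 0$ and the remaining component is a small local perturbation of a fixed distribution), I obtain
\[
\sqrt{n}\,\boldsymbol{U}_{\beta,n}(\boldsymbol{\theta}_{0})\;\underset{n\to\infty}{\overset{\mathcal{L}}{\longrightarrow}}\;\mathcal{N}\!\left(\boldsymbol{J}_{\beta}(\boldsymbol{\theta}_{0})\,\boldsymbol{\delta}_{\beta,\epsilon,y}(\boldsymbol{\theta}_{0},\boldsymbol{d}),\,\boldsymbol{K}_{\beta}(\boldsymbol{\theta}_{0})\right)
\]
under $F^{P}_{n,\epsilon,y,\boldsymbol{d}}$. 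Plugging this into the quadratic form $R_{\beta,n}(\boldsymbol{\theta}_{0}) = n\,\boldsymbol{U}_{\beta,n}^{T}(\boldsymbol{\theta}_{0})\boldsymbol{K}_{\beta}^{-1}(\boldsymbol{\theta}_{0})\boldsymbol{U}_{\beta,n}(\boldsymbol{\theta}_{0})$ and using the standard fact that $Z^{T}A^{-1}Z\sim\chi_{p}^{2}(\mu^{T}A^{-1}\mu)$ when $Z\sim\mathcal{N}(\mu,A)$ immediately yields part (i), with non-centrality parameter $\boldsymbol{\delta}_{\beta,\epsilon,y}^{T}\boldsymbol{J}_{\beta}\boldsymbol{K}_{\beta}^{-1}\boldsymbol{J}_{\beta}\boldsymbol{\delta}_{\beta,\epsilon,y}$ evaluated at $\boldsymbol{\theta}_{0}$.

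For part (ii), I would use the representation of the non-central chi-square cumulative distribution function as a Poisson mixture of central chi-squares: if $W\sim\chi_{p}^{2}(\delta)$, then $P(W>t)=\sum_{k=0}^{\infty}\tfrac{(\delta/2)^{k}}{k!}e^{-\delta/2}\,P(\chi_{p+2k}^{2}>t)$, which is exactly $\sum_{k}c_{k}(\delta)P(\chi_{p+2k}^{2}>\chi_{p,\alpha}^{2})$ with $\delta=\delta_{\beta,\epsilon,y}(\boldsymbol{\theta}_{0},\boldsymbol{d})$. Taking $t=\chi_{p,\alpha}^{2}$ and invoking part (i) gives (\ref{EQ:Cont_power_one}). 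The main obstacle is justifying the Lindeberg--Feller step rigorously in the triangular array setting so that the covariance truly stabilizes at $\boldsymbol{K}_{\beta}(\boldsymbol{\theta}_{0})$ despite the $n^{-1/2}$-contamination; the other calculations are routine expansions closely mirroring the proof of Theorem \ref{THM:cont_1}.
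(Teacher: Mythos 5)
Your proposal is correct, but it takes a genuinely different route from the paper's proof. The paper works through the MDPDE functional at the contaminated distribution: it sets $\boldsymbol{\theta}^{\ast}_{n,\epsilon,\boldsymbol{d}}=\boldsymbol{T}_{\beta,F^{P}_{n,\epsilon,y}}$, expands it as $\boldsymbol{\theta}_{n,\boldsymbol{d}}+\tfrac{\epsilon}{\sqrt{n}}\mathcal{IF}(y,\boldsymbol{T}_{\beta,F_{\boldsymbol{\theta}_{0}}})+o_{P}(n^{-1/2})$ so that $\sqrt{n}(\boldsymbol{\theta}^{\ast}_{n,\epsilon,\boldsymbol{d}}-\boldsymbol{\theta}_{0})=\boldsymbol{\delta}_{\beta,\epsilon,y}(\boldsymbol{\theta}_{0},\boldsymbol{d})+o_{P}(\boldsymbol{1}_{p})$, then Taylor-expands $\boldsymbol{U}_{\beta,n}(\boldsymbol{\theta}^{\ast}_{n,\epsilon,\boldsymbol{d}})$ about $\boldsymbol{\theta}_{0}$ exactly as in Theorem \ref{THM:cont_1} and transfers the limit law using contiguity and Le Cam's third lemma. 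You bypass the functional and the contiguity machinery entirely, computing the mean and covariance of the $\beta$-score directly under the triangular array $F^{P}_{n,\epsilon,y,\boldsymbol{d}}$ and invoking a Lindeberg--Feller CLT, then finishing part (ii) with the same Poisson-mixture representation of the non-central chi-square that the paper takes from Kotz et al.; both routes land on $\sqrt{n}\,\boldsymbol{U}_{\beta,n}(\boldsymbol{\theta}_{0})\rightarrow\mathcal{N}\left(\boldsymbol{J}_{\beta}(\boldsymbol{\theta}_{0})\boldsymbol{\delta}_{\beta,\epsilon,y}(\boldsymbol{\theta}_{0},\boldsymbol{d}),\boldsymbol{K}_{\beta}(\boldsymbol{\theta}_{0})\right)$ (the sign of the mean is immaterial for the quadratic form) and hence the same non-centrality parameter. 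Your route is more elementary and self-contained, at the price of the Lindeberg verification you flag; that step is unproblematic here because the point-mass component contributes a bounded term with weight $\epsilon/\sqrt{n}\rightarrow 0$ and the remaining component is an $O(n^{-1/2})$ perturbation of the fixed model distribution, so mild uniform square-integrability of $\boldsymbol{u}_{\beta}(X,\boldsymbol{\theta}_{0})$ for $\boldsymbol{\theta}$ near $\boldsymbol{\theta}_{0}$ (implicit in the regularity conditions) suffices. One small clarification: the identity $\left.\partial_{\boldsymbol{\theta}}E_{\boldsymbol{\theta}}[\boldsymbol{u}_{\beta}(X,\boldsymbol{\theta}_{0})]\right|_{\boldsymbol{\theta}=\boldsymbol{\theta}_{0}}=\boldsymbol{J}_{\beta}(\boldsymbol{\theta}_{0})$ is not literally the one established in the proof of Theorem \ref{THM:cont_1}, which differentiates in the score's second argument and yields $-\boldsymbol{J}_{\beta}(\boldsymbol{\theta})$; yours follows from it by differentiating the unbiasedness identity $E_{\boldsymbol{\theta}}[\boldsymbol{u}_{\beta}(X,\boldsymbol{\theta})]=\boldsymbol{0}_{p}$ in $\boldsymbol{\theta}$, or directly from $\int\boldsymbol{u}_{\beta}(x,\boldsymbol{\theta}_{0})\boldsymbol{s}^{T}_{\boldsymbol{\theta}_{0}}(x)f_{\boldsymbol{\theta}_{0}}(x)dx=\boldsymbol{J}_{\beta}(\boldsymbol{\theta}_{0})$, using that the likelihood score has mean zero under the model.
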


\begin{proof}
Denote $\boldsymbol{\theta}_{n,\epsilon,\boldsymbol{d}}^{\ast}=\boldsymbol{T}%
_{\beta,F_{n,\epsilon,y}^{P}}$. Then, considering $\boldsymbol{\theta
}_{n,\epsilon,\boldsymbol{d}}^{\ast}$ as a function of $\tfrac{\epsilon}%
{\sqrt{n}}$ and using Taylor series expansion of $\boldsymbol{\theta
}_{n,\epsilon,\boldsymbol{d}}^{\ast}$ at $\tfrac{\epsilon}{\sqrt{n}}=0$
($\epsilon=0$), we get
\[
\boldsymbol{\theta}_{n,\epsilon,\boldsymbol{d}}^{\ast}=\boldsymbol{\theta
}_{n,\boldsymbol{d}}+\tfrac{\epsilon}{\sqrt{n}}\mathcal{IF}(y,\boldsymbol{T}%
_{\beta,F_{{\boldsymbol{\theta}}_{0}}})+o_{P}(\tfrac{1}{\sqrt{n}%
}\boldsymbol{1}_{p}).
\]
Hence,
%	\begin{align*}
%	\sqrt{n}(\boldsymbol{\theta}_{n}^{\ast}-\boldsymbol{\theta}_{n})  &
%	=\epsilon IF(\boldsymbol{t}; \boldsymbol{T}_{\tau}, \mathbf{\underline{F}}_{{\boldsymbol{\theta}}_0})  + o_{p}(\boldsymbol{1}_{p}),
%%	\\
%	\sqrt{n}(\boldsymbol{\theta}_{n}^{\ast}-\boldsymbol{\theta}_{0}-n^{-1/2}\boldsymbol{d})
%	&  =\epsilon IF(\boldsymbol{t}; \boldsymbol{T}_{\tau}, \mathbf{\underline{F}}_{{\boldsymbol{\theta}}_0})  + o_{p}(\boldsymbol{1}_{p}).
%	\end{align*}
writing $\boldsymbol{\theta}_{n,\boldsymbol{d}}$ in terms of
$\boldsymbol{\theta}_{0}$, we get
\begin{equation}
\sqrt{n}(\boldsymbol{\theta}_{n,\epsilon,\boldsymbol{d}}^{\ast}%
-\boldsymbol{\theta}_{0})=\boldsymbol{d}+\epsilon\mathcal{IF}(y,\boldsymbol{T}%
_{\beta,F_{{\boldsymbol{\theta}}_{0}}})+o_{P}(\boldsymbol{1}_{p}%
)=\boldsymbol{\delta}_{\beta,\epsilon,y}(\boldsymbol{\theta}_{0}%
)+o_{P}(\boldsymbol{1}_{p}). \label{DT}%
\end{equation}
Next, considering Taylor series expansion of $\boldsymbol{U}_{\beta
,n}(\boldsymbol{\theta}_{n,\epsilon,\boldsymbol{d}}^{\ast})$ at the point
$\boldsymbol{\theta}_{0}$, we get%
\[
\boldsymbol{U}_{\beta,n}(\boldsymbol{\theta}_{n,\epsilon,\boldsymbol{d}}%
^{\ast})=\boldsymbol{U}_{\beta,n}(\boldsymbol{\theta}_{0})+\left.
\frac{\partial}{\partial\boldsymbol{\theta}}\boldsymbol{U}_{\beta,n}%
^{T}(\boldsymbol{\theta})\right\vert _{\boldsymbol{\theta}=\boldsymbol{\theta
}_{n,\epsilon,\boldsymbol{d}}^{\ast\ast}}(\boldsymbol{\theta}_{n,\epsilon
,\boldsymbol{d}}^{\ast}-\boldsymbol{\theta}_{0}),
\]
where $\boldsymbol{\theta}_{n,\epsilon,\boldsymbol{d}}^{\ast\ast}$ belongs to
the line segment joining $\boldsymbol{\theta}_{0}$ and $\boldsymbol{\theta
}_{n,\epsilon,\boldsymbol{d}}^{\ast}$. Then, we proceed as in the proof of
Theorem \ref{THM:cont_1} and use (\ref{DT}) to conclude the first part of the
Theorem. Here, in order to use limit theorems we need to note that,
$\boldsymbol{\theta}_{n,\epsilon,\boldsymbol{d}}^{\ast}$ is contiguous to
$\boldsymbol{\theta}_{0}$ under $F_{n,\epsilon,y,\boldsymbol{d}}^{P}$ by
(\ref{DT}) and hence we can apply Le Cam's Third Lemma (see, eg., van der
Vaart, 1990, p.~90), and we need to use the continuity of the matrices
$\boldsymbol{J}_{\beta}(\boldsymbol{\theta})$ and $\boldsymbol{K}_{\beta
}(\boldsymbol{\theta})$ at $\boldsymbol{\theta}=\boldsymbol{\theta}_{0}%
$.\newline Next, to prove Part (ii) of the theorem, we directly use the
infinite series expansion of non-central chi-square distribution functions
(Kotz et al., 1967b) in terms of those of independent central chi-square
variables $\chi_{p+2k}^{2}$, $k=0,1,2,\ldots$, as follows
\begin{align}
{\pi}_{\beta,\epsilon,y}(\boldsymbol{\theta}_{0},\boldsymbol{d})  &
=\lim\limits_{n\rightarrow\infty}P_{F_{n,\epsilon,y,\boldsymbol{d}}^{P}%
}(R_{\beta,n}(\boldsymbol{\theta}_{0})>\chi_{p,\alpha}^{2})\nonumber\\
&  =P(\chi_{p}^{2}\left(  \delta_{\beta,\epsilon,y}(\boldsymbol{\theta}%
_{0},\boldsymbol{d})\right)  >\chi_{p,\alpha}^{2})\nonumber\\
&  =\sum\limits_{k=0}^{\infty}c_{k}\left(  \delta_{\beta,\epsilon
,y}(\boldsymbol{\theta}_{0},\boldsymbol{d})\right)  P\left(  \chi_{p+2k}%
^{2}>\chi_{p,\alpha}^{2}\right)  .\nonumber
\end{align}

\end{proof}

At the special case $\epsilon=0$, the first part of the above theorem coincide
with Theorem \ref{THM:cont_1} (noting $\delta_{\epsilon=0,y}^{\beta
}(\boldsymbol{\theta}_{0},\boldsymbol{d})=\delta_{\beta}(\boldsymbol{\theta
}_{0},\boldsymbol{d})$ given in (\ref{cp})) and the second part then gives an
infinite series expression for the asymptotic contiguous power of our Rao-type
tests as
\begin{equation}
{\pi}_{\beta}(\boldsymbol{\theta}_{0},\boldsymbol{d})={\pi}_{\beta
,\epsilon=0,y}(\boldsymbol{\theta}_{0},\boldsymbol{d})=\sum\limits_{k=0}%
^{\infty}c_{k}\left(  \delta_{\beta}(\boldsymbol{\theta}_{0},\boldsymbol{d}%
)\right)  P\left(  \chi_{p+2k}^{2}>\chi_{p,\alpha}^{2}\right)  .\nonumber
\end{equation}
However, substituting $\boldsymbol{d}=\boldsymbol{0}_{p}$ in Theorem
\ref{THM:7asymp_power_one}, Expression (\ref{EQ:Cont_power_one}) yields the
asymptotic level under the contaminated distribution $F_{n,\epsilon,y}^{L}$ as
given by
\begin{align}
\alpha_{\beta,\epsilon,y}(\boldsymbol{\theta}_{0})  &  ={\pi}_{\beta
,\epsilon,y}(\boldsymbol{\theta}_{0},\boldsymbol{d}=\boldsymbol{0}%
_{p})\nonumber\\
&  =\sum\limits_{k=0}^{\infty}c_{k}\left(  \epsilon^{2}\boldsymbol{u}_{\beta
}^{T}\left(  y,\boldsymbol{\theta}_{0}\right)  \boldsymbol{K}_{\beta}%
^{-1}\left(  \boldsymbol{\theta}_{0}\right)  \boldsymbol{u}_{\beta}\left(
y,\boldsymbol{\theta}_{0}\right)  \right)  P\left(  \chi_{p+2k}^{2}%
>\chi_{p,\alpha}^{2}\right)  .\nonumber
\end{align}

Finally, the PIF can be obtained by an appropriate differentiation of the
asymptotic power ${\pi}_{\beta,\epsilon,y}(\boldsymbol{\theta}_{0}%
,\boldsymbol{d})$ from the expression (\ref{EQ:Cont_power_one}) of Theorem
\ref{THM:7asymp_power_one}. The LIF can also be obtained similarly from ${\pi
}_{\beta,\epsilon,y}(\boldsymbol{\theta}_{0},\boldsymbol{d})$ or by
substituting $\boldsymbol{d}=\boldsymbol{0}_{p}$ in the formula of PIF. The
final expression is given in the following theorem but the proof is omitted
for brevity; see Ghosh et al.~(2016) for similar calculations.

\begin{theorem}
\label{THM:PIF_one} Under the assumptions of Theorem
\ref{THM:7asymp_power_one}, the power and level influence function of the
Rao-type tests for testing the simple null hypothesis in (\ref{2.2}) is given
by
\[
\mathcal{PIF}(y,\boldsymbol{d},R_{\beta,F_{{\boldsymbol{\theta}}_{0}}%
}({\boldsymbol{\theta}}_{0}))=C_{p}\left(  \delta_{\beta}(\boldsymbol{\theta
}_{0},\boldsymbol{d})\right)  \boldsymbol{d}^{T}\boldsymbol{J}_{\beta}\left(
\boldsymbol{\theta}_{0}\right)  \boldsymbol{K}_{\beta}^{-1}\left(
\boldsymbol{\theta}_{0}\right)  \boldsymbol{u}_{\beta}\left(
y,\boldsymbol{\theta}_{0}\right)  ,
\]
with $\delta_{\beta}(\boldsymbol{\theta}_{0},\boldsymbol{d})$ given by
(\ref{cp}),
\[
C_{p}(s)=e^{-\frac{s}{2}}\sum\limits_{k=0}^{\infty}\frac{s^{k-1}}{k!2^{k}%
}(2k-s)P\left(  \chi_{p+2k}^{2}>\chi_{p,\tau}^{2}\right)  ,
\]
and
\[
\mathcal{LIF}(y,R_{\beta,F_{{\boldsymbol{\theta}}_{0}}}({\boldsymbol{\theta}%
}_{0}))=0.
\]

\end{theorem}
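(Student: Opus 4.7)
The plan is to obtain the PIF by differentiating the contiguous-contaminated power formula (\ref{EQ:Cont_power_one}) of Theorem~\ref{THM:7asymp_power_one} with respect to the contamination proportion $\epsilon$ and evaluating at $\epsilon=0$; the LIF will then drop out as the specialisation $\boldsymbol{d}=\boldsymbol{0}_{p}$. Writing
\[
{\pi}_{\beta,\epsilon,y,\boldsymbol{d}}(\boldsymbol{\theta}_{0})=\sum_{k=0}^{\infty}c_{k}\!\left(\delta_{\beta,\epsilon,y}(\boldsymbol{\theta}_{0},\boldsymbol{d})\right)P\!\left(\chi^{2}_{p+2k}>\chi^{2}_{p,\alpha}\right),
\]
the chain rule reduces the derivation to two separate computations: differentiating the series coefficient $c_{k}(\cdot)$ and differentiating the non-centrality parameter $\delta_{\beta,\epsilon,y}(\boldsymbol{\theta}_{0},\boldsymbol{d})$ in $\epsilon$ at zero.

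First I would compute $c_{k}'(s)$ directly from $c_{k}(s)=s^{k}e^{-s/2}/(k!\,2^{k})$, obtaining $c_{k}'(s)=s^{k-1}(2k-s)e^{-s/2}/(k!\,2^{k+1})$. Summing these derivatives against the tail probabilities $P(\chi^{2}_{p+2k}>\chi^{2}_{p,\alpha})$ reproduces exactly $\tfrac{1}{2}C_{p}(\delta_{\beta}(\boldsymbol{\theta}_{0},\boldsymbol{d}))$, with the extra $\tfrac{1}{2}$ coming from the $2^{k+1}$ in $c_{k}'$ versus the $2^{k}$ in the series defining $C_{p}$.

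Next, using the identity $\boldsymbol{\delta}_{\beta,\epsilon,y}(\boldsymbol{\theta}_{0},\boldsymbol{d})=\boldsymbol{d}+\epsilon\,\mathcal{IF}(y,\boldsymbol{T}_{\beta,F_{\boldsymbol{\theta}_{0}}})$ from Theorem~\ref{THM:7asymp_power_one} together with $\delta_{\beta,\epsilon,y}=\boldsymbol{\delta}^{T}_{\beta,\epsilon,y}\boldsymbol{J}_{\beta}\boldsymbol{K}_{\beta}^{-1}\boldsymbol{J}_{\beta}\boldsymbol{\delta}_{\beta,\epsilon,y}$, a routine quadratic-form differentiation gives $\partial_{\epsilon}\delta_{\beta,\epsilon,y}|_{\epsilon=0}=2\boldsymbol{d}^{T}\boldsymbol{J}_{\beta}\boldsymbol{K}_{\beta}^{-1}\boldsymbol{J}_{\beta}\mathcal{IF}(y,\boldsymbol{T}_{\beta,F_{\boldsymbol{\theta}_{0}}})$. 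Substituting the explicit form $\mathcal{IF}(y,\boldsymbol{T}_{\beta,F_{\boldsymbol{\theta}_{0}}})=\boldsymbol{J}_{\beta}^{-1}\boldsymbol{u}_{\beta}(y,\boldsymbol{\theta}_{0})$ collapses one factor of $\boldsymbol{J}_{\beta}$ and leaves $2\boldsymbol{d}^{T}\boldsymbol{J}_{\beta}\boldsymbol{K}_{\beta}^{-1}\boldsymbol{u}_{\beta}(y,\boldsymbol{\theta}_{0})$. Multiplying the two factors, the $\tfrac{1}{2}$ and the $2$ cancel and the stated PIF expression emerges. For the LIF, either differentiating $\alpha_{\beta,\epsilon,y}(\boldsymbol{\theta}_{0})$ at $\epsilon=0$ directly, or specialising the PIF to $\boldsymbol{d}=\boldsymbol{0}_{p}$, makes the linear-in-$\boldsymbol{d}$ factor vanish, giving $\mathcal{LIF}=0$.

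The main obstacle is justifying the termwise differentiation under the infinite sum. I would handle this by a dominated-convergence argument: for $\epsilon$ in a small neighbourhood of $0$ the non-centrality $\delta_{\beta,\epsilon,y}(\boldsymbol{\theta}_{0},\boldsymbol{d})$ stays in a compact interval, on which the Poisson-type weights $c_{k}$ and $c_{k}'$ admit a factorially summable envelope independent of $\epsilon$, legitimising the interchange of $\partial_{\epsilon}$ and $\sum_{k}$. Once this is in hand the remaining steps are the mechanical calculations outlined above, and the proof closes.
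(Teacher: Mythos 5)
Your proposal is correct and follows exactly the route the paper indicates (the paper omits the details, citing Ghosh et al.\ (2016)): differentiate the contaminated-power series (\ref{EQ:Cont_power_one}) in $\epsilon$ at $\epsilon=0$, with $c_k'(s)$ supplying the factor $\tfrac12 C_p(\cdot)$ and the quadratic-form derivative of $\delta_{\beta,\epsilon,y}$ supplying $2\boldsymbol{d}^{T}\boldsymbol{J}_{\beta}\boldsymbol{K}_{\beta}^{-1}\boldsymbol{u}_{\beta}(y,\boldsymbol{\theta}_{0})$, and the LIF obtained by setting $\boldsymbol{d}=\boldsymbol{0}_{p}$. Your added dominated-convergence justification for the termwise differentiation is a harmless extra refinement beyond what the paper records.
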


\bigskip

Interestingly, we can see from the above theorem that the PIF of our proposed
Rao-type test will be bounded in the contamination point $y$ if and only if
the expression $\boldsymbol{u}_{\beta}\left(  y,\boldsymbol{\theta}%
_{0}\right)  $ is so, which is known to be true for most models at $\beta>0$.
Hence, the asymptotic power of the proposed Rao-type tests at any $\beta>0$
would be stable under infinitesimal contaminations and that of the classical
Rao test (at $\beta=0$) would be non-robust having an unbounded PIF. Note the
similarity with the boundedness of the (second order) IF of the proposed
Rao-type test statistics discussed in the previous subsection. Further, the
asymptotic level of our proposed Rao-type tests would also be extremely robust
against infinitesimal contiguous contaminations having bounded (zero) LIF.

The case of composite hypothesis in (\ref{1}) can be studied similarly. The
asymptotic contiguous power and level under contiguous contaminations can be
derived in a similar fashion and hence the corresponding PIF and LIF at
$\boldsymbol{\theta}\in\boldsymbol{\Theta}_{0}$ based on%
\begin{align*}
&  \tilde{\alpha}_{\beta,\epsilon,y}(\boldsymbol{\theta})=\lim
\limits_{n\rightarrow\infty}P_{F_{n,\epsilon,y}^{L}}(\widetilde{R}_{\beta
,n}(\boldsymbol{\tilde{\theta}})>\chi_{p,\alpha}^{2})\quad\text{and}%
\quad{\tilde{\pi}}_{\beta,\epsilon,y,\boldsymbol{d}}(\boldsymbol{\theta}%
)=\lim\limits_{n\rightarrow\infty}P_{F_{n,\epsilon,y,\boldsymbol{d}}^{P}%
}(\widetilde{R}_{\beta,n}(\boldsymbol{\tilde{\theta}})>\chi_{p,\alpha}^{2}),\\
&  \mathcal{LIF}(y,\tilde{R}_{\beta,F_{\boldsymbol{{\boldsymbol{\theta}}}}%
}(\boldsymbol{\tilde{T}}_{\beta,F_{\boldsymbol{{\boldsymbol{\theta}}}}%
}))=\left.  \dfrac{\partial}{\partial\epsilon}\tilde{\alpha}_{\beta
,\epsilon,y}(\boldsymbol{\theta})\right\vert _{\epsilon=0}\quad\text{and}%
\quad\mathcal{PIF}(y,\boldsymbol{d},\tilde{R}_{\beta
,F_{\boldsymbol{{\boldsymbol{\theta}}}}}(\boldsymbol{\tilde{T}}_{\beta
,F_{\boldsymbol{{\boldsymbol{\theta}}}}}))=\left.  \dfrac{\partial}%
{\partial\epsilon}{\tilde{\pi}}_{\beta,\epsilon,y,\boldsymbol{d}%
}(\boldsymbol{\theta})\right\vert _{\epsilon=0}.
\end{align*}
The main results are presented in the following two theorems, but their proofs
are omitted for brevity. The implications are again the same.

\begin{theorem}
\label{THM:7asymp_power_composite} Consider testing the composite null
hypothesis in (\ref{1}) by the Rao-type test statistics $\widetilde{R}%
_{\beta,n}(\widetilde{\boldsymbol{\theta}}_{\beta})$ at $\alpha$ level of
significance and $\boldsymbol{\theta}\in\boldsymbol{\Theta}_{0}$. Then the
following results hold.

\begin{enumerate}
\item[i)] The asymptotic distribution of $\widetilde{R}_{\beta,n}%
(\widetilde{\boldsymbol{\theta}}_{\beta})$ under $F_{n,\epsilon
,y,\boldsymbol{d}}^{P}$ is a non-central chi-square distribution with $r$
degrees of freedom and non-centrality parameter given by
\[
\tilde{\delta}_{\beta,\epsilon,y}(\boldsymbol{\theta},\boldsymbol{d}%
)=\boldsymbol{\delta}_{\beta,\epsilon,y}^{T}(\boldsymbol{\theta}%
,\boldsymbol{d})\boldsymbol{J}_{\beta}\left(  \boldsymbol{\theta}\right)
\boldsymbol{Q}_{\beta}(\boldsymbol{\theta})\left[  \boldsymbol{Q}_{\beta}%
^{T}(\boldsymbol{\theta})\boldsymbol{K}_{\beta}(\boldsymbol{\theta
})\boldsymbol{Q}_{\beta}(\boldsymbol{\theta})\right]  ^{-1}\boldsymbol{Q}%
_{\beta}^{T}(\boldsymbol{\theta})\boldsymbol{J}_{\beta}\left(
\boldsymbol{\theta}\right)  \boldsymbol{\delta}_{\beta,\epsilon,y}%
(\boldsymbol{\theta},\boldsymbol{d}),
\]
where $\boldsymbol{\delta}_{\beta,\epsilon,y}(\boldsymbol{\theta
},\boldsymbol{d})$ is as in Theorem \ref{THM:7asymp_power_one}.

\item[ii)] The corresponding asymptotic power under contiguous contaminated
sequence of alternative distributions $F_{n,\epsilon,y,\boldsymbol{d}}^{P}$ is
given by
\[
{\pi}_{\beta,\epsilon,y,\boldsymbol{d}}(\boldsymbol{\theta})=\sum
\limits_{k=0}^{\infty}c_{k}\left(  \tilde{\delta}_{\beta,\epsilon
,y}(\boldsymbol{\theta},\boldsymbol{d})\right)  P\left(  \chi_{r+2k}^{2}%
>\chi_{r,\alpha}^{2}\right)  ,
\]
where $c_{k}\left(  s\right)  $ is as defined in Theorem
\ref{THM:7asymp_power_one}.
\end{enumerate}
\end{theorem}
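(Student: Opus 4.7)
The plan is to combine the matrix-form Lagrange-multiplier argument from the earlier composite-null theorem with the contamination analysis of Theorem \ref{THM:7asymp_power_one}. The crucial observation is that, by equation (\ref{EQ:LM_test_Gen}), the Rao-type statistic can be written purely in terms of the Lagrange multiplier as $\widetilde{R}_{\beta,n}(\tilde{\boldsymbol{\theta}}_{\beta})=n\tilde{\boldsymbol{\lambda}}_{\beta,n}^T\bigl[\boldsymbol{Q}_{\beta}^T(\tilde{\boldsymbol{\theta}}_{\beta})\boldsymbol{K}_{\beta}(\tilde{\boldsymbol{\theta}}_{\beta})\boldsymbol{Q}_{\beta}(\tilde{\boldsymbol{\theta}}_{\beta})\bigr]^{-1}\tilde{\boldsymbol{\lambda}}_{\beta,n}$, so it suffices to determine the asymptotic distribution of $\sqrt{n}\tilde{\boldsymbol{\lambda}}_{\beta,n}$ under $F^P_{n,\epsilon,y,\boldsymbol{d}}$, together with the usual continuity of $\boldsymbol{Q}_{\beta}(\cdot)$ and $\boldsymbol{K}_{\beta}(\cdot)$ at $\boldsymbol{\theta}$.

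First, I would reproduce verbatim the Taylor expansion and matrix inversion leading to
\[
\begin{pmatrix}\sqrt{n}(\tilde{\boldsymbol{\theta}}_{\beta}-\boldsymbol{\theta})\\ \sqrt{n}\tilde{\boldsymbol{\lambda}}_{\beta,n}\end{pmatrix}=\begin{pmatrix}\boldsymbol{P}_{\beta}(\boldsymbol{\theta}) & \boldsymbol{Q}_{\beta}(\boldsymbol{\theta})\\ \boldsymbol{Q}_{\beta}^T(\boldsymbol{\theta}) & \boldsymbol{R}_{\beta}(\boldsymbol{\theta})\end{pmatrix}\begin{pmatrix}-\sqrt{n}\boldsymbol{U}_{\beta,n}(\boldsymbol{\theta})\\ \boldsymbol{0}_r\end{pmatrix}+o_P(\boldsymbol{1}_{p+r}),
\]
just as in the proof of the composite-null null-distribution theorem. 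The only new ingredient needed is the asymptotic law of $\sqrt{n}\boldsymbol{U}_{\beta,n}(\boldsymbol{\theta})$ under the contiguous contaminated sequence rather than under $F_{\boldsymbol{\theta}}$. Following exactly the steps used in the simple-null case (equation (\ref{DT}) and its surrounding expansion), I would let $\boldsymbol{\theta}^{\ast}_{n,\epsilon,\boldsymbol{d}}=\boldsymbol{T}_{\beta,F^P_{n,\epsilon,y,\boldsymbol{d}}}$, expand it in $\epsilon/\sqrt{n}$ about $\epsilon=0$ to obtain $\sqrt{n}(\boldsymbol{\theta}^{\ast}_{n,\epsilon,\boldsymbol{d}}-\boldsymbol{\theta})=\boldsymbol{\delta}_{\beta,\epsilon,y}(\boldsymbol{\theta},\boldsymbol{d})+o_P(\boldsymbol{1}_p)$, then Taylor-expand $\boldsymbol{U}_{\beta,n}(\boldsymbol{\theta}^{\ast}_{n,\epsilon,\boldsymbol{d}})$ at $\boldsymbol{\theta}$, and apply the identity (\ref{minusJ}) together with the fact that $\boldsymbol{U}_{\beta,n}(\boldsymbol{\theta}^{\ast}_{n,\epsilon,\boldsymbol{d}})$ has zero mean under $F^P_{n,\epsilon,y,\boldsymbol{d}}$. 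Together with Le Cam's Third Lemma this produces
\[
\sqrt{n}\boldsymbol{U}_{\beta,n}(\boldsymbol{\theta})\underset{n\to\infty}{\overset{\mathcal{L}}{\longrightarrow}}\mathcal{N}\bigl(\boldsymbol{J}_{\beta}(\boldsymbol{\theta})\boldsymbol{\delta}_{\beta,\epsilon,y}(\boldsymbol{\theta},\boldsymbol{d}),\boldsymbol{K}_{\beta}(\boldsymbol{\theta})\bigr).
\]

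Plugging this into the second block of the matrix equation above yields
\[
\sqrt{n}\tilde{\boldsymbol{\lambda}}_{\beta,n}\underset{n\to\infty}{\overset{\mathcal{L}}{\longrightarrow}}\mathcal{N}\bigl(-\boldsymbol{Q}_{\beta}^T(\boldsymbol{\theta})\boldsymbol{J}_{\beta}(\boldsymbol{\theta})\boldsymbol{\delta}_{\beta,\epsilon,y}(\boldsymbol{\theta},\boldsymbol{d}),\,\boldsymbol{Q}_{\beta}^T(\boldsymbol{\theta})\boldsymbol{K}_{\beta}(\boldsymbol{\theta})\boldsymbol{Q}_{\beta}(\boldsymbol{\theta})\bigr),
\]
which generalizes (\ref{Ni3}). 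The standard quadratic-form result for a multivariate normal of dimension $r$ with nonsingular covariance $\boldsymbol{Q}_{\beta}^T\boldsymbol{K}_{\beta}\boldsymbol{Q}_{\beta}$ then gives the non-central $\chi^2_r$ limit for $\widetilde{R}_{\beta,n}(\tilde{\boldsymbol{\theta}}_{\beta})$ with the non-centrality parameter collapsing, after substitution, to the stated $\tilde{\delta}_{\beta,\epsilon,y}(\boldsymbol{\theta},\boldsymbol{d})$. Part (ii) is then immediate: writing the tail probability $P(\chi^2_r(\tilde{\delta}_{\beta,\epsilon,y})>\chi^2_{r,\alpha})$ as the Poisson mixture of central $\chi^2_{r+2k}$ tails (Kotz et al.,~1967b) reproduces the claimed series with weights $c_k(\tilde{\delta}_{\beta,\epsilon,y}(\boldsymbol{\theta},\boldsymbol{d}))$.

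The main obstacle I anticipate is the rigorous justification of Le Cam's Third Lemma under the \emph{double} perturbation $(\epsilon/\sqrt{n},\boldsymbol{d}/\sqrt{n})$: one needs to expand the log-likelihood ratio of $F^P_{n,\epsilon,y,\boldsymbol{d}}$ against $F_{\boldsymbol{\theta}}$ and verify contiguity, which requires standard but delicate LAN-type bookkeeping for the joint contribution of drift and point-mass contamination. A secondary technical point is the expansion $\boldsymbol{\theta}^{\ast}_{n,\epsilon,\boldsymbol{d}}=\boldsymbol{\theta}+n^{-1/2}\boldsymbol{\delta}_{\beta,\epsilon,y}+o_P(n^{-1/2})$, which requires the Implicit Function Theorem argument of Ghosh (2015) already invoked for (\ref{IFEst}); once contiguity and this expansion are in hand, the rest of the argument is a mechanical reprise of the proof of Theorem \ref{THM:7asymp_power_one}.
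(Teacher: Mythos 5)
Your proposal is correct and follows exactly the route the paper intends: the paper omits this proof "for brevity," noting it is obtained "in a similar fashion," namely by combining the Lagrange-multiplier representation (\ref{EQ:LM_test_Gen}) and the matrix expansion from the composite-null theorem with the contaminated contiguous-alternative analysis (expansion (\ref{DT}), identity (\ref{minusJ}), Le Cam's third lemma) of Theorem \ref{THM:7asymp_power_one}, which is precisely your argument. The only cosmetic remark is the sign of the asymptotic mean of $\sqrt{n}\boldsymbol{U}_{\beta,n}(\boldsymbol{\theta})$ (you get $+\boldsymbol{J}_{\beta}\boldsymbol{\delta}_{\beta,\epsilon,y}$ where the paper's simple-null proof writes $-\boldsymbol{J}_{\beta}\boldsymbol{d}$), which is immaterial for the quadratic form and hence for the stated non-centrality parameter and the series expansion in part (ii).
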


\bigskip

\begin{theorem}
\label{THM:PIF_composite} Under the assumptions of Theorem
\ref{THM:7asymp_power_composite}, the power and level influence function of
the Rao-type tests for testing the simple null hypothesis in (\ref{1}) is
given by
\begin{align*}
\mathcal{PIF}(y,\boldsymbol{d},\tilde{R}_{\beta
,F_{\boldsymbol{{\boldsymbol{\theta}}}}}(\boldsymbol{\tilde{T}}_{\beta
,F_{\boldsymbol{{\boldsymbol{\theta}}}}}))  &  =C_{p}\left(  \tilde{\delta
}_{\beta}(\boldsymbol{\theta},\boldsymbol{d})\right)  \boldsymbol{d}%
^{T}\boldsymbol{J}_{\beta}\left(  \boldsymbol{\theta}\right)  \boldsymbol{Q}%
_{\beta}(\boldsymbol{\theta})\left[  \boldsymbol{Q}_{\beta}^{T}%
(\boldsymbol{\theta})\boldsymbol{K}_{\beta}(\boldsymbol{\theta})\boldsymbol{Q}%
_{\beta}(\boldsymbol{\theta})\right]  ^{-1}\\
&  \times\boldsymbol{Q}_{\beta}^{T}(\boldsymbol{\theta})\boldsymbol{u}_{\beta
}\left(  y,\boldsymbol{\theta}\right)  ,
\end{align*}
where $C_{p}(s)$ is as defined in Theorem \ref{THM:PIF_one},%
\[
\tilde{\delta}_{\beta}(\boldsymbol{\theta},\boldsymbol{d})=\boldsymbol{d}%
^{T}\boldsymbol{J}_{\beta}\left(  \boldsymbol{\theta}\right)  \boldsymbol{Q}%
_{\beta}(\boldsymbol{\theta})\left[  \boldsymbol{Q}_{\beta}^{T}%
(\boldsymbol{\theta})\boldsymbol{K}_{\beta}(\boldsymbol{\theta})\boldsymbol{Q}%
_{\beta}(\boldsymbol{\theta})\right]  ^{-1}\boldsymbol{Q}_{\beta}%
^{T}(\boldsymbol{\theta})\boldsymbol{J}_{\beta}\left(  \boldsymbol{\theta
}\right)  \boldsymbol{d}%
\]
and
\[
\mathcal{LIF}(y,\tilde{R}_{\beta,F_{\boldsymbol{{\boldsymbol{\theta}}}}%
}(\boldsymbol{\tilde{T}}_{\beta,F_{\boldsymbol{{\boldsymbol{\theta}}}}}))=0.
\]

\end{theorem}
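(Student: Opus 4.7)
The plan is to derive both the PIF and the LIF by differentiating the Poisson-mixture representation of the asymptotic contaminated power established in Theorem~\ref{THM:7asymp_power_composite} with respect to $\epsilon$ and evaluating at $\epsilon=0$. Starting from
\[
{\tilde{\pi}}_{\beta,\epsilon,y,\boldsymbol{d}}(\boldsymbol{\theta})=\sum_{k=0}^{\infty}c_{k}(\tilde{\delta}_{\beta,\epsilon,y}(\boldsymbol{\theta},\boldsymbol{d}))P(\chi_{r+2k}^{2}>\chi_{r,\alpha}^{2}),
\]
I would apply the chain rule term by term and interchange $\partial_{\epsilon}$ with $\sum_{k}$, justified by the super-exponential decay of the Poisson-type weights $c_{k}(s)=\frac{s^{k}}{k!\,2^{k}}e^{-s/2}$ together with dominated convergence. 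This produces
\[
\left.\frac{\partial{\tilde{\pi}}_{\beta,\epsilon,y,\boldsymbol{d}}(\boldsymbol{\theta})}{\partial\epsilon}\right|_{\epsilon=0}=\left.\frac{\partial\tilde{\delta}_{\beta,\epsilon,y}}{\partial\epsilon}\right|_{\epsilon=0}\sum_{k=0}^{\infty}c_{k}^{\prime}(\tilde{\delta}_{\beta}(\boldsymbol{\theta},\boldsymbol{d}))P(\chi_{r+2k}^{2}>\chi_{r,\alpha}^{2}).
\]

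Next I would compute the derivative of the non-centrality parameter. From Theorem~\ref{THM:7asymp_power_one} we have $\boldsymbol{\delta}_{\beta,\epsilon,y}(\boldsymbol{\theta},\boldsymbol{d})=\boldsymbol{d}+\epsilon\,\mathcal{IF}(y,\boldsymbol{T}_{\beta,F_{\boldsymbol{\theta}}})$, and $\tilde{\delta}_{\beta,\epsilon,y}$ is the quadratic form $\boldsymbol{\delta}^{T}\boldsymbol{J}_{\beta}\boldsymbol{Q}_{\beta}[\boldsymbol{Q}_{\beta}^{T}\boldsymbol{K}_{\beta}\boldsymbol{Q}_{\beta}]^{-1}\boldsymbol{Q}_{\beta}^{T}\boldsymbol{J}_{\beta}\boldsymbol{\delta}$ evaluated at $\boldsymbol{\delta}=\boldsymbol{\delta}_{\beta,\epsilon,y}$. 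Direct differentiation at $\epsilon=0$ (where $\boldsymbol{\delta}=\boldsymbol{d}$) yields $2\boldsymbol{d}^{T}\boldsymbol{J}_{\beta}\boldsymbol{Q}_{\beta}[\boldsymbol{Q}_{\beta}^{T}\boldsymbol{K}_{\beta}\boldsymbol{Q}_{\beta}]^{-1}\boldsymbol{Q}_{\beta}^{T}\boldsymbol{J}_{\beta}\,\mathcal{IF}(y,\boldsymbol{\tilde{T}}_{\beta,F_{\boldsymbol{\theta}}})$. Invoking the identity $\mathcal{IF}(y,\boldsymbol{\tilde{T}}_{\beta,F_{\boldsymbol{\theta}}})=\boldsymbol{J}_{\beta}^{-1}(\boldsymbol{\theta})\boldsymbol{u}_{\beta}(y,\boldsymbol{\theta})$ at the null, supplied by Theorem~\ref{THM:IF}, telescopes the inner $\boldsymbol{J}_{\beta}\boldsymbol{J}_{\beta}^{-1}$ pair and leaves $2\boldsymbol{d}^{T}\boldsymbol{J}_{\beta}\boldsymbol{Q}_{\beta}[\boldsymbol{Q}_{\beta}^{T}\boldsymbol{K}_{\beta}\boldsymbol{Q}_{\beta}]^{-1}\boldsymbol{Q}_{\beta}^{T}\boldsymbol{u}_{\beta}(y,\boldsymbol{\theta})$.

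To close the calculation, I would verify the combinatorial identity $2\sum_{k\ge 0}c_{k}^{\prime}(s)P(\chi_{\nu+2k}^{2}>\chi_{\nu,\alpha}^{2})=C_{\nu}(s)$, which is immediate from $c_{k}^{\prime}(s)=\frac{s^{k-1}(2k-s)}{k!\,2^{k+1}}e^{-s/2}$ and the definition of $C_{\nu}(s)$ from Theorem~\ref{THM:PIF_one}. Applying this identity with the appropriate degrees of freedom and multiplying by the previously computed $\partial_{\epsilon}\tilde{\delta}|_{\epsilon=0}$ absorbs the factor of $2$ and produces exactly the claimed $\mathcal{PIF}$ formula. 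For the LIF, substituting $\boldsymbol{d}=\boldsymbol{0}_{p}$ into the PIF annihilates it through the leading $\boldsymbol{d}^{T}$ factor; equivalently, $\tilde{\delta}_{\beta,\epsilon,y}(\boldsymbol{\theta},\boldsymbol{0})=\epsilon^{2}\,\mathcal{IF}^{T}\boldsymbol{J}_{\beta}\boldsymbol{Q}_{\beta}[\boldsymbol{Q}_{\beta}^{T}\boldsymbol{K}_{\beta}\boldsymbol{Q}_{\beta}]^{-1}\boldsymbol{Q}_{\beta}^{T}\boldsymbol{J}_{\beta}\mathcal{IF}$ is of order $\epsilon^{2}$, so its derivative at $\epsilon=0$ vanishes, giving $\mathcal{LIF}=0$.

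The main bookkeeping obstacle is the chain-rule manipulation of the doubly-nested quadratic form $\tilde{\delta}_{\beta,\epsilon,y}$ combined with the careful use of the RMDPDE influence function at the null to realize the cancellation of $\boldsymbol{J}_{\beta}\boldsymbol{J}_{\beta}^{-1}$; once that simplification is in place, the remainder is a direct analogue of the Poisson-mixture differentiation carried out for Theorem~\ref{THM:PIF_one} and in Ghosh et al.\ (2016).
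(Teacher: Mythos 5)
Your proposal is correct and is exactly the computation the paper has in mind (the paper omits the proof, pointing to the analogous differentiation of the contaminated-power series from Theorem \ref{THM:7asymp_power_composite}, as in Theorem \ref{THM:PIF_one} and Ghosh et al.\ 2016): term-by-term differentiation of $\sum_k c_k(\tilde{\delta}_{\beta,\epsilon,y})P(\chi^2_{r+2k}>\chi^2_{r,\alpha})$, the identity $2\sum_k c_k'(s)P(\chi^2_{r+2k}>\chi^2_{r,\alpha})=C_r(s)$, and the cancellation $\boldsymbol{J}_{\beta}\boldsymbol{J}_{\beta}^{-1}$ via $\mathcal{IF}(y,\boldsymbol{\tilde T}_{\beta,F_{\boldsymbol{\theta}}})=\boldsymbol{J}_{\beta}^{-1}(\boldsymbol{\theta})\boldsymbol{u}_{\beta}(y,\boldsymbol{\theta})$ at the null, with the LIF vanishing because $\tilde{\delta}_{\beta,\epsilon,y}(\boldsymbol{\theta},\boldsymbol{0}_p)=O(\epsilon^2)$. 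Your use of the degrees of freedom $r$ (i.e.\ $C_r$ rather than the paper's literal $C_p$) is the appropriate reading of the statement.
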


%\begin{proof}
%	The proof is similar to that of Theorem \ref{THM:PIF_one},
%	starting from the expression of ${\pi}_{W_{n}}(\boldsymbol{\theta}_{n},\epsilon,\boldsymbol{t})$
%	from Theorem \ref{THM:7asymp_power_composite} and using the chain rule for derivatives.
%\end{proof}

\section{Examples}

We consider the data on telephone line faults analyzed, among others, in Basu
et al. (2013). The data are presented in Table \ref{tel} and consist of the
ordered differences between the inverse test rates and the inverse control
rates in matched pairs of areas. The first observation of this dataset is a
huge outlier with respect to the normal model. In the following three
subsections three different hypothesis testing problems are considered to be
analyzed with this data set.%

%TCIMACRO{\TeXButton{B}{\begin{table}[htbp]  \tabcolsep2.8pt  \centering}}%
%BeginExpansion
\begin{table}[htbp]  \tabcolsep2.8pt  \centering
%EndExpansion%
\begin{tabular}
[c]{lcccccccccccccc}\hline
pairs & $1$ & $2$ & $3$ & $4$ & $5$ & $6$ & $7$ & $8$ & $9$ & $10$ & $11$ &
$12$ & $13$ & $14$\\
differences & $-988$ & $-135$ & $-78$ & $3$ & $59$ & $83$ & $93$ & $110$ &
$189$ & $197$ & $204$ & $229$ & $289$ & $310$\\\hline
\end{tabular}
\caption{Telephone-line faults data\label{tel}}
%TCIMACRO{\TeXButton{E}{\end{table}}}%
%BeginExpansion
\end{table}%
%EndExpansion

%%%%%%%%%%%%%%%%%%%%%%%%%%%%%%%%%

\subsection{Example 1 (One-dimensional simple null hypothesis)}

\label{Example1}
%%%%%%%%%%%%%%%%%%%%%%%%%%%%%%%%%

Suppose we have an i.i.d. sample $X_{1},\ldots,X_{n}$ from a normal population
with variance $\sigma_{0}^{2}$ and unknown mean $\mu$. We want to test the
hypothesis
\[
H_{0}:\mu=\mu_{0}\text{ against }H_{1}:\mu\neq\mu_{0}%
\]
using the proposed Rao-type test statistics. In this case the full parameter
space is given by $\Theta=\{\mu\in%
%TCIMACRO{\U{211d} }%
%BeginExpansion
\mathbb{R}
%EndExpansion
\}$, while that under the null hypothesis is given by a unique point
$\Theta_{0}=\{\mu\in%
%TCIMACRO{\U{211d} }%
%BeginExpansion
\mathbb{R}
%EndExpansion
:\mu=\mu_{0}\}$.
%Therefore
%\begin{eqnarray*}
%K_{\beta }\left( \mu _{0}\right) &=&\int_{R}u_{\boldsymbol{\mu }%
%_{0}}(x)^{2}f_{\mu }^{2\beta +1}(x)=\int_{R}\left( \frac{X-\mu _{0}}{\sigma
%_{0}^{2}}\right) ^{2}\frac{1}{\left( \sqrt{2\pi }\sigma _{0}\right)
%^{1+2\beta }}\exp \left( -\frac{1+2\beta }{2}\left( \frac{X_{i}-\mu _{0}}{%
%\sigma _{0}}\right) ^{2}\right) dx \\
%&=&\frac{1}{\left( 2\pi \right) ^{\beta }\sigma _{0}^{2\beta +2}\left(
%1+2\beta \right) ^{\frac{3}{2}}}.\\
%& = &\int_{R}\left( \frac{X-\mu _{0}}{\sigma
%_{0}^{2}}\right) ^{2}\frac{1}{\left( \sqrt{2\pi }\sigma _{0}\right)
%^{1+2\beta }}\exp \left( -\frac{1+2\beta }{2}\left( \frac{X_{i}-\mu _{0}}{%
%\sigma _{0}}\right) ^{2}\right) dx - 0.
%\end{eqnarray*}%
%Let $X$ be a normal population with known variance $\sigma _{0}^{2}$ and
%unknown mean $\mu .$ We are going to get the Rao-type score test statistics
%for testing%
%\begin{equation*}
%H_{0}:\mu =\mu _{0}\text{ again }H_{1}:\mu \neq \mu _{0}
%\end{equation*}%
%First we are going to get
%\begin{equation*}
%\varphi ^{\beta }\left( X,\mu \right) =u_{\boldsymbol{\mu }}(X)f_{\mu
%}^{\beta }(X)-\int_{R}u_{\boldsymbol{\mu }}(x)f_{\mu }^{1+\beta }(x)dx.
%\end{equation*}%
%It is immediate to see that $u_{\boldsymbol{\mu }_{0}}(x)=\frac{x-\mu _{0}}{%
%\sigma _{0}^{2}}$ and
%\begin{equation*}
%\int_{R}u_{\mu }(x)f_{\mu }^{1+\beta }(x)dx=0.
%\end{equation*}%
%Therefore,%
Direct calculations show that $u_{\beta}\left(  X,\mu\right)  $, as defined in
Equation (\ref{EQ:score_function}) is given by
\[
u_{\beta}\left(  x,\mu\right)  =\frac{x-\mu}{\sigma_{0}^{2}}\frac{1}{\left(
\sqrt{2\pi}\sigma_{0}\right)  ^{\beta}}\exp\left(  -\frac{\beta}{2}\left(
\frac{x-\mu}{\sigma_{0}}\right)  ^{2}\right)  .
\]
It is also easy to check
\[
u(x,\mu_{0})=u_{\beta=0}\left(  x,\mu_{0}\right)  =\frac{x-\mu_{0}}{\sigma
_{0}^{2}}~\text{and}~\xi\left(  \mu_{0}\right)  =\int_{-\infty}^{+\infty
}u(x,\mu_{0})f_{\mu_{0}}^{\beta+1}(x)dx=0.
\]
Also
\begin{align*}
K_{\beta}\left(  \mu_{0}\right)   &  =\int_{-\infty}^{+\infty}u^{2}\left(
x,\mu\right)  f_{\mu_{0}}^{2\beta+1}(x)dx-\xi^{2}\left(  \mu_{0}\right)  .\\
&  =\int_{-\infty}^{+\infty}\left(  \frac{x-\mu_{0}}{\sigma_{0}^{2}}\right)
^{2}\frac{1}{\left(  \sqrt{2\pi}\sigma_{0}\right)  ^{2\beta+1}}\exp\left(
-\frac{2\beta+1}{2}\left(  \frac{x-\mu_{0}}{\sigma_{0}}\right)  ^{2}\right)
dx-0\\
&  =\frac{1}{\left(  2\pi\right)  ^{\beta}\sigma_{0}^{2\beta+2}\left(
2\beta+1\right)  ^{\frac{3}{2}}}.
\end{align*}
%and
%\begin{equation*}
%\sum_{i=1}^{n}\varphi ^{\beta }\left( X_{i},\mu _{0}\right) =-\sum_{i=1}^{n}%
%\frac{X_{i}-\mu _{0}}{\sigma _{0}^{2+\beta }}\frac{1}{\left( 2\pi \right) ^{%
%\frac{\beta }{2}}}\exp \left( -\frac{\beta }{2}\left( \frac{X_{i}-\mu _{0}}{%
%\sigma _{0}}\right) ^{2}\right) .
%\end{equation*}%
%On the other hand we are going to get
%\begin{equation*}
%K_{\beta }\left( \mu _{0}\right) =\int_{R}u_{\mu _{0}}(x)^{2}f_{\mu
%}^{2\beta +1}(x)-\boldsymbol{\xi }\left( \mu _{0}\right) ^{2}.
%\end{equation*}%
%But $\boldsymbol{\xi }\left( \mu _{0}\right) =0$. Therefore
%\begin{eqnarray*}
%K_{\beta }\left( \mu _{0}\right) &=&\int_{R}u_{\boldsymbol{\mu }%
%_{0}}(x)^{2}f_{\mu }^{2\beta +1}(x)=\int_{R}\left( \frac{X-\mu _{0}}{\sigma
%_{0}^{2}}\right) ^{2}\frac{1}{\left( \sqrt{2\pi }\sigma _{0}\right)
%^{1+2\beta }}\exp \left( -\frac{1+2\beta }{2}\left( \frac{X_{i}-\mu _{0}}{%
%\sigma _{0}}\right) ^{2}\right) dx \\
%&=&\frac{1}{\left( 2\pi \right) ^{\beta }\sigma _{0}^{2\beta +2}\left(
%1+2\beta \right) ^{\frac{3}{2}}}.
%\end{eqnarray*}%
Then
\begin{align}
R_{\beta,n}\left(  \mu_{0}\right)   &  =nU_{\beta}^{2}\left(  \mu\right)
\left/  K_{\beta}\left(  \mu_{0}\right)  \right.  =\frac{1}{nK_{\beta}\left(
\mu_{0}\right)  }\left(  \sum\limits_{i=1}^{n}u_{\beta}\left(  X_{i}%
,\mu\right)  \right)  ^{2}\nonumber\\
&  =\frac{\left(  2\pi\right)  ^{\beta}\sigma_{0}^{2\beta+2}}{n\left(
2\beta+1\right)  ^{-\frac{3}{2}}}\left[  \sum_{i=1}^{n}\frac{X_{i}-\mu_{0}%
}{\sigma_{0}^{\beta+2}}\frac{1}{\left(  2\pi\right)  ^{\frac{\beta}{2}}}%
\exp\left(  -\frac{\beta}{2}\left(  \frac{X_{i}-\mu_{0}}{\sigma_{0}}\right)
^{2}\right)  \right]  ^{2}\nonumber\\
&  =\frac{\left(  2\beta+1\right)  ^{\frac{3}{2}}}{n}\left(  \sum_{i=1}%
^{n}\frac{X_{i}-\mu_{0}}{\sigma_{0}}\exp\left(  -\frac{\beta}{2}\left(
\frac{X_{i}-\mu_{0}}{\sigma_{0}}\right)  ^{2}\right)  \right)  ^{2}.
\label{R1}%
\end{align}
For $\beta=0$, we recover the score equation for the MLE, and the test
statistic reduces to the classical Rao statistic
\[
R_{n}\left(  \mu_{0}\right)  =R_{\beta=0,n}\left(  \mu_{0}\right)  =\left(
\frac{\bar{X}_{n}-\mu_{0}}{\frac{\sigma_{0}}{\sqrt{n}}}\right)  ^{2}.
\]
where $\bar{X}_{n}$ is the sample mean.

In the top panel of Figure \ref{figEx1} the values of the Rao type test
statistics are plotted for the telephone-line faults data for $\mu_{0}=0$ and
$\sigma_{0}=175$. The threshold for the acceptance region of the null
hypothesis shows that for outlier deleted data, i.e. when the first
observation is removed, the null hypothesis is rejected for all values of
$\beta\in\lbrack0,1]$ at the nominal level $\alpha=0.05$; however, \ the full
data set, the null hypothesis cannot be rejected for most values of
$\beta<0.1$, including for the classical Rao test-statistic ($\beta=0$). The
difference in the final conclusion of the test in the full data and outlier
deleted data scenarios indicate the lack of robustness of the classical Rao
statistic, in that here a single extreme observation is able to control the
conclusion of the test in a sample of size $14$. On the other hand, values of
$\beta>0.1$ lead to a similar conclusion for the Rao type test statistic with
or without outlier; this indicates the outlier stability of the proposed Rao
type test statistics for moderately large values of $\beta$.%

%TCIMACRO{\TeXButton{B}{\begin{figure}[htbp]  \tabcolsep2.8pt  \centering}}%
%BeginExpansion
\begin{figure}[htbp]  \tabcolsep2.8pt  \centering
%EndExpansion%
\begin{tabular}
[c]{l}%
%TCIMACRO{\FRAME{itbpF}{6.3763in}{4.0058in}{0in}{}{}{examplenormal0.eps}%
%{\special{ language "Scientific Word";  type "GRAPHIC";  display "USEDEF";
%valid_file "F";  width 6.3763in;  height 4.0058in;  depth 0in;
%original-width 13.6018in;  original-height 8.7692in;  cropleft "0.1034";
%croptop "0.9603";  cropright "0.9222";  cropbottom "0.0880";
%filename 'figures/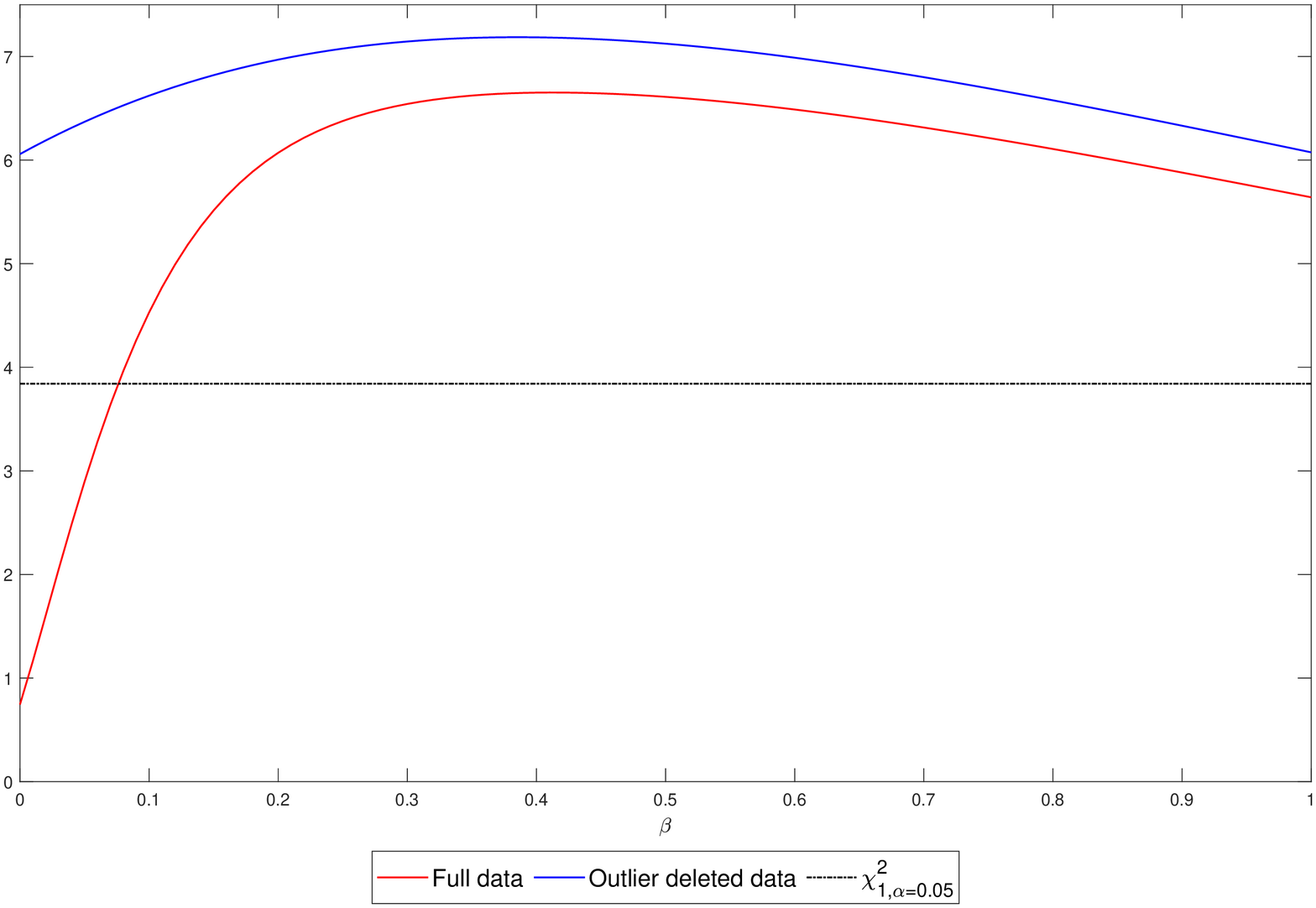';file-properties "XNPEU";}} }%
%BeginExpansion
{\includegraphics[
trim=1.406426in 0.771690in 1.058220in 0.348137in,
height=4.0058in,
width=6.3763in
]%
{exampleNormal0.eps}%
}
%EndExpansion
\\%
%TCIMACRO{\FRAME{itbpF}{6.3763in}{4.0058in}{0in}{}{}{examplenormal0b.eps}%
%{\special{ language "Scientific Word";  type "GRAPHIC";  display "USEDEF";
%valid_file "F";  width 6.3763in;  height 4.0058in;  depth 0in;
%original-width 13.6018in;  original-height 8.7692in;  cropleft "0.1034";
%croptop "0.9603";  cropright "0.9222";  cropbottom "0.0880";
%filename 'figures/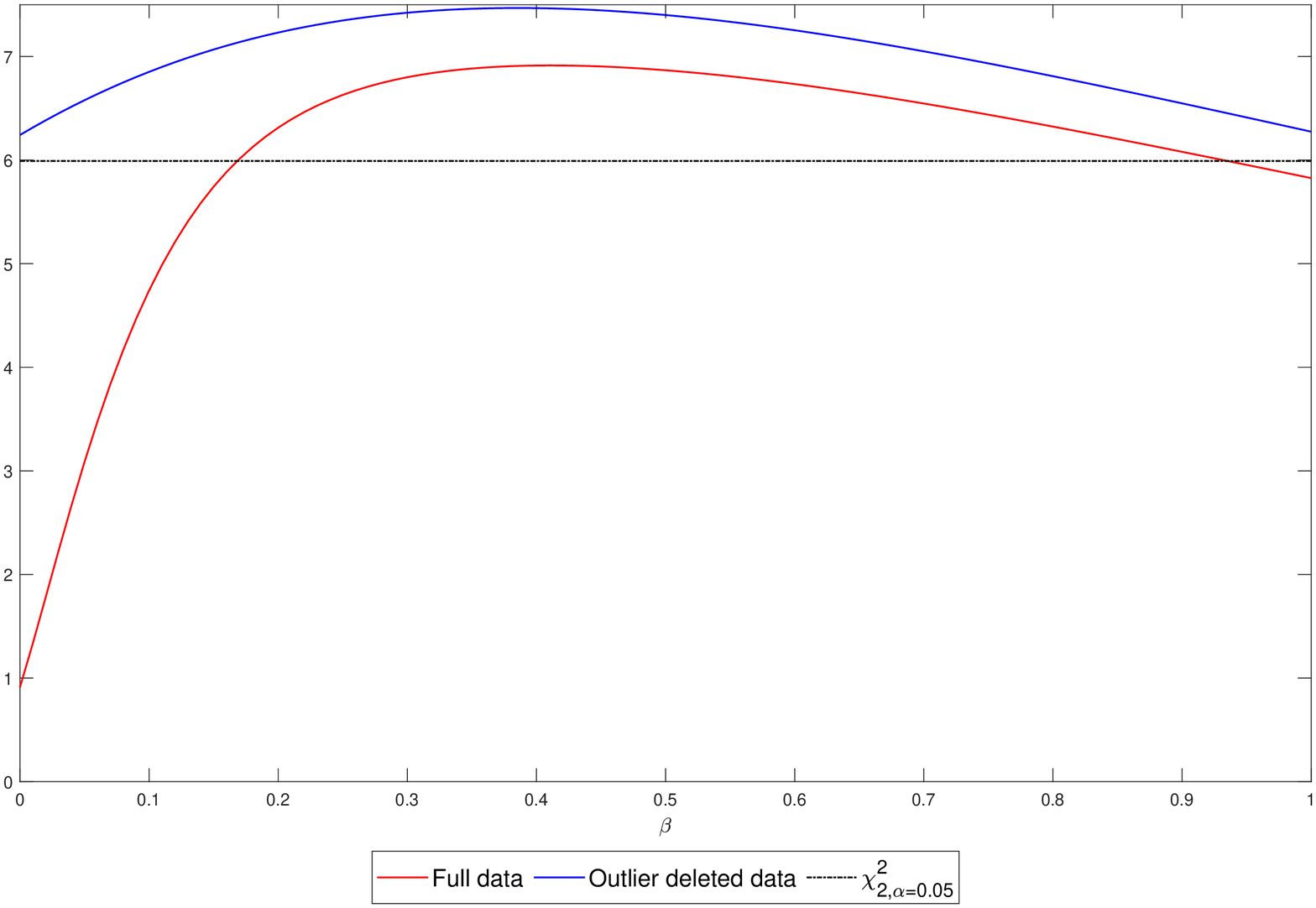';file-properties "XNPEU";}} }%
%BeginExpansion
{\includegraphics[
trim=1.406426in 0.771690in 1.058220in 0.348137in,
height=4.0058in,
width=6.3763in
]%
{exampleNormal0b.eps}%
}
%EndExpansion
\end{tabular}
\caption{Telephone-line faults data: $R_{n,\beta}(\mu_{0})$ in the top and $R_{n,\beta}(\mu_{0},\sigma_{0})$ in the bottom, both in terms of tuning parameter $\beta$ on abscissa axis\label{figEx1}}
%TCIMACRO{\TeXButton{E}{\end{figure}}}%
%BeginExpansion
\end{figure}%
%EndExpansion

\subsection{Example 2 (Two-dimensional simple null hypothesis)}

%%%%%%%%%%%%%%%%%%%%%

Let $X$ be a normal population with both parameters $\mu$ and $\sigma$
unknown. We want to derive the Rao-type test statistics for testing%
\begin{equation}
H_{0}:\left(  \mu,\sigma\right)  =\left(  \mu_{0},\sigma_{0}\right)  \text{
against }H_{1}:\left(  \mu,\sigma\right)  \neq\left(  \mu_{0},\sigma
_{0}\right)  . \label{2.5}%
\end{equation}
In this case the full parameter space is given by $\Theta=\{\left(  \mu
,\sigma\right)  \in%
%TCIMACRO{\U{211d} }%
%BeginExpansion
\mathbb{R}
%EndExpansion
\times%
%TCIMACRO{\U{211d} }%
%BeginExpansion
\mathbb{R}
%EndExpansion
^{+}\}$, while that under the null hypothesis is given by a unique point
$\Theta_{0}=\{\left(  \mu,\sigma\right)  \in%
%TCIMACRO{\U{211d} }%
%BeginExpansion
\mathbb{R}
%EndExpansion
\times%
%TCIMACRO{\U{211d} }%
%BeginExpansion
\mathbb{R}
%EndExpansion
^{+}:(\mu,\sigma)=(\mu_{0},\sigma_{0})\}$. In order to evaluate
$\boldsymbol{u}_{\beta}\left(  x,\mu_{0},\sigma_{0}\right)  =\left(
u_{\beta,1}\left(  x,\mu_{0},\sigma_{0}\right)  ,u_{\beta,2}\left(  x,\mu
_{0},\sigma_{0}\right)  \right)  ^{T}$, note that from the previous example we
have
\begin{equation}
u_{\beta,1}\left(  x,\mu,\sigma\right)  =\frac{x-\mu}{\sigma^{\beta+2}}%
\frac{1}{\left(  2\pi\right)  ^{\frac{\beta}{2}}}\exp\left(  -\frac{\beta}%
{2}\left(  \frac{x-\mu}{\sigma}\right)  ^{2}\right)  . \label{EQ:phi_1}%
\end{equation}
%In order to get $\varphi _{2}^{\beta }\left( X,\mu _{0},\sigma _{0}\right) $
%we have
%\begin{equation*}
%u_{\boldsymbol{\sigma }_{0}}(x)=\left( \frac{\partial \log f_{\mu
%_{0},\sigma _{0}}(x)}{\partial \sigma }\right) _{\left( \mu ,\sigma \right)
%=\mu _{0},\sigma _{0}}=-\frac{1}{\sigma _{0}}+\left( \frac{X-\mu _{0}}{%
%\sigma _{0}}\right) ^{2}\frac{1}{\sigma _{0}}
%\end{equation*}%
%and
%\begin{equation*}
%\int_{R}u_{\boldsymbol{\sigma }_{0}}(x)f_{\mu _{0},\sigma _{0}}(x)^{1+\beta
%}dx=-\frac{\beta }{\left( \sigma _{0}\right) ^{\beta +1}\left( \sqrt{2\pi }%
%\right) ^{\beta }\left( 1+\beta \right) ^{3/2}}.
%\end{equation*}%
Simple calculations yield
\begin{equation}
u_{\beta,2}\left(  x,\mu,\sigma\right)  =\frac{1}{\sigma^{\beta+1}\left(
2\pi\right)  ^{\beta/2}}\left[  \left(  \frac{(x-\mu)^{2}}{\sigma^{2}%
}-1\right)  \exp\left(  -\frac{\beta}{2}\left(  \frac{x-\mu}{\sigma}\right)
^{2}\right)  +\frac{\beta}{\left(  \beta+1\right)  ^{3/2}}\right]  ,
\label{EQ:phi_2}%
\end{equation}
and
%Now we have,%
%\begin{align*}
%\boldsymbol{\varphi }^{\beta }\left( X,\mu _{0},\sigma _{0}\right) &
%=\sum_{i=1}^{n}\left( -\frac{X_{i}-\mu _{0}}{\left( 2\pi \right) ^{\frac{%
%\beta }{2}}\sigma _{0}^{2+\beta }}\exp \left( -\frac{\beta }{2}\left( \frac{%
%X_{i}-\mu _{0}}{\sigma _{0}}\right) ^{2}\right) ,\frac{1}{\sigma _{0}}%
%+\left( \frac{X_{i}-\mu _{0}}{\sigma _{0}}\right) ^{2}\frac{1}{\sigma
%_{0}^{2}\sqrt{2\pi }}\exp \left( -\frac{\beta }{2}\left( \frac{X_{i}-\mu _{0}%
%}{\sigma _{0}}\right) ^{2}\right) \right.  \\
%& \left. +\frac{\beta }{\left( \sigma _{0}\right) ^{\beta +1}\left( \sqrt{%
%2\pi }\right) ^{\beta }\left( 1+\beta \right) ^{3/2}}\right)
%\end{align*}%
%On the other hand it is not difficult to get%
\begin{align}
\boldsymbol{K}_{\beta}\left(  \mu,\sigma\right)   &  =\mathrm{diag}%
\{K_{\beta,11}(\mu,\sigma),K_{\beta,22}(\mu,\sigma)\},\label{EQ:K}\\
K_{\beta,11}(\mu,\sigma)  &  =\frac{1}{\sigma^{2(\beta+1)}\left(  2\pi\right)
^{\beta}\left(  2\beta+1\right)  ^{3/2}},\nonumber\\
K_{\beta,22}(\mu,\sigma)  &  =\frac{\tau(\beta)}{\sigma^{2(\beta+1)}\left(
2\pi\right)  ^{\beta}},\nonumber\\
\tau(\beta)  &  =\frac{2\left(  2\beta^{2}+1\right)  \sqrt{2\beta+1}}%
{(2\beta+1)^{3}}-\frac{\beta^{2}}{(\beta+1)^{3}}, \label{tao}%
\end{align}
Making use of the above, the Rao-type test statistics finally turns out to be
%$\frac{\left( 4\beta ^{2}+2\right) \sqrt{2\beta +1}-\beta ^{2}}{\sigma
%_{0}^{2}\left( 1+\beta \right) ^{3}}$, inverse: $\sigma _{0}^{2}\frac{\left(
%\beta +1\right) ^{3}}{\left( 4\beta ^{2}+2\right) \sqrt{2\beta +1}-\beta ^{2}%
%}\allowbreak $%
\begin{align*}
R_{\beta,n}\left(  \mu_{0},\sigma_{0}\right)   &  =R_{\beta,n}\left(  \mu
_{0}\right)  +R_{\beta,n}\left(  \sigma_{0}\right)  ,\\
R_{\beta,n}\left(  \mu_{0}\right)   &  =\frac{1}{nK_{\beta,11}(\mu_{0}%
,\sigma_{0})}\left(  \sum_{i=1}^{n}u_{\beta,1}\left(  X_{i},\mu_{0},\sigma
_{0}\right)  \right)  ^{2},\\
R_{\beta,n}\left(  \sigma_{0}\right)   &  =\frac{1}{nK_{\beta,22}(\mu
_{0},\sigma_{0})}\left(  \sum_{i=1}^{n}u_{\beta,2}\left(  X_{i},\mu_{0}%
,\sigma_{0}\right)  \right)  ^{2},
\end{align*}
the combination of two one-dimensional Rao-type test statistics. On one hand,
$R_{\beta,n}\left(  \mu_{0}\right)  $ has the same expression as (\ref{R1}),
and is useful to test the hypothesis
\[
H_{0}:\mu=\mu_{0}\text{ against }H_{1}:\mu\neq\mu_{0}%
\]
from a normal population with known variance $\sigma_{0}^{2}$ and unknown mean
$\mu$. On the other hand
\[
R_{\beta,n}\left(  \sigma_{0}\right)  =\frac{1}{n\tau(\beta)}\left(
\sum_{i=1}^{n}\left[  \left(  \frac{X_{i}-\mu_{0}}{\sigma_{0}}\right)
^{2}-1\right]  \exp\left(  -\frac{\beta}{2}\left(  \frac{X_{i}-\mu_{0}}%
{\sigma_{0}}\right)  ^{2}\right)  +\frac{\beta}{\left(  \beta+1\right)
^{3/2}}\right)  ^{2},
\]
with $\tau(\beta)$ given by (\ref{tao}), has the same expression as the Rao
type test statistics to test the hypothesis
\[
H_{0}:\sigma^{2}=\sigma_{0}^{2}\text{ against }H_{1}:\sigma^{2}\neq\sigma
_{0}^{2}%
\]
from a normal population with known mean $\mu_{0}$ and unknown variance
$\sigma^{2}$. For $\beta=0$, we get
\begin{align}
R_{\beta=0,n}\left(  \mu_{0}\right)   &  =\left(  \frac{\bar{X}_{n}-\mu_{0}%
}{\frac{\sigma_{0}}{\sqrt{n}}}\right)  ^{2},~R_{\beta=0,n}\left(  \sigma
_{0}\right)  =\frac{n}{2}\left(  \frac{S_{\mu_{0}}^{2}-\sigma_{0}^{2}}%
{\sigma_{0}^{2}}\right)  ^{2},\nonumber\\
S_{\mu_{0}}^{2}  &  =\frac{1}{n}\sum_{i=1}^{n}\left(  X_{i}-\mu_{0}\right)
^{2}, \label{S2mu}%
\end{align}
so that the classical Rao test for testing the hypothesis (\ref{2.5}) is given
by the statistic
\[
R_{\beta=0,n}\left(  \mu_{0},\sigma_{0}\right)  =\left(  \frac{\bar{X}_{n}%
-\mu_{0}}{\frac{\sigma_{0}}{\sqrt{n}}}\right)  ^{2}+\frac{n}{2}\left(
\frac{S_{\mu_{0}}^{2}-\sigma_{0}^{2}}{\sigma_{0}^{2}}\right)  ^{2},
\]
which is the classical Rao test for this hypothesis $R_{n}\left(  \mu
_{0},\sigma_{0}\right)  $.

In the bottom panel of Figure \ref{figEx1} the value of Rao type test
statistics is plotted for telephone-line faults data for $\mu_{0}=0$ and
$\sigma_{0}=175$. The threshold for the acceptance region of the null
hypothesis shows that for outlier deleted data, i.e. when the first
observation is removed, the null hypothesis cannot be rejected with
$\alpha=0.05$, however for most of values $\beta<0.2$, including the classical
Rao test statistic ($\beta=0$), the null hypothesis is rejected for the full
data set. This different conclusion in the decision of the test clarifies the
lack of robustness of the classical Rao test statistic. On the other hand,
most of values $\beta>0.2$ have a similar value of the Rao type test statistic
deleting or not the outlier, this fact shows the robust property of the
proposed Rao type test statistics.

%%%%%%%%%%%%%%%%%%%%%%%%%

\subsection{Example 3 (Two-dimensional composite null
hypothesis)\label{Example3}}

%%%%%%%%%%%%%%%%%%%%%%%%%

Let $X$ be a normal population with unknown variance $\sigma^{2}$ and mean
$\mu.$ We will develop the Rao-type test statistics for testing%
\[
H_{0}:\mu=\mu_{0}\text{ against }H_{1}:\mu\neq\mu_{0},
\]
where $\sigma^{2}$ is an unknown nuisance parameter. In this case the full
parameter space is given by $\Theta=\{\left(  \mu,\sigma\right)  \in%
%TCIMACRO{\U{211d} }%
%BeginExpansion
\mathbb{R}
%EndExpansion
\times%
%TCIMACRO{\U{211d} }%
%BeginExpansion
\mathbb{R}
%EndExpansion
^{+}\}$, while that under the null hypothesis is given by $\Theta
_{0}=\{\left(  \mu,\sigma\right)  \in%
%TCIMACRO{\U{211d} }%
%BeginExpansion
\mathbb{R}
%EndExpansion
\times%
%TCIMACRO{\U{211d} }%
%BeginExpansion
\mathbb{R}
%EndExpansion
^{+}:\mu=\mu_{0}\}$, which is fixed in one of the two dimensions. If we
consider the function $m\left(  \mu,\sigma\right)  =\mu-\mu_{0},$ the null
hypothesis $H_{0}$ can be written alternatively as
%
%where $\sigma ^{2}$ is an unknown nuisance parameter. In this case the
%unrestricted and null parameter spaces are given by $\Theta =\{(\mu ,\sigma
%)\in {\mathbb{R}}^{2}|\mu \in {\mathbb{R}},\sigma \in {\mathbb{R}}^{+}\}$
%and $\Theta _{0}=\{(\mu ,\sigma )\in {\mathbb{R}}^{2}|\mu =\mu _{0},\sigma
%\in {\mathbb{R}}^{+}\}$ respectively. If we consider the function $g(%
%\boldsymbol{\theta })=\mu -\mu _{0},$ with $\boldsymbol{\theta }=\left( \mu
%,\sigma \right) ^{T}$, the null hypothesis $H_{0}$ can be written as%
\[
H_{0}:m(\mu,\sigma)=0.
\]
We can observe that in our case $\boldsymbol{M}\left(  \mu,\sigma\right)
=\left(  1,0\right)  ^{T}$. With $f_{\mu_{0},\sigma^{2}}(x)$ being the normal
density with mean $\mu_{0}$ and variance $\sigma^{2}$, we consider the
statistic given in Equation (\ref{redR}), with $r=1$ and $\boldsymbol{K}%
_{\beta}\left(  \mu,\sigma\right)  $ given by (\ref{EQ:K}). On the other hand
\[
\boldsymbol{U}_{\beta,n}(\mu_{0},\widetilde{\sigma}_{\beta})=\frac{1}%
{n}\left(  \sum_{i=1}^{n}u_{\beta,1}\left(  X_{i},\mu_{0},\widetilde{\sigma
}_{\beta}\right)  ,\sum_{i=1}^{n}u_{\beta,2}\left(  X_{i},\mu_{0}%
,\widetilde{\sigma}_{\beta}\right)  \right)  ^{T},
\]
with $u_{\beta,1}\left(  x,\mu,\sigma\right)  $ and $u_{\beta,2}\left(
x,\mu,\sigma\right)  $ being as given in Equations (\ref{EQ:phi_1}) and
(\ref{EQ:phi_2}), respectively. The estimator $\boldsymbol{\tilde{\theta}%
}_{\beta}=\left(  \mu_{0},\widetilde{\sigma}_{\beta}\right)  ^{T}$, for known
$\mu=\mu_{0}$ when $\beta>0$, is the solution of the nonlinear equation
$U_{2,\beta,n}(\mu_{0},\widetilde{\sigma}_{\beta})=0$, with $u_{2,\beta
}\left(  x,\mu,\sigma\right)  $ given $\mu=\mu_{0}$ has the same expression as
(\ref{EQ:phi_2}) and $U_{2,\beta,n}(\mu_{0},\widetilde{\sigma}_{\beta}%
)=\frac{1}{n}\sum_{i=1}^{n}u_{2,\beta}\left(  X_{i},\mu_{0},\sigma\right)  $.
%\begin{equation*}
%\varphi _{1}^{\beta }\left( X_{i},\mu _{0},\widetilde{\sigma }_{\beta
%}\right) =\frac{X_{i}-\mu _{0}}{\widetilde{\sigma }_{\beta }^{2+\beta }}%
%\frac{1}{\left( 2\pi \right) ^{\frac{\beta }{2}}}\exp \left( -\frac{\beta }{2%
%}\left( \frac{X_{i}-\mu _{0}}{\widetilde{\sigma }_{\beta }}\right)
%^{2}\right)
%\end{equation*}%
%and
%\begin{equation*}
%\varphi _{2}^{\beta }\left( X_{i},\mu _{0},\widetilde{\sigma }_{\beta
%}\right) =\left[-\frac{1}{\widetilde{\sigma }_{\beta }}+\frac{(X-\mu _{0})^2}{\widetilde{\sigma }_{\beta }^3}\right]
%\frac{1}{(\widetilde{\sigma }_{\beta } \sqrt{2\pi})^\beta}\exp \left( -%
%\frac{\beta }{2}\left( \frac{X_{i}-\mu _{0}}{\widetilde{\sigma }_{\beta }}%
%\right) ^{2}\right) +\frac{\beta }{\widetilde{\sigma }_{\beta }^{1+\beta
%}\left( 2\pi \right) ^{\frac{\beta }{2}}\left( 1+\beta \right) ^{3/2}}.
%\end{equation*}%
Based on the previous calculations, we have
\begin{align*}
\tilde{R}_{\beta,n}(\mu_{0},\widetilde{\sigma}_{\beta})  &  =\frac{1}%
{n}K_{\beta,11}(\mu_{0},\widetilde{\sigma}_{\beta})\left(  \sum_{i=1}%
^{n}u_{\beta,1}\left(  X_{i},\mu_{0},\widetilde{\sigma}_{\beta}\right)
\right)  ^{2}\\
&  =\frac{1}{n}\left(  2\beta+1\right)  ^{\frac{3}{2}}\left(  \sum_{i=1}%
^{n}\frac{X_{i}-\mu_{0}}{\widetilde{\sigma}_{\beta}}\exp\left(  -\frac{\beta
}{2}\left(  \frac{X_{i}-\mu_{0}}{\widetilde{\sigma}_{\beta}}\right)
^{2}\right)  \right)  ^{2}.
\end{align*}
In particular, since $\widetilde{\sigma}_{\beta=0}^{2}=S_{\mu_{0}}^{2}$ given
by (\ref{S2mu}) for $\beta=0$, the classical Rao test statistic has the
following simple expression%
\[
\tilde{R}_{n}(\mu_{0},S_{\mu_{0}}^{2})=\tilde{R}_{\beta=0,n}(\mu
_{0},\widetilde{\sigma}_{\beta=0})=\left(  \frac{\bar{X}_{n}-\mu_{0}}%
{\frac{S_{\mu_{0}}}{\sqrt{n}}}\right)  ^{2}.
\]

In the top panel of Figure \ref{figEx2} the values of the Rao type test
statistics are plotted for the telephone-line faults data for $\mu_{0}=0$. For
the cleaned data, i.e., with the first observation removed, all Rao-type test
statistics are above the threshold and are able to reject the null hypothesis
at the nominal level $\alpha=0.05$. However, for the full data, most values of
$\beta<0.2$ produce test statistics smaller than the threshold and therefore
fail to reject the null hypothesis. The closeness of the statistics for the
full data and the outlier deleted data as a function of $\beta$ correspond
approximately to the closeness of the scale estimates for the full data and
the outlier deleted data which is demonstrated in the bottom panel of Figure 2.
On the whole this gives a good illustration of the robustness and stability
properties of the Rao-type test statistics for larger values of $\beta$.%

%TCIMACRO{\TeXButton{B}{\begin{figure}[htbp]  \tabcolsep2.8pt  \centering} }%
%BeginExpansion
\begin{figure}[htbp]  \tabcolsep2.8pt  \centering
%EndExpansion
\begin{tabular}
[c]{l}%
%TCIMACRO{\FRAME{itbpF}{6.4238in}{4.0032in}{0in}{}{}{examplenormal1.eps}%
%{\special{ language "Scientific Word";  type "GRAPHIC";  display "USEDEF";
%valid_file "F";  width 6.4238in;  height 4.0032in;  depth 0in;
%original-width 13.4487in;  original-height 8.7692in;  cropleft "0.1028";
%croptop "0.9602";  cropright "0.9228";  cropbottom "0.0884";
%filename 'figures/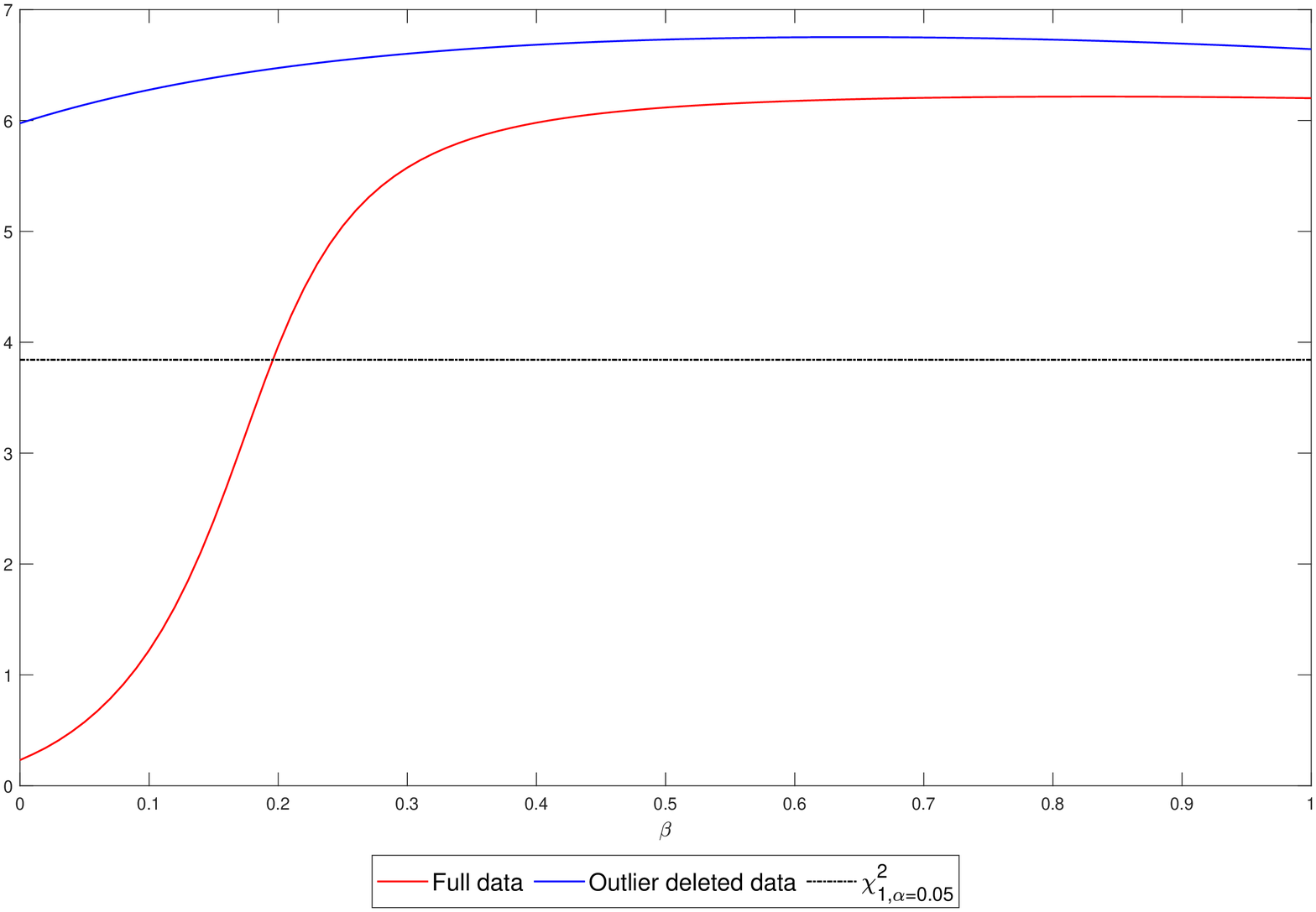';file-properties "XNPEU";}} }%
%BeginExpansion
{\includegraphics[
trim=1.382526in 0.775197in 1.038240in 0.349014in,
height=4.0032in,
width=6.4238in
]%
{exampleNormal1.eps}%
}
%EndExpansion
\\%
%TCIMACRO{\FRAME{itbpF}{6.423in}{4.0023in}{0in}{}{}{examplenormal2.eps}%
%{\special{ language "Scientific Word";  type "GRAPHIC";  display "USEDEF";
%valid_file "F";  width 6.423in;  height 4.0023in;  depth 0in;
%original-width 13.4487in;  original-height 8.7692in;  cropleft "0.1028";
%croptop "0.9604";  cropright "0.9226";  cropbottom "0.0888";
%filename 'figures/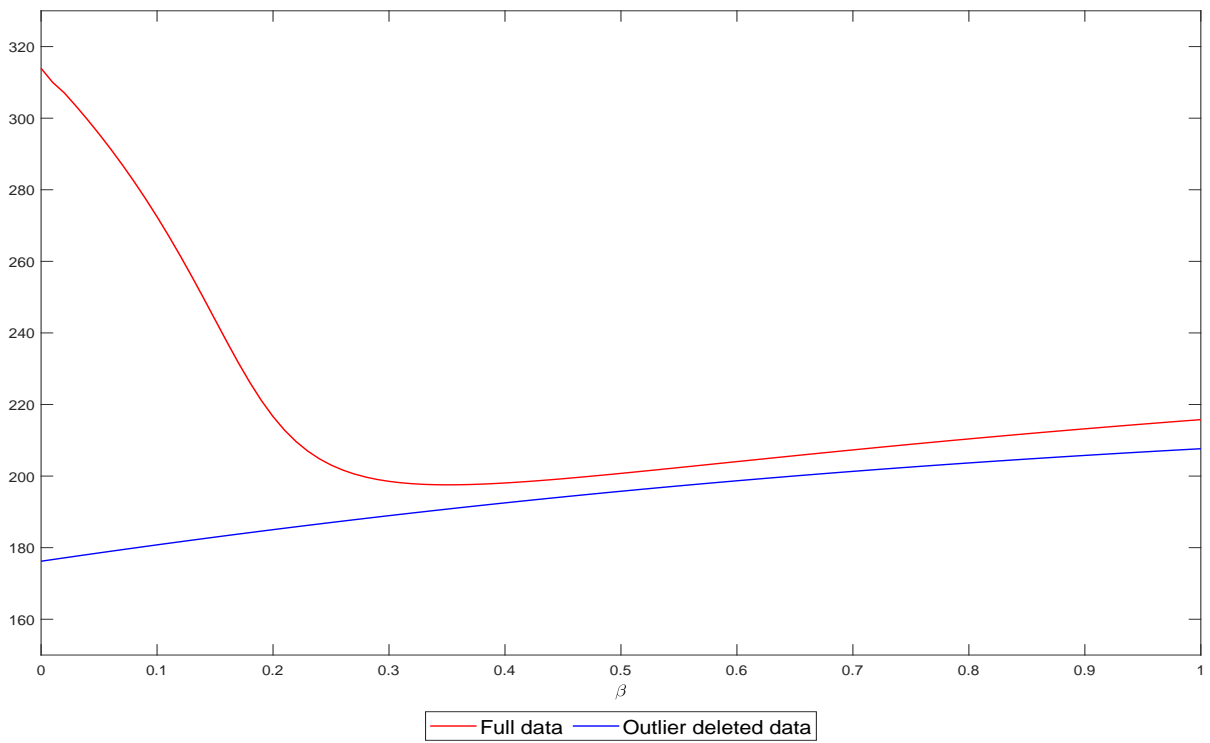';file-properties "XNPEU";}} }%
%BeginExpansion
{\includegraphics[
trim=1.382526in 0.778705in 1.040930in 0.347260in,
height=4.0023in,
width=6.423in
]%
{exampleNormal2.eps}%
}
%EndExpansion
\end{tabular}
\caption{Telephone-line faults data: $\tilde{R}_{n,\beta}(\mu_{0},\widetilde{\sigma}_{\beta})$ (top) and $\widetilde{\sigma}_{\beta}$ (bottom) both in terms of tuning parameter $\beta$ on abscissa axis
		\label{figEx2}}
%TCIMACRO{\TeXButton{E}{\end{figure}}}%
%BeginExpansion
\end{figure}%
%EndExpansion

\section{Simulation study}

We have considered the one-dimensional simple null hypothesis and the
two-dimesnional composite null hypothesis for studying the performance of the
Rao-type test statistics for normal populations. Our simulations have covered
eleven values of the tuning parameter $\beta=k\times0.1$ for $k\in
\{0,1,...,10\}$; however in our plots we have only included seven values
associated to $k\in\{0,2,...,10\}\cup\{3\}$ to make the identification of
different curves, associated with $\beta$, easier. We have used $R=1,000,000$
replications and have taken sample sizes equal to $n\in\{5,6,...,50\}$; when
$10\%$ contamination is introduced, some of the smaller samples may not have
any outliers at all, but by the time samples are of sizes around $50$, and
reasonable stability in the general behavior of the Rao-type test statistics
under contamination should be observable under the two scenarios we are going
to describe.\medskip

\textbf{Scenario 1}: This is based on Example \ref{Example1} (one-dimensional
simple null hypothesis). Here $X$ is assumed to be a normal random variable
with known variance $\sigma_{0}^{2}=1$ and unknown mean $\mu$. The
corresponding Rao-type test statistics for testing%
\begin{equation}
H_{0}:\mu=0\text{ against }H_{1}:\mu\neq0,\label{eq0}%
\end{equation}
is given by%
\[
R_{\beta,n}\left(  \mu_{0}=0\right)  =\frac{1}{n}\left(  2\beta+1\right)
^{\frac{3}{2}}\left(  \sum_{i=1}^{n}X_{i}\exp\left(  -\frac{\beta}{2}X_{i}%
^{2}\right)  \right)  ^{2}.
\]
The empirical significance level of this test (at nominal level $\alpha$) is
computed as the proportion of replications (out of the total $R$) where the
Rao-type test statistic exceeds the asymptotic threshold given by
$\chi_{1,\alpha}^{2}$ ($\chi_{1}^{2}$ quantile of order $\alpha$, on the
right), with $\alpha=0.05$. For pure data it is seen, at the top panel of
Figure \ref{fig1s}, that the quantile of the classical Rao test statistic
($\beta=0$) matches almost perfectly with the chi-square quantile of order
$0.05$ for any sample size $n$, and the variation around the line
$\alpha=0.05$\ is due to the sampling fluctuations (smaller than $1\%$ in
absolute value) alone. As $n$ increases, the empirical levels of the Rao-type
test statistics (for all $\beta$) get closer to the nominal values. The
approximation is better as $\beta$ is closer to $0$, and hence $\beta=0.2$ has
better performance than $\beta=0.4$, but worse than $\beta=0$, in overall
terms for all sample sizes.

In order to study attained levels of our proposed tests the performance of the
significance levels under contamination, we generated observations from the
$0.9\mathcal{N}(0,1)+0.1\mathcal{N}(-4.5,1)$ mixture. It is observed that all
test statistics corresponding to $\beta\geq0.4$ provide remarkably stable
results in terms of the closeness to the empirical levels and nominal levels.
For very small values of $\beta$ the results are generally reasonable at small
sample sizes; however for $n>10$, the empirical level of the classical Rao
test blows up with increasing sample size. The same phenomenon is observed, at
a lesser degree, for the test corresponding to $\beta=0.2$. A slow inflation
appears to take place for $\beta=0.3$ as well.

To investigate the power behavior of the Rao-type test statistics under pure
data, we have taken $\mathcal{N}(-0.5,1)$ for all simulations; the empirical
power of a given test is the empirical proportion of the number of test
statistics exceeding the chi-square quantile threshold, $\chi_{1,\alpha}^{2}$.
As shown in the top panel of Figure \ref{fig2s}, the classical Rao test
statistic ($\beta=0$) exhibits the highest power with pure data and in general
as $\beta$ decreases the power is higher. For creating contamination, the
samples for the scenario described in the bottom panel of Figure 4 come from
the normal mixture $0.9\mathcal{N}(-0.5,1)+0.1\mathcal{N}(5,1)$. Under
contaminated data the classical Rao test statistic ($\beta=0$) exhibits a very
significant drop in power, making the power curve practically flat. Most of
the test statistics corresponding to positive values of $\beta$ perform much
better in holding the power levels under contamination. The best performance
in terms of power under contamination is provided by the test-statistic
corresponding to $\beta=0.4$.\medskip

\textbf{Scenario 2}: Taking Example \ref{Example3} as the basis
(two-dimensional composite null hypothesis), we have taken $X$ to be a
normally distributed random variable with unknown mean $\mu$ and variance
$\sigma^{2}$. The corresponding Rao-type test statistics for testing the
hypothesis in (\ref{eq0}) is given by%
\[
\tilde{R}_{\beta,n}\left(  \mu_{0}=0,\widetilde{\sigma}_{\beta}^{2}\right)
=\frac{1}{n}\left(  2\beta+1\right)  ^{\frac{3}{2}}\frac{1}{\widetilde{\sigma
}_{\beta}^{2}}\left(  \sum_{i=1}^{n}X_{i}\exp\left(  -\frac{\beta}{2}%
\frac{X_{i}^{2}}{\widetilde{\sigma}_{\beta}^{2}}\right)  \right)  ^{2},
\]
where $\widetilde{\sigma}_{\beta}^{2}$ is the estimation of $\sigma^{2}$ under
the assumption that $\mu_{0}=0$. The scheme of contamination and outliers is
exactly the same as for Scenario 1, but the results shown in Scenario 2, by
Figures \ref{fig1c}-\ref{fig2c}, are very different. Surprisingly, for pure
data the ordinary Rao test is conservative, particularly in small samples, and the nominal levels are better approximated with
$\beta>0.2$ than for $\beta=0$, and under contamination the ordinary Rao test simply breaks down, while all the others reasonably hold their level; even the $\beta = 0.2$ test has an observed level less than 0.1 at a sample size of $n = 50$  (Figure \ref{fig1c}). Power calculations for pure data indicate that the ordinary Rao test has high power except at very low sample sizes (a consequence of its conservative nature), but its power practically vanishes under contamination, where values of $\beta$ between 0.6 and 0.8 appear to have the best performance  (Figure \ref{fig2c}).%

%TCIMACRO{\TeXButton{B}{\begin{figure}[htbp]  \tabcolsep2.8pt  \centering} }%
%BeginExpansion
\begin{figure}[htbp]  \tabcolsep2.8pt  \centering
%EndExpansion
\begin{tabular}
[c]{l}%
%TCIMACRO{\FRAME{itbpF}{16.7207cm}{9.1204cm}{0cm}{}{}{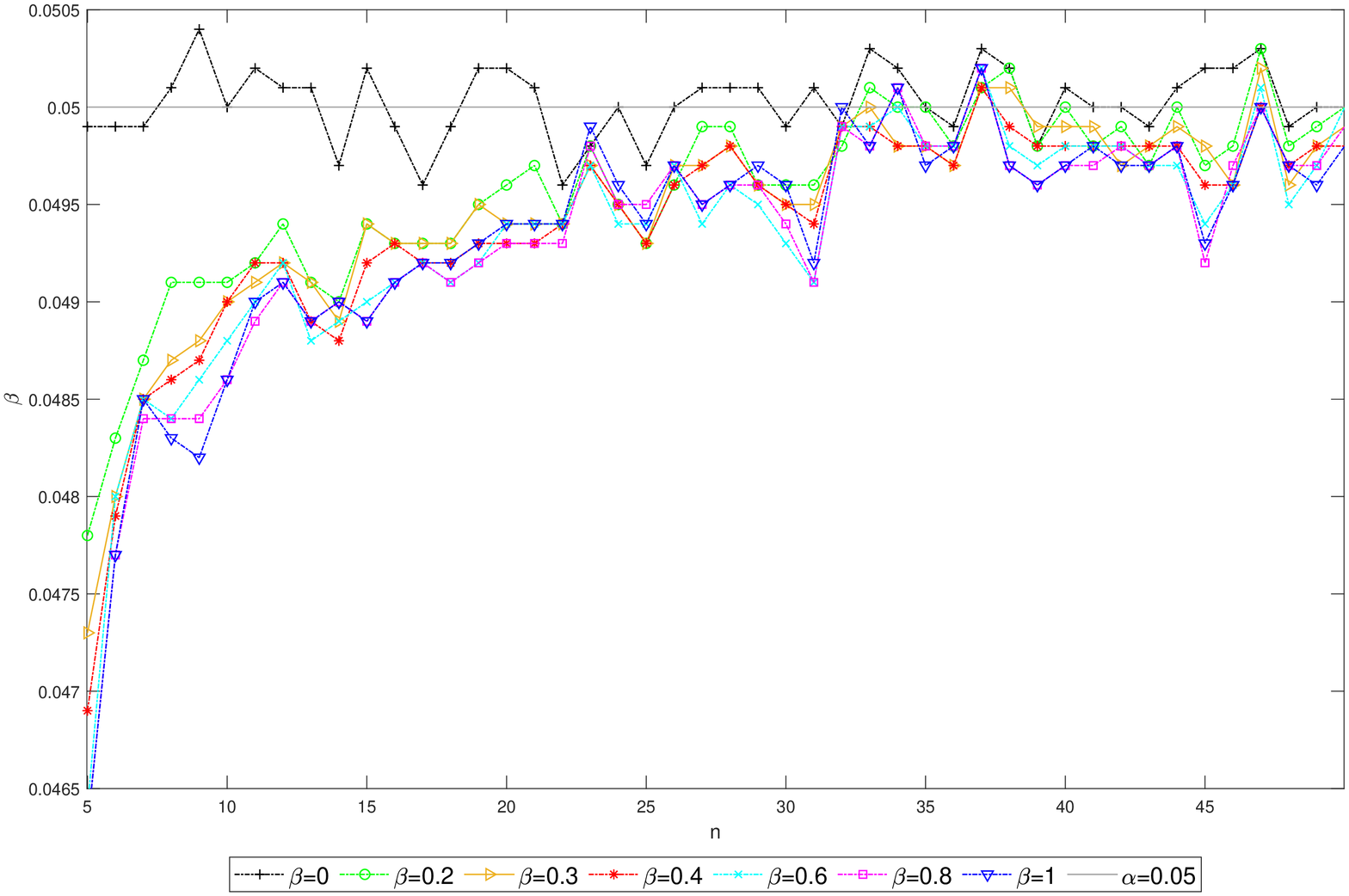}%
%{\special{ language "Scientific Word";  type "GRAPHIC";
%maintain-aspect-ratio TRUE;  display "USEDEF";  valid_file "F";
%width 16.7207cm;  height 9.1204cm;  depth 0cm;  original-width 14.2538in;
%original-height 8.3454in;  cropleft "0.0881";  croptop "0.9482";
%cropright "0.9243";  cropbottom "0.1721";
%filename 'figures/fig1simple.eps';file-properties "XNPEU";}} }%
%BeginExpansion
\raisebox{-0cm}{\includegraphics[
trim=1.255760in 1.436243in 1.079013in 0.432292in,
height=9.1204cm,
width=16.7207cm
]%
{fig1simple.eps}%
}
%EndExpansion
\\%
%TCIMACRO{\FRAME{itbpF}{16.7207cm}{10.3768cm}{0cm}{}{}{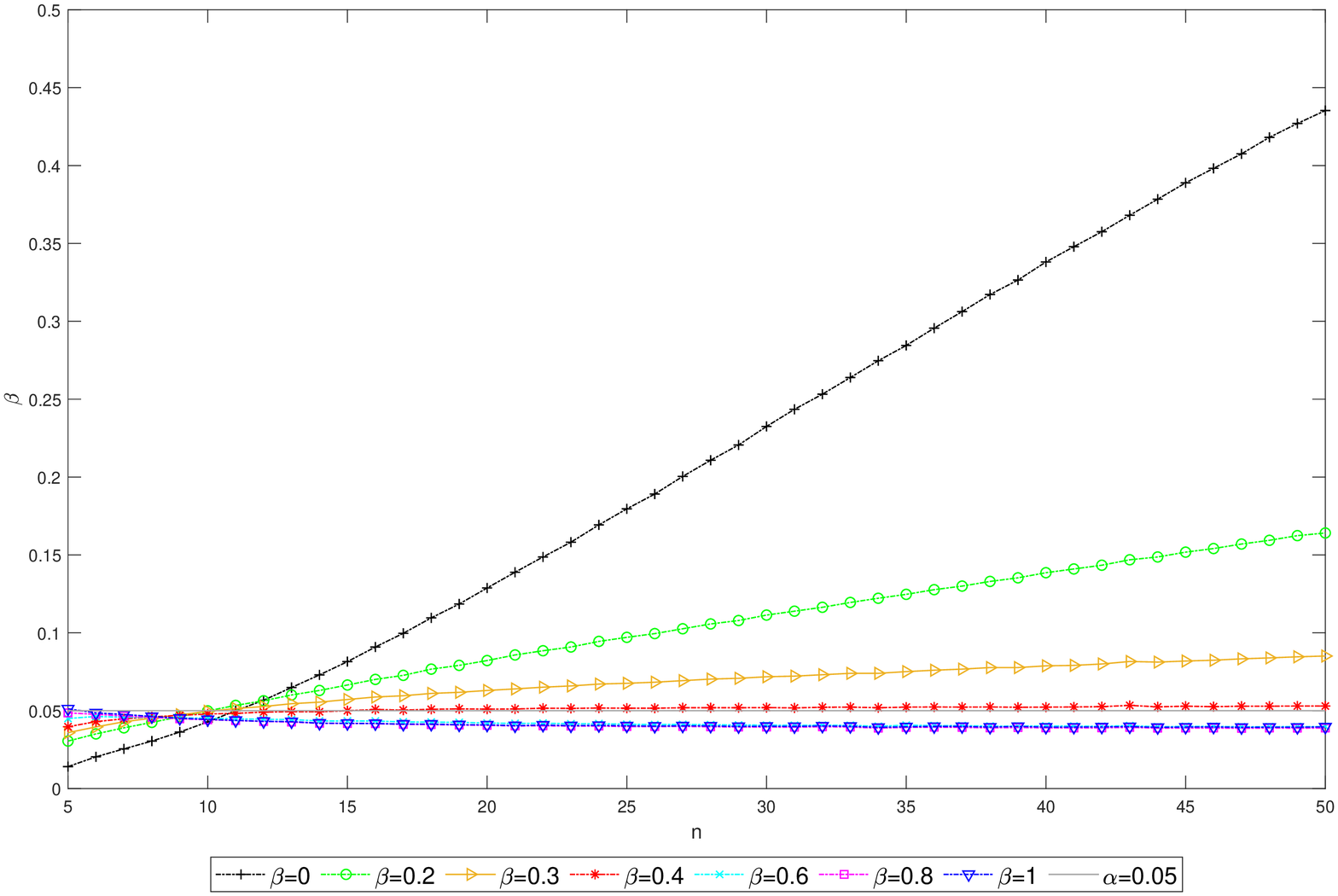}%
%{\special{ language "Scientific Word";  type "GRAPHIC";
%maintain-aspect-ratio TRUE;  display "USEDEF";  valid_file "F";
%width 16.7207cm;  height 10.3768cm;  depth 0cm;  original-width 14.2538in;
%original-height 8.3454in;  cropleft "0.0882";  croptop "0.9483";
%cropright "0.9244";  cropbottom "0.0645";
%filename 'figures/fig2simple.eps';file-properties "XNPEU";}} }%
%BeginExpansion
\raisebox{-0cm}{\includegraphics[
trim=1.257185in 0.538278in 1.077588in 0.431457in,
height=10.3768cm,
width=16.7207cm
]%
{fig2simple.eps}%
}
%EndExpansion
\end{tabular}
\caption{Simulated significance levels in simple null hypothesis for pure data ($\epsilon=0$, top) and contaminated data ($\epsilon=0.1$, bottom)
		\label{fig1s}}
%TCIMACRO{\TeXButton{E}{\end{figure}}}%
%BeginExpansion
\end{figure}%
%EndExpansion
%

%TCIMACRO{\TeXButton{B}{\begin{figure}[htbp]  \tabcolsep2.8pt  \centering} }%
%BeginExpansion
\begin{figure}[htbp]  \tabcolsep2.8pt  \centering
%EndExpansion
\begin{tabular}
[c]{l}%
%TCIMACRO{\FRAME{itbpF}{16.7207cm}{9.1204cm}{0cm}{}{}{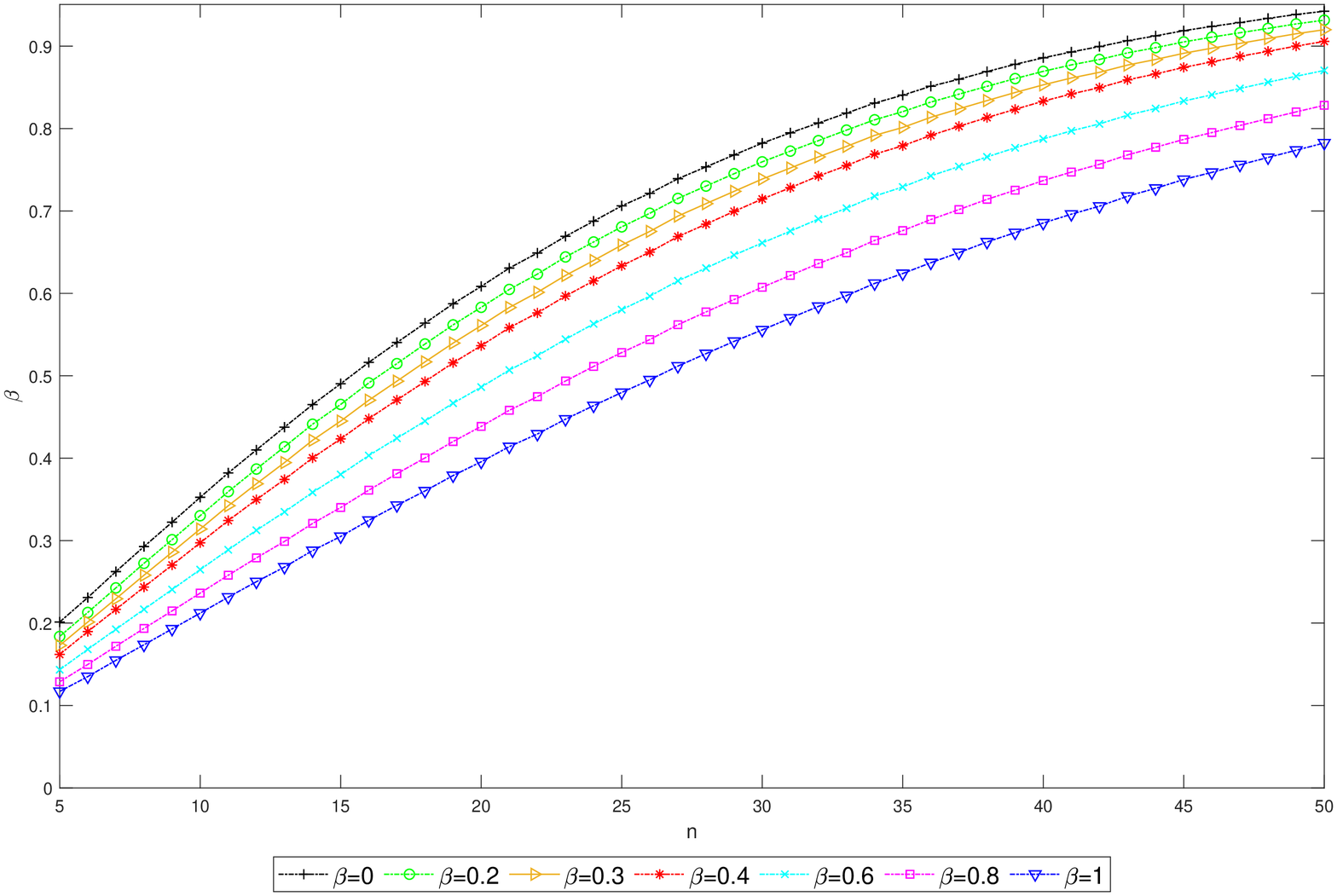}%
%{\special{ language "Scientific Word";  type "GRAPHIC";
%maintain-aspect-ratio TRUE;  display "USEDEF";  valid_file "F";
%width 16.7207cm;  height 9.1204cm;  depth 0cm;  original-width 14.2538in;
%original-height 8.3454in;  cropleft "0.0881";  croptop "0.9482";
%cropright "0.9243";  cropbottom "0.1721";
%filename 'figures/fig3simple.eps';file-properties "XNPEU";}} }%
%BeginExpansion
\raisebox{-0cm}{\includegraphics[
trim=1.255760in 1.436243in 1.079013in 0.432292in,
height=9.1204cm,
width=16.7207cm
]%
{fig3simple.eps}%
}
%EndExpansion
\\%
%TCIMACRO{\FRAME{itbpF}{16.7185cm}{10.3768cm}{0cm}{}{}{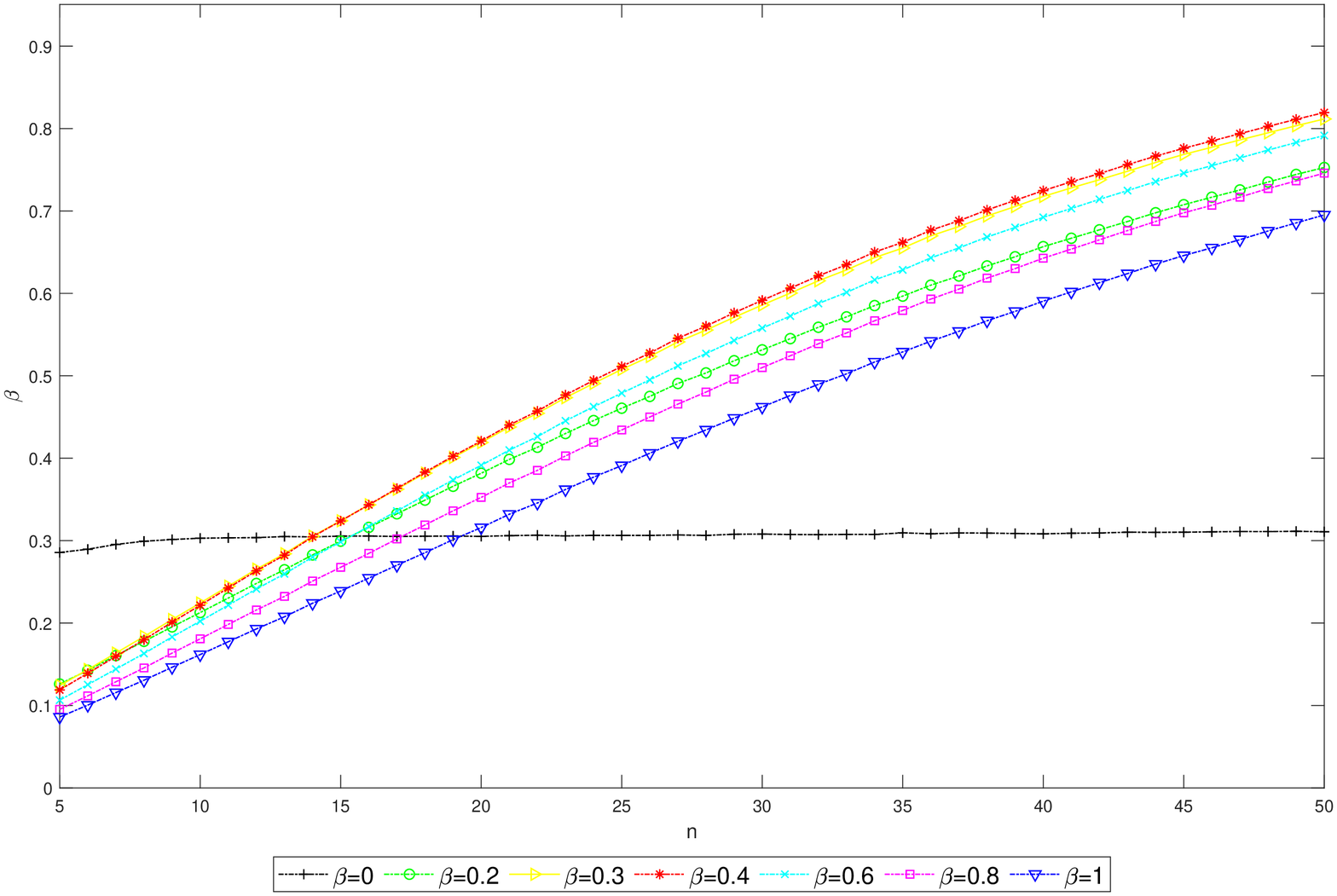}%
%{\special{ language "Scientific Word";  type "GRAPHIC";
%maintain-aspect-ratio TRUE;  display "USEDEF";  valid_file "F";
%width 16.7185cm;  height 10.3768cm;  depth 0cm;  original-width 14.2538in;
%original-height 8.3454in;  cropleft "0.0882";  croptop "0.9483";
%cropright "0.9244";  cropbottom "0.0645";
%filename 'figures/fig4simple.eps';file-properties "XNPEU";}} }%
%BeginExpansion
\raisebox{-0cm}{\includegraphics[
trim=1.257185in 0.538278in 1.077588in 0.431457in,
height=10.3768cm,
width=16.7185cm
]%
{fig4simple.eps}%
}
%EndExpansion
\end{tabular}
\caption{Simulated powers in simple null hypothesis for pure data ($\epsilon=1$, top) and contaminated data ($\epsilon=0.9$, bottom)
		\label{fig2s}}
%TCIMACRO{\TeXButton{E}{\end{figure}}}%
%BeginExpansion
\end{figure}%
%EndExpansion
%

%TCIMACRO{\TeXButton{B}{\begin{figure}[htbp]  \tabcolsep2.8pt  \centering} }%
%BeginExpansion
\begin{figure}[htbp]  \tabcolsep2.8pt  \centering
%EndExpansion
\begin{tabular}
[c]{l}%
%TCIMACRO{\FRAME{itbpF}{16.7207cm}{9.1204cm}{0cm}{}{}{fig1composite.eps}%
%{\special{ language "Scientific Word";  type "GRAPHIC";
%maintain-aspect-ratio TRUE;  display "USEDEF";  valid_file "F";
%width 16.7207cm;  height 9.1204cm;  depth 0cm;  original-width 14.2538in;
%original-height 8.3454in;  cropleft "0.0881";  croptop "0.9482";
%cropright "0.9243";  cropbottom "0.1721";
%filename 'figures/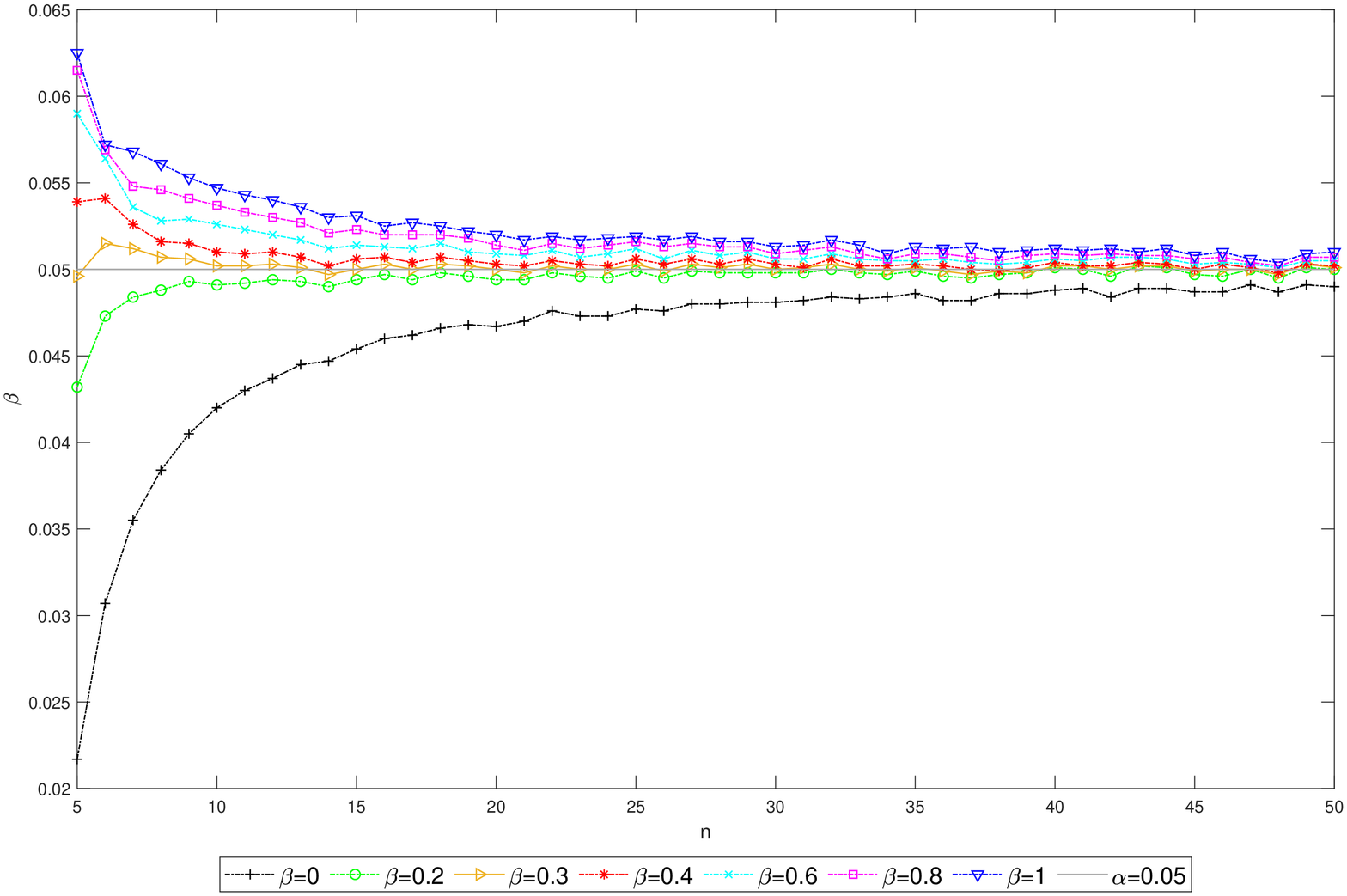';file-properties "XNPEU";}} }%
%BeginExpansion
\raisebox{-0cm}{\includegraphics[
trim=1.255760in 1.436243in 1.079013in 0.432292in,
height=9.1204cm,
width=16.7207cm
]%
{fig1Composite.eps}%
}
%EndExpansion
\\%
%TCIMACRO{\FRAME{itbpF}{16.7207cm}{10.3768cm}{0cm}{}{}{fig2composite.eps}%
%{\special{ language "Scientific Word";  type "GRAPHIC";
%maintain-aspect-ratio TRUE;  display "USEDEF";  valid_file "F";
%width 16.7207cm;  height 10.3768cm;  depth 0cm;  original-width 14.2538in;
%original-height 8.3454in;  cropleft "0.0882";  croptop "0.9483";
%cropright "0.9244";  cropbottom "0.0645";
%filename 'figures/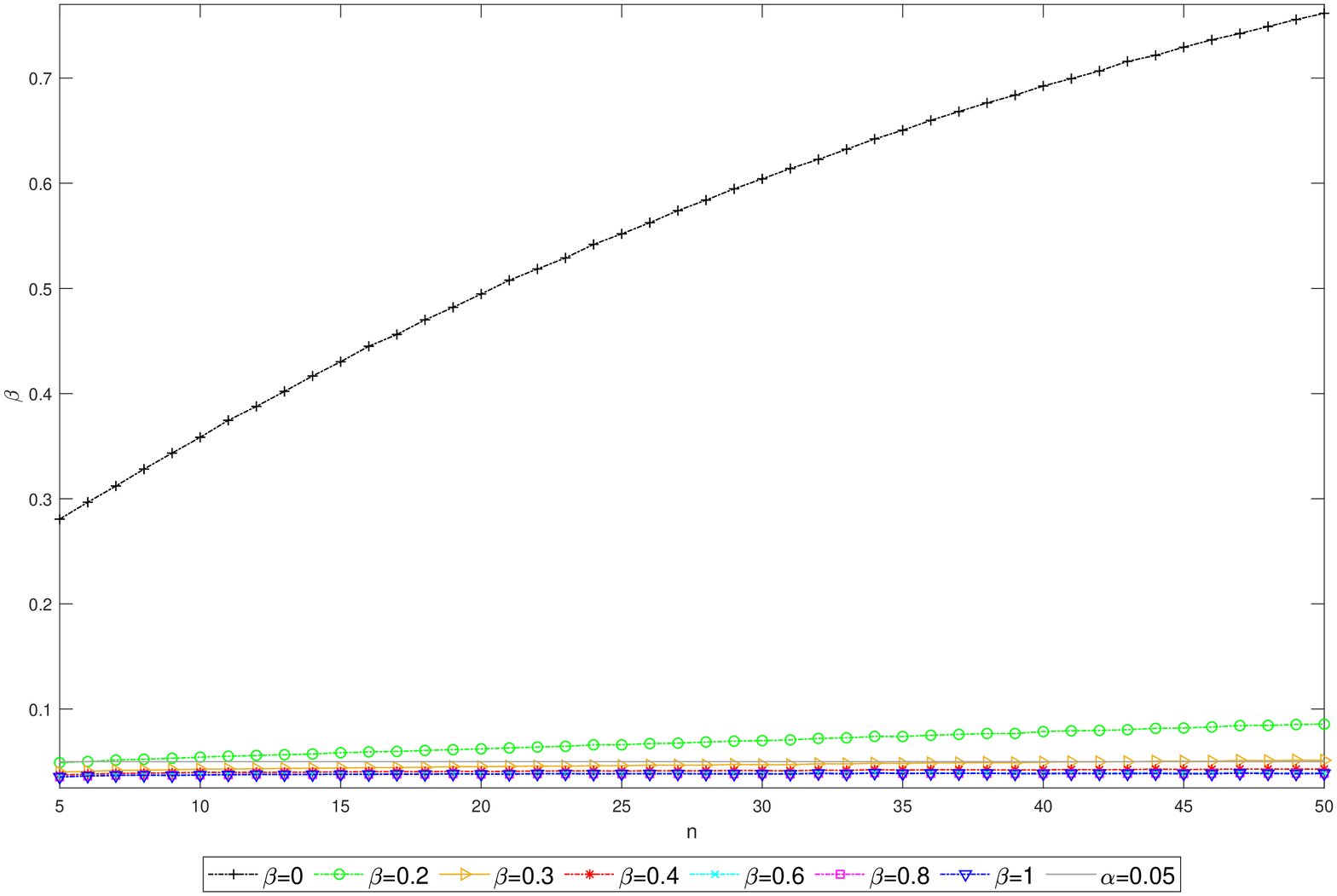';file-properties "XNPEU";}} }%
%BeginExpansion
\raisebox{-0cm}{\includegraphics[
trim=1.257185in 0.538278in 1.077588in 0.431457in,
height=10.3768cm,
width=16.7207cm
]%
{fig2Composite.eps}%
}
%EndExpansion
\end{tabular}
\caption{Simulated significance levels in composite null hypothesis for pure data ($\epsilon=0$, top) and contaminated data ($\epsilon=0.1$, bottom)
		\label{fig1c}}
%TCIMACRO{\TeXButton{E}{\end{figure}}}%
%BeginExpansion
\end{figure}%
%EndExpansion
%

%TCIMACRO{\TeXButton{B}{\begin{figure}[htbp]  \tabcolsep2.8pt  \centering} }%
%BeginExpansion
\begin{figure}[htbp]  \tabcolsep2.8pt  \centering
%EndExpansion
\begin{tabular}
[c]{l}%
%TCIMACRO{\FRAME{itbpF}{16.7207cm}{9.1204cm}{0cm}{}{}{fig3composite.eps}%
%{\special{ language "Scientific Word";  type "GRAPHIC";
%maintain-aspect-ratio TRUE;  display "USEDEF";  valid_file "F";
%width 16.7207cm;  height 9.1204cm;  depth 0cm;  original-width 14.2538in;
%original-height 8.3454in;  cropleft "0.0881";  croptop "0.9482";
%cropright "0.9243";  cropbottom "0.1721";
%filename 'figures/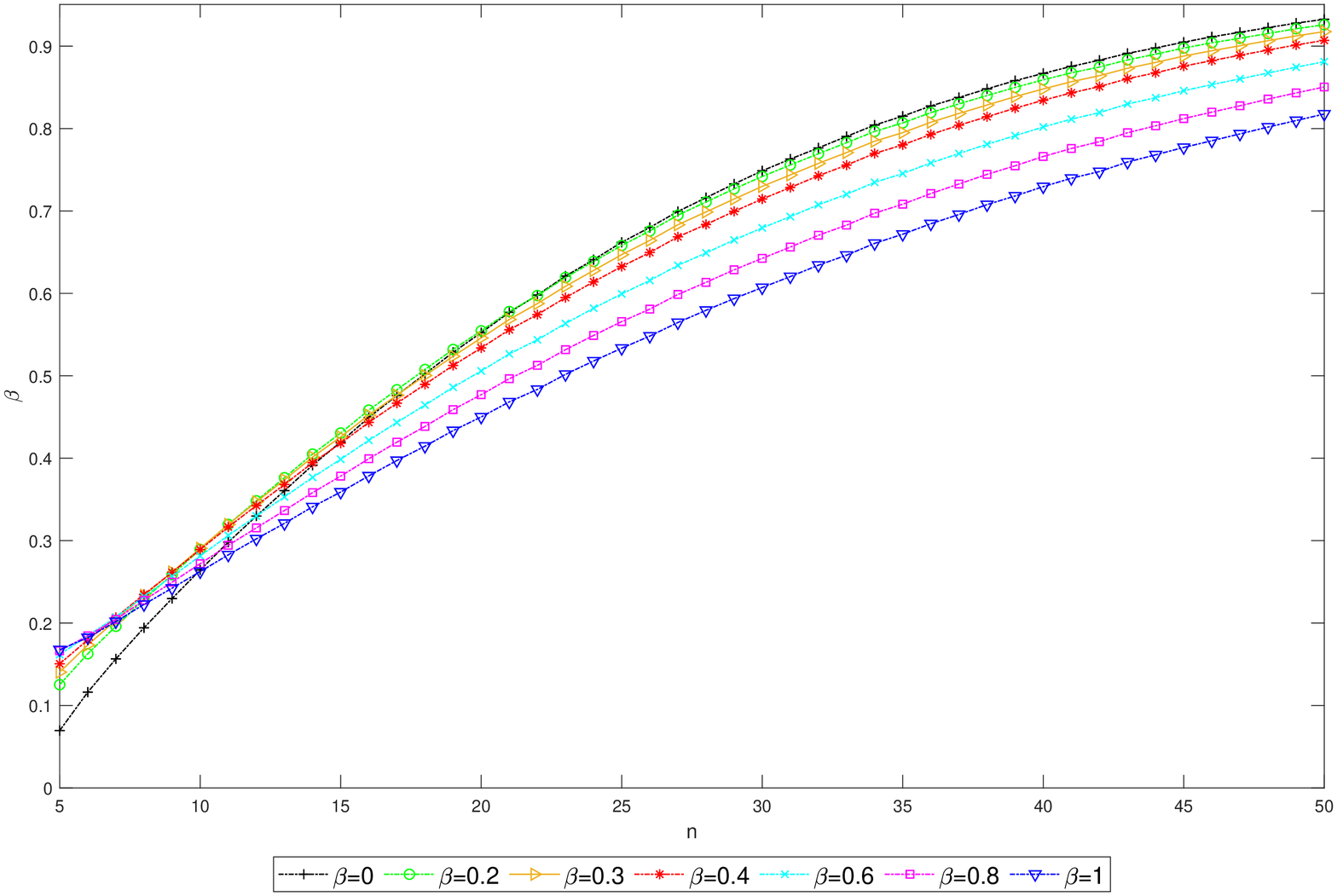';file-properties "XNPEU";}} }%
%BeginExpansion
\raisebox{-0cm}{\includegraphics[
trim=1.255760in 1.436243in 1.079013in 0.432292in,
height=9.1204cm,
width=16.7207cm
]%
{fig3Composite.eps}%
}
%EndExpansion
\\%
%TCIMACRO{\FRAME{itbpF}{16.7207cm}{10.3768cm}{0cm}{}{}{fig4composite.eps}%
%{\special{ language "Scientific Word";  type "GRAPHIC";
%maintain-aspect-ratio TRUE;  display "USEDEF";  valid_file "F";
%width 16.7207cm;  height 10.3768cm;  depth 0cm;  original-width 14.2538in;
%original-height 8.3454in;  cropleft "0.0882";  croptop "0.9483";
%cropright "0.9244";  cropbottom "0.0645";
%filename 'figures/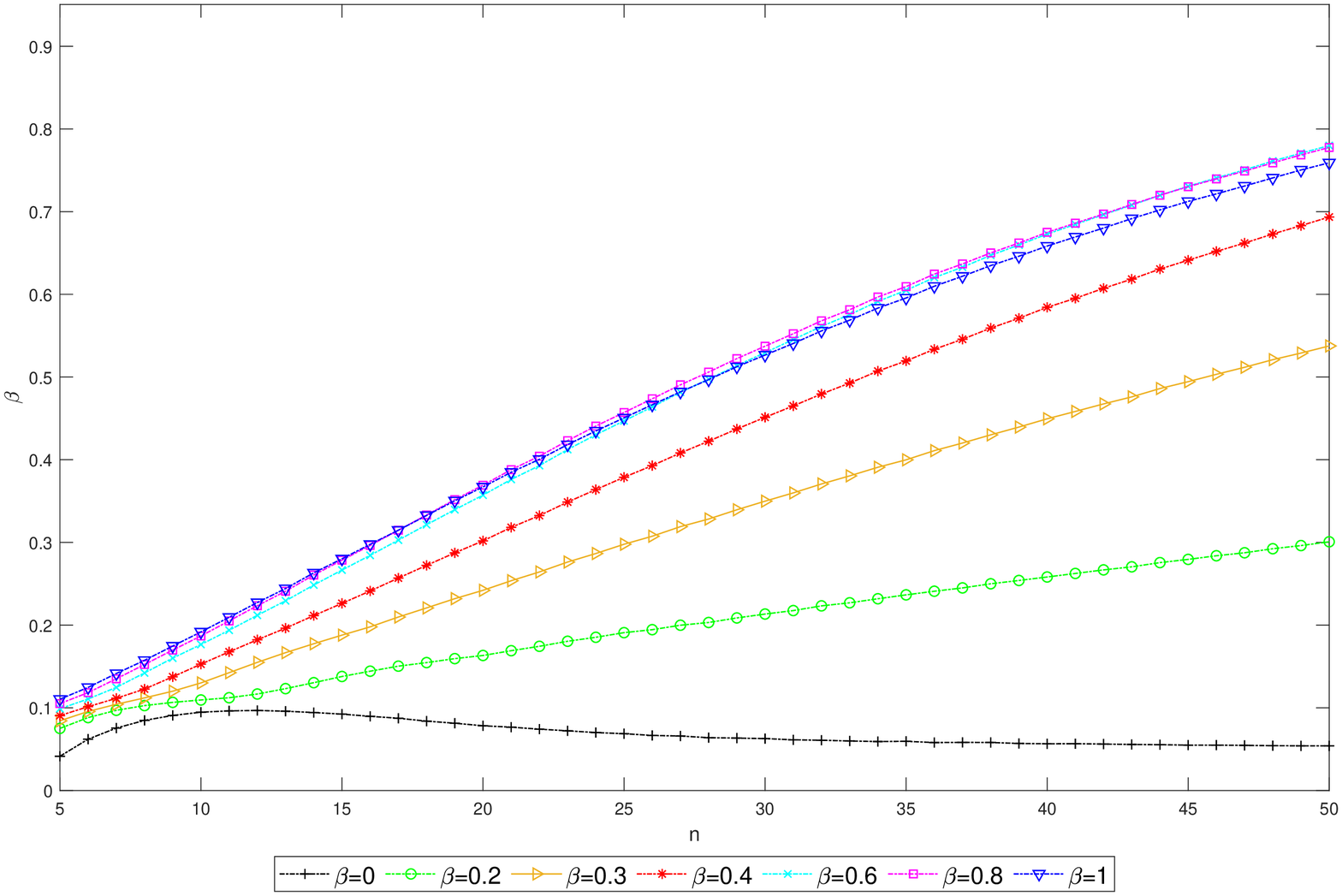';file-properties "XNPEU";}} }%
%BeginExpansion
\raisebox{-0cm}{\includegraphics[
trim=1.257185in 0.538278in 1.077588in 0.431457in,
height=10.3768cm,
width=16.7207cm
]%
{fig4Composite.eps}%
}
%EndExpansion
\end{tabular}
\caption{Simulated powers in composite null hypothesis for pure data ($\epsilon=1$, top) and contaminated data ($\epsilon=0.9$, bottom)
		\label{fig2c}}
%TCIMACRO{\TeXButton{E}{\end{figure}}}%
%BeginExpansion
\end{figure}%
%EndExpansion

\newpage
\section{Concluding Remarks}

The traditional Rao test is a very useful tool in applied statistics and econometrics. However, as the default application of tests of this type are based the maximum likelihood estimator and the likelihood score function, lack of robustness and sensitivity to outliers is an inherent problem in the traditional Rao test. In this paper we have proposed a c;lass of Rao type tests, including the ordinary Rao test, many members of which have remarkable robustness properties often with very little loss in power relative to the ordinary Rao test. In i.i.d. data the application of the test is simple, and we believe that the theoretical results and the numerical support demonstrate that the class of tests help to cover the primary deficiency of the traditional Rao test at a very little cost.

\appendix

%%%%%%%%%%%%%%%%%%%%%%%%%%

\end{document}